\newcommand{\bbR}{\mathbb R}
\newtheorem{theorem}{Theorem}[section]
\newtheorem{lem}{Lemma}[section]
\newtheorem{rem}{Remark}[section]
\newtheorem{prop}{Proposition}[section]
\newcounter{hypA}
\newenvironment{hypA}{\refstepcounter{hypA}\begin{itemize}
  \item[({\bf A\arabic{hypA}})]}{\end{itemize}}
\newcommand{\Exp}{\mathbb{E}}
\begin{document}

\title{Error Bounds and Normalizing Constants for Sequential Monte Carlo
in High Dimensions} 

\author[1]{Alexandros Beskos}
\author[2]{Dan Crisan}
\author[3]{Ajay Jasra}
\author[4]{Nick Whiteley}
\affil[1]{\footnotesize{Department of Statistical Science, University College London, London WC1E 6BT, UK.}\newline
E-Mail:\,\emph{\texttt{alex@stats.ucl.ac.uk}}}
\affil[2]{\footnotesize{Department of Mathematics, Imperial College London, London, SW7 2AZ, UK.}\newline
E-Mail:\,\emph{\texttt{d.crisan@ic.ac.uk}}}
\affil[3]{\footnotesize{Department of Statistics \& Applied Probability, National University of Singapore, Singapore, 117546, Sg.}\newline
E-Mail:\,\emph{\texttt{staja@nus.edu.sg}}}
\affil[4]{\footnotesize{Department of Mathematics, University of Bristol, Bristol, BS8 1TW, UK.}\newline
E-Mail:\, \ \emph{\texttt{nick.whiteley@bristol.ac.uk}}}
\date{}
\maketitle

\begin{abstract}
In a recent paper \cite{beskos},  the Sequential Monte Carlo (SMC)  sampler introduced in  \cite{delm:06,jarzynski,neal:01}
has been shown to be asymptotically stable in the dimension of the state space $d$ at a cost that is only polynomial in $d$, when $N$ the number of Monte Carlo samples, is fixed.  More precisely, it has been established that the effective sample size (ESS) of the ensuing (approximate) sample and the Monte Carlo
error of fixed dimensional marginals will converge as $d$ grows, with a computational cost of $\mathcal{O}(Nd^2)$. In the present work, further results on SMC methods in high dimensions are provided as $d\to\infty$ and with $N$ fixed. We deduce an explicit bound on the Monte-Carlo error for estimates derived using the  SMC
sampler and the exact asymptotic relative $\mathbb{L}_2$-error of the estimate of the normalizing constant.  We also establish marginal propagation of chaos properties of the algorithm. The accuracy in high-dimensions of some approximate SMC-based filtering schemes is also discussed.
\\
\emph{Key words}: Sequential Monte Carlo, High Dimensions, Propagation of Chaos, Normalizing Constants, Filtering.\\
\end{abstract}

\section{Introduction}

High-dimensional probability distributions are increasingly of interest in a wide variety of applications. In particular, one is concerned with the estimation of expectations with respect to such distributions. Due to the high-dimensional nature of the probability laws, such integrations cannot typically be carried out analytically; thus practitioners will often resort to Monte Carlo methods.

An important Monte Carlo methodology is Sequential Monte Carlo samplers (see \cite{delm:06,neal:01}). This is a technique designed to approximate a sequence of densities defined on a common state-space. The method works by simulating a collection of $N\geq 1$ weighted samples
(termed particles) in parallel. These particles are propagated forward in time via Markov chain Monte Carlo (MCMC), using importance sampling (IS) to correct, via the weights, for the discrepancy between  target distributions and proposals. Due to the weight degeneracy
problem (see e.g.~\cite{doucet}), resampling is adopted, sometimes performed when the ESS drops below some threshold.
Resampling generates samples with replacement from the current collection of particles using the importance weights, resetting  un-normalized weights to 1 for each sample.
The ESS is a number between 1 and $N$ and indicates, approximately, the number of useful samples. For SMC samplers one is typically interested in sampling
a single target density on $\mathbb{R}^d$, but due to some complexity, a collection of artificial densities are introduced, starting at some easy to sample distribution and creating a smooth path to the final target.

Recently (\cite{bengtsson,bickel,snyder}) it was shown that some IS methods will not stabilize, in an appropriate sense, as the  dimension of target densities in a particular class grows, unless $N$ grows exponentially fast with dimension $d$. In later work, \cite{beskos} have established that the SMC sampler technique can be stabilized at a cost that is only polynomial in $d$. It was shown in \cite{beskos} that ESS and the Monte Carlo
error of fixed dimensional marginals stabilize as $d$ grows, with a cost of $\mathcal{O}(Nd^2)$. This corresponds to introducing $d$ artificial densities between an initial distribution and the one of interest.
The case of fixed $d$ also has been analyzed recently \cite{whiteley}.

The objective of this article is to provide
a more complete understanding of SMC algorithms in high dimensions, complementing and building upon the results of \cite{beskos}. A variety of results are presented, addressing some generic theoretical properties of the algorithms and some issues which arise from specific classes of application. 

\subsection{Problems Addressed}

The first issue investigated is the increase in error of estimating fixed-dimensional marginals using SMC samplers relative to i.i.d.~sampling. 
Considering the case when one resamples at the very final time-step we show that the $\mathbb{L}_2$-error increases only by a factor of $\mathcal{O}(N^{-1})$ uniformly in  $d$. Resampling at the very final time-step is often of importance in real applications; see e.g.~\cite{ddj_abc}.

The second issue  we address is
the estimation of ratios of normalizing constants approximated using SMC samplers. This is critical in many disciplines, including Bayesian or classical statistics, physics and rare events. In particular, for Bayesian model comparison,
Bayes factors are associated to  statistical models in high-dimensional spaces, and these Bayes factors need to be estimated by numerical techniques such as SMC.
The normalizing constant in SMC methods has been well-studied: see \cite{cerou1,delmoral}.
Among the interesting results that have been proved in the literature is the unbiased property. However, to our knowledge, no results have been proved in the context of asymptotics in dimension $d$. In this article we provide an expression for fixed $N$ of the relative $\mathbb{L}_2$-error of the SMC estimate of a ratio of normalizing constants. The algorithm can include resampling, whereby the expression differs. The rate of convergence is $\mathcal{O}(N^{-1})$, when the computational cost is $\mathcal{O}(Nd^2)$.
The results also allow us compare between different sequences of densities used within the SMC method. 

The third issue  we investigate is asymptotic independence properties of the particles when one resamples: propagation of chaos - see \cite[Chapter 8]{delmoral}. This issue has practical implications which we discuss below. 
It is shown that, in between any two resampling times, 
any fixed dimensional marginal distribution 
of any fixed block of $1\leq q\leq N$ particles
among the $N$ particles
are asymptotically independent with the correct marginal. 
This result is established as $d$ grows with $N$ fixed, whilst the classical results require $N$ to grow.
As in \cite{beskos,whiteley}, this establishes that the ergodicity of the Markov kernels used in the algorithm can provide stability of the algorithm, even in high dimensions if the number of artificial densities is scaled appropriately with $d$. 

The final issue we address is the problem of filtering (see Section \ref{sec:filtering} for a description). Ultimately, we do not provide any analysis of a practical SMC algorithm which stabilizes as $d$ increases. However, it is shown that when one inserts an SMC sampler in-between the arrival of each data-point and updates the entire collection of states, then the algorithm stabilizes as $d$ grows at a cost which is $\mathcal{O}(n^2Nd^2)$, $n$ being the time parameter. This is  of limited practical use, as the computational storage costs increase with $n$. Motivated by the SMC sampler results, we consider some strategies
which could be attempted to stabilize high-dimensional SMC filtering algorithms. In particular, we address two strategies which insert an SMC sampler in-between the arrival of each data-point. The first, which only updates the state at the current time-point, fails to stabilize as the dimension grows, unless the particles increase exponentially in the dimension. The second, which uses a marginal SMC approach (see e.g.~\cite{poyiadjis}) also exhibits the same properties.
At present we are not aware of any online (i.e.~one which has a fixed computational cost per time step) SMC algorithm which can provably stabilize with $d$ \emph{for any model}, unless $N$ is exponential in $d$. It is remarked, as noted in \cite{bickel}, that there exist statistical models for which SMC methods can work quite well in high-dimensions. In relation to this, we then investigate the SMC simulation of a filter based upon Approximate Bayesian Computation (ABC)~\cite{jasra}. The ABC approximation induces bias which cannot be removed in practice. We show here that the simulation error stabilizes as the dimension grows, but we argue that the bias is likely to explode as $d$ grows. This discussion is also relevent for the popular ensemble Kalman filter (EnKF) employed in high dimensional filtering problems in physical sciences (e.g.~\cite{evensen}).

The paper is structured as follows:  
In Section \ref{sec:smc_samplers} we describe the SMC sampler algorithm together with  our mathematical assumptions.
In Section \ref{sec:main_res} our main results are given. 
In addition, we introduce a general annealing scheme, coupled with a consideration
of stability results for data-point tempering \cite{chopin1}; this latter study connects with our discussion in Section \ref{sec:filtering} on filtering. Section \ref{sec:simos} considers the practical implications of our main results with numerical simulations.
The filtering problem is addressed in Section \ref{sec:filtering}.
We conclude with a summary in  Section \ref{sec:disc}. Most of the proofs of our results are given in the Appendix.

\subsection{Notation}

Let $(E,\mathscr{E})$ be a measurable space and $\mathscr{P}(E)$
the set of probability measures on it.
For $\mu$ a $\sigma-$finite measure on $(E,\mathscr{E})$ and $f$ a measurable function,
we set $\mu(f)=\int_E f(x)\mu(dx)$.
For $\mu\in\mathscr{P}(E)$ and $P$ a Markov kernel on
$(E,\mathscr{E})$, we use the integration notation 
$P(f)(x)=\int_E P(x,dy)f(y)$ and
$\mu P(f) =$ $\ \int_{E} \mu(dx) P(f)(x)$. In addition, $P^n(f)(x) := \int_{E^{n-1}}P(x,dx_1)P(x_1,dx_2)\times\cdots\times P(f)(x_{n-1})$. The total variation difference norm for $\mu,\lambda\in\mathscr{P}(E)$
is $\|\mu-\lambda\|_{tv}:=\sup_{A\in\mathscr{E}}|\mu(A)-\lambda(A)|$.
The class of bounded (resp.~continuous and bounded) measurable functions $f:E\rightarrow\mathbb{R}$ is written $\mathcal{B}_b(E)$ (resp.~$\mathcal{C}_b(E)$).
For $f\in\mathcal{B}_b(E)$, we write $\|f\|_{\infty}:=\sup_{x\in\mathbb{R}}
|f(x)|$.
We will denote the $\mathbb{L}_{\varrho}$-norm of random variables as $\|X\|_{\varrho} = 
\Exp^{1/\varrho}\,|X|^{\varrho}$ with 
$\varrho\ge 1$.  For a given vector $(x_1,\dots,x_p)$ and $1\leq q\leq s\leq p$ we denote by $x_{q:s}$
the sub-vector $(x_q,\dots,x_s)$. For a measure $\mu$ the $N$-fold product is written $\mu^{\otimes N}$. For any collection of functions $(f_k)_{k\geq 1}$, $f_k:E\rightarrow\mathbb{R}$, we write 
$f_1\otimes\cdots\otimes f_k:E^{k}\rightarrow\mathbb{R}$ for their tensor product.
Throughout $M$ is used to denote a constant whose meaning may change, depending upon the context; important dependencies are written as $M(\cdot)$.
In addition, all of our results hold on probability space $(\Omega,\mathscr{F},\mathbb{P})$,
with $\mathbb{E}$ denoting the expectation operator and $\mathbb{V}\textrm{ar}$ the variance.
Finally, $(\Rightarrow)$ denotes convergence in distribution.

\section{Framework}\label{sec:smc_samplers}


\subsection{Algorithm and Set-Up}

We consider the scenario when one wishes to sample from a target distribution with density $\Pi$ on $E^d$ 
($E\subseteq \mathbb{R}$)
with respect to Lebesgue measure,
known point-wise up to a normalizing constant.
In order to sample from $\Pi$, we introduce a sequence of `bridging' densities which start from an easy to sample target and  evolve toward $\Pi$. In particular, we will consider the densities:
\begin{equation}
\label{eq:aux}
\Pi_n(x) \propto \Pi(x)^{\phi_n}\ , \quad x\in E^d\ ,
\end{equation}
for
$
0\leq \phi_0<\cdots < \phi_{n-1}< \phi_n < \cdots<\phi_p=1. 
$
Below, we use the short-hand $\Gamma_n$ to denote un-normalized densities associated to $\Pi_n$.

One can sample from $\{\Pi_n\}$ using an SMC sampler
that targets the sequence of densities:
\begin{equation*}
\widetilde{\Pi}_n(x_{1:n}) = \Pi_n(x_n) \prod_{j=1}^{n-1} L_j(x_{j+1},x_j)
\end{equation*}
with domain $(E^d)^n$ of dimension that increases with $n=1,\ldots,p$; 
here, $\{L_n\}$ is a sequence of artificial backward Markov kernels that can, in principle, be arbitrarily selected (\cite{delm:06}).
Let $\{K_n\}$ be a sequence of Markov kernels of invariant density $\{\Pi_n\}$ and $\Upsilon$ a distribution; assuming the weights appearing in the statement of the algorithm are well-defined Radon Nikodym derivatives, the SMC algorithm 
we will ultimately explore is the one defined in Figure \ref{tab:SMC}. It is remarked that our analysis is not necessarily constrained to the case of resampling according to ESS. 
\begin{figure}[h]
\begin{flushleft}
\medskip
\hrule
\medskip
{\itshape
\begin{enumerate}
\item[\textit{0.}] Sample $X_0^1,\dots X_0^N$ i.i.d.\@ from $\Upsilon$ and compute the weights for each particle $i\in\{1,\dots,N\}$:
\begin{equation*}
w_{0:0}^{i} = \tfrac{\Gamma_0(x_0^i)}{\Upsilon(x_0^i)}\ .
\end{equation*}
Set $n=1$ and $l=0$.
\vspace{0.1cm}
\item[\textit{1.}]  If $n\leq p$, for each $i$
sample $X_n^i\mid X_{n-1}^{i}=x_{n-1}^i$ from $K_n(x_{n-1}^i,\,\cdot)$ and calculate the weights: 
\begin{equation*}
w_{l:n}^i  =
\tfrac{\Gamma_n(x_{n-1}^i)}{\Gamma_{n-1}(x_{n-1}^i)}\,w^{i}_{l:(n-1)}\ .
\end{equation*}
Calculate the Effective Sample Size (ESS):
\begin{equation}
\label{eq:ess_def}
\textrm{ESS}_{\,l:n}(N) = \tfrac{\left(\sum_{i=1}^N w_{l:n}^i\right)^2}{\sum_{i=1}^N (w_{l:n}^i)^2} \ . 
\end{equation}
If $\textrm{ESS}_{\,l:n}(N)<a$: \\ 
\hspace{0.3cm} resample particles according to their normalised 
weights 
\begin{equation}
\label{eq:normed}
\overline{w}_{l:n}^i = \tfrac{w_{l:n}^i}{\sum_{j=1}^{N}w_{l:n}^j}\ ; 
\end{equation}
\hspace{0.3cm} set $l=n$ and re-initialise the weights by setting $w_{l:n}^i\equiv 1$, $1\le i \le N$;\\
\hspace{0.3cm} let $\check{x}_{n}^{1},\ldots,\check{x}_n^{N}$ now denote the resampled particles.\\
Set $n=n+1$. \\
Return to the start of Step 1.
%
%
\end{enumerate} }
\medskip
\hrule
\medskip
\end{flushleft}
\vspace{-0.4cm}
\caption{The SMC samplers algorithm analyzed in this article.}
\label{tab:SMC}
\end{figure}

For simplicity, we will henceforth  assume  that $\Upsilon\equiv \Pi_0$. It should be noted that when $\Upsilon$ is different from
$\Pi_0$, one can modify the sequence of densities to  a bridging scheme which moves from $\Upsilon$ to~$\Pi$. However, in practice, one can make $\Pi_0$ as simple as possible so we do not consider this possibility; see
\cite{whiteley} for more discussion and analysis when $d$ is fixed (note that our results for SMC samplers will hold,
 with some modifications, also  in this scenario).
Note, that we only consider here the multinomial resampling method. 
%

We will investigate the stability of SMC estimates associated to the algorithm in Figure \ref{tab:SMC}.
To obtain analytical results we will need to simplify the structure of the algorithm. In particular, we will consider 
an i.i.d.~target: 
\begin{equation}
 \label{eq:target}
\Pi(x) = \prod_{j=1}^{d}\pi(x_j)\ ; \quad \pi(x_j) =  \exp\{g(x_j)\}\ , 
\end{equation}
with  $x_j\in E$,
for some $g:E\mapsto \bbR$. In such a case all bridging densities are also i.i.d.: 
\begin{equation*}
\Pi_{n}(x) \propto \prod_{j=1}^{d}\pi_{n}(x_j)\ ;\quad \pi_n(x_j) \propto \exp\{\phi_n\,g(x_j)\} \ . 
\end{equation*}
It is remarked that this assumption is made for mathematical convenience: see \cite{beskos} for a discussion on this. 
A further assumption that will facilitate the mathematical analysis is to apply independent kernels along the different co-ordinates. That is, we will assume:     
\begin{equation*}
K_{n}(x,dx') =  \prod_{j=1}^d k_{n}(x_j,dx_j') \ ,
\end{equation*}
where each transition kernel $k_n(\cdot, \cdot)$ preserves 
$\pi_n(x)$;
that is, $\pi_{n}k_{n}=\pi_{n}$.
We study the case when one selects cooling constants $\phi_n = \phi_{n}(d)$ and $p=p(d)$ as below: 
\begin{equation}
\label{eq:tune}
p = d\  ; \quad \phi_{n} (=\phi_{n,d})= \phi_0 + \tfrac{n(1-\phi_0)}{d}\ ,\quad  0\le n\le  d \ ,
\end{equation} 
with $0\le \phi_0<1$ given and  fixed with respect to $d$.  It is possible, with only notational changes, to consider (as in \cite{whiteley}) the case when the annealing sequence is derived via a more general non-decreasing Lipschitz function; see Section \ref{sec:annealing_seq}.
As in \cite{beskos}, it will be convenient to 
consider the continuum of invariant densities and kernels on the whole of the time interval 
$[\phi_0,1]$. So, we will set: 
\begin{equation*}
\pi_s(x) \propto \pi(x)^{s} = \exp\{s\, g(x)\}\ , \quad s \in [\phi_0,1] \ .
\end{equation*}
Similarly $k_s(x,dx')$ with $s\in (\phi_0,1]$ is the continuous-time version of the kernels $k_n(x,dx')$.
As in \cite{beskos}, the mapping $l_d(s)= \lfloor \tfrac{d\,(s-\phi_0)}{1-\phi_0}\rfloor$ is used to move between continuous and discrete time.

\subsection{Conditions}
\label{sect:conditions}
We state the conditions under which we will derive our results.
We will require that $E\subset \bbR$ with $E$ being \emph{compact}.
The conditions below correspond to a simplification of the weaker conditions in \cite{beskos}
under the scenario of the compact state space $E$ that we consider here. 
We note that imposing compactness has been done mainly to simplify proofs and keep them at a reasonable length. The numerical examples later on are executed on unbounded state spaces, and 
do not seem to invalidate our conjecture that several of the results in the sequel will also hold on 
 unbounded spaces under appropriate geometric ergodicity conditions, as it was the case for the stability results 
as $d\rightarrow \infty$ in 
\cite{beskos}.
We remark that all results of \cite{beskos}
also hold under the assumptions stated here.

\begin{hypA} 
\label{hyp:A}
{\em Stability of $\{k_{s}\}$ - Uniform Ergodicity.}\vspace{0.3cm} \\ 
There exists a constant $\theta\in(0,1)$ and some $\varsigma\in \mathscr{P}(E)$ such that for each 
$s\in(\phi_0,1]$ the state-space $E$ is $(1,\theta,\varsigma)$-small with respect to $k_s$.
\end{hypA}

\begin{hypA} 
\label{hyp:B}
{\em Perturbations of $\{k_{s}\}$.} \vspace{0.3cm} \\
There exists an $M<\infty$ such that 
for any $s,t\in(\phi_0,1]$ we have
$
\|k_s-k_t\|_{tv}  \leq M\,|s-t|\ .
$
\end{hypA}
Note that the statement that $E$ is $(1,\theta,\varsigma)$-small w.r.t.\@ to $k_s$ means that $E$ is a one-step small set for the Markov kernel, with minorizing distribution $\varsigma\in\mathscr{P}(E)$ and parameter $\theta\in(0,1)$ (i.e.~$k_s(x,A)\geq \theta\,\varsigma(A)$ for each $(x,A)\in E\times\mathscr{E}$). 

In the context of our analysis, we will consider an SMC algorithm that resamples at the \emph{deterministic} times $t_1(d),\dots,t_{m^*(d)}(d)\in[\phi_0,1]$ (i.e.\@ resamples after $n=l_d(t_{k}(d))$ steps for $k=1,2,\ldots, m^*(d)$)
such that $t_0(d)=\phi_0$ and $t_0(d)<t_1(d)<\cdots<t_{m^*(d)}(d)< t_{m^*(d)+1}(d)=1$, with $l_d(t_{m^*}(d))<d$.
We will also assume that as $d\rightarrow \infty$ we have that $m^*(d)\rightarrow m^*$ 
and $t_k(d)\rightarrow t_k$ for $t_k\in[\phi_0,1]$ for all relevant~$k$.
Such deterministic times are meant to mimic the behaviour of randomised ones
(i.e.~as for the case of the original algorithm in Figure \ref{tab:SMC}) and provide 
a mathematically convenient framework for understanding the impact of resampling on the properties of 
the algorithm. 
Examples of such times can be found in \cite{beskos,delmoral_resampling}; 
the results therein provide an approach for converting the results for deterministic times, to randomized ones.
In particular, they show that with a probability converging to 1 as $N\rightarrow\infty$ the randomized times
essentially  coincide with the deterministic ones.
We do not consider that here as it would follow a similar proof to \cite{beskos}, depending upon how the resampling times are defined.
(An alternative procedure of treating dynamic resampling times is to use the construction in \cite{arnaud}; this is not considered here.)
For simplicity, we will henceforth assume that $d$ is large enough so that $m^{*}(d)\equiv m^{*}$.

\subsection{Log-Weight-Asymptotics}\label{sec:log_weight}

Given the set-up \eqref{eq:tune}
and the resampling procedure at the deterministic times $t_1(d),\dots,t_{m^{*}}(d)\in[\phi_0,1]$, and due to the i.i.d.\@ structure described above, we have the following expression for the particle weights:
\begin{equation*}
\log(w_{l_d(t_{k-1}(d)):l_d(t_{k(d)})}^i) =  \tfrac{1}{d}\sum_{j=1}^{d}\bar{G}_{k,j}^i  \ ,
\end{equation*}
where
$\bar{G}_{k,j}^i = (1-\phi_0)\sum_{n=l_d(t_{k-1}(d))}^{l_d(t_k(d))-1}g(X_{n,j}^i)$
for $1\le i \le N$. 
The work in \cite{beskos} illustrates stability of the normalised weights as $d\rightarrow\infty$. Define 
the standardised log-weights:
\begin{equation}
\label{eq:defineGG}
G_{k,j}^i = (1-\phi_0)\sum_{n=l_{d}(t_{k-1}(d))}^{l_{d}(t_{k}(d))-1}\,\big(\,g(X_{n,j}^i)-
\mathbb{E}_{ \pi_{t_{k-1}(d)}}[\,g(X_{n,j}^i)\,]\,\big)\ .
\end{equation}
The notation 
$\Exp_{\pi_{t_{k-1}(d)}}[\,g(X_{n,j}^{i})\,]$ refers to an expectation under the initial dynamics
$X_{l_d(t_{k-1}(d)),j}^i\sim \pi_{t_{k-1}(d)}$; after that,
$X_{n,j}^{i}$ will evolve according to the Markov transitions $k_{n}$. 
We also use the notation $\Exp_{\pi^{\otimes Nd}_{t_{k-1}(d)}}[\,\cdot\,]$ when imposing similar initial dynamics, but now independently over all co-ordinates and particles; such dynamics differ of course from the 
actual particle dynamics of the SMC algorithm.
In what follows, we use the Poisson equation:
\begin{equation*}
g(x) -  \pi_u(g) =  \widehat{g}_u(x) - k_u(\widehat{g}_u)(x) \ 
\end{equation*}
and in particular the variances:
\begin{equation}
\label{eq:varia}
\sigma^2_{s:t}= 
(1-\phi_0)\int_{s}^{t}
\pi_u(\widehat{g}_u^2-k_u(\widehat{g}_u)^2)du\ , \quad \phi_0\leq s<t\leq 1\ .
\end{equation}
The following weak limit can be derived from the proof of Theorem 4.1 of \cite{beskos}.

\begin{rem}[Log-Weight-Asymptotics]
\label{th:CLT1}

Assume (A\ref{hyp:A}-\ref{hyp:B}) and $g\in\mathcal{B}_b(E)$. For any $N\geq 1$  we have:
\begin{equation*}
\Big( \tfrac{1}{d}\,\sum_{j=1}^{d} G_{k,j}^{i}\, \Big)_{i=1}^{N} \Rightarrow (\,Z^i\,)_{i=1}^{N}\ ,
\end{equation*}
where the $Z^i$'s are  i.i.d.\@ copies from $N(0,\sigma^{2}_{t_{k-1}:t_k})$.
\end{rem}
The result illustrates that the consideration of $\mathcal{O}(d)$ Markov chain steps between resampling times 
stabilise the particle standardised log-weights as $d\rightarrow \infty$.

\section{Main Results}
\label{sec:main_res}

We now present the main results of the article.

\subsection{Asymptotic Results as $d\rightarrow \infty$}

\subsubsection{$\mathbb{L}_2-$Error}

The first result of the paper pertains to the Monte-Carlo error from estimates derived via the SMC method.
We will consider mean squared errors and obtain $\mathbb{L}_2$-bounds with resampling carried out also `at the end', that is when one resamples also at time $t=1$.
Below, recall that the $\check{X}$-notation is for resampled particles.
Resampling at time $t=1$ is required when one wishes to obtain un-weighted samples. We have the following result, with proof in Appendix \ref{app:l2proof}.

\begin{theorem}\label{theo:l2resampling}
Assume (A\ref{hyp:A}-\ref{hyp:B}) and $g\in\mathcal{B}_b(E)$. Then for any $N\geq 1$, $\varphi\in\mathcal{C}_b(E)$ we have:
\begin{equation*}
\Big\|\,\Big(\,\tfrac{1}{N}\sum_{i=1}^N\big[\,\varphi(\check{X}_{d,1}^i)-\pi(\varphi)\,\big]\,\Big)\,\Big\|^2_2 \leq 
\mathbb{V}\textrm{\emph{ar}}_{\pi}[\varphi]\,\tfrac{1}{N}\,\big(\,
1 + M(\sigma^2_{t_{m^*-1}:1})\,
\big)
\end{equation*}
for
\begin{equation*}
M(\sigma^2_{t_{m^*-1}:1}) = e^{ \sigma^2_{t_{m^*-1}:1}}
+M\,e^{17\sigma^2_{t_{m^*-1}:1}}\,\tfrac{1}{N^{1/6}} 
\end{equation*}
with $M<\infty$ independent of $N$ and $\sigma^2_{t_{m^*-1}:1}$.
\end{theorem}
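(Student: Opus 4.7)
The plan is to exploit the conditional structure of the final multinomial resampling step at time $n=d$. Let $\mathcal{F}$ denote the $\sigma$-field generated by all algorithmic randomness strictly before this final resampling, and let $\mu_N = \sum_{i=1}^N \bar{w}_{l_d(t_{m^*-1}):d}^{\,i}\,\delta_{X_d^i}$ denote the weighted empirical measure just before it. Conditionally on $\mathcal{F}$, the draws $\check{X}_{d}^1,\ldots,\check{X}_d^N$ are i.i.d.\@ from $\mu_N$, so the standard conditional-variance identity yields
\begin{equation*}
\Big\|\tfrac{1}{N}\sum_{i=1}^N\big[\varphi(\check{X}_{d,1}^i) - \pi(\varphi)\big]\Big\|_2^2 = \tfrac{1}{N}\,\Exp\!\big[\mu_N(\varphi^{2})-\mu_N(\varphi)^{2}\big] + \big\|\mu_N(\varphi) - \pi(\varphi)\big\|_2^2.
\end{equation*}
Using the pointwise inequality $\mu_N(\varphi^{2})-\mu_N(\varphi)^{2}\le \mu_N\big((\varphi-\pi(\varphi))^2\big)$, the first summand reduces to controlling the mean of a self-normalised importance sampling (SNIS) estimator of $\mathbb{V}\textrm{ar}_\pi[\varphi]$, while the second is the SNIS $\mathbb{L}_2$-error itself.

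The next step is to bound both SNIS quantities by transferring the asymptotic Gaussianity of the log-weights from Remark~3.1. Since weights were reset at time $t_{m^*-1}(d)$, the unnormalised weight $W^i = w^{\,i}_{l_d(t_{m^*-1}):d}$ of particle $i$ is governed (after recentering by the deterministic $\pi_{t_{m^*-1}}$-mean of $g$) by $\tfrac{1}{d}\sum_j G_{m^*,j}^i \Rightarrow N(0,\sigma^2_{t_{m^*-1}:1})$. A change of measure to the product target gives the key intermediate bound
\begin{equation*}
\big\|\mu_N(\varphi) - \pi(\varphi)\big\|_2^2 \;\le\; \tfrac{1}{N}\,\Exp_{\pi^{\otimes Nd}}\!\big[\,W\,(\varphi-\pi(\varphi))^2\,\big] + \textrm{(lower order)}.
\end{equation*}
Because the Markov kernels touch all $d$ co-ordinates while $\varphi$ depends only on co-ordinate $1$, the weight $W$ and $\varphi(X_{d,1}^i)$ become asymptotically independent as $d\to\infty$ (the propagation-of-chaos mechanism of \cite{beskos}), so the right-hand side is to leading order $\Exp_\pi[W]\,\mathbb{V}\textrm{ar}_\pi[\varphi]/N$. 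For normalised log-normal weights with log-variance $\sigma^2$ one has $\Exp_\pi[W]=\Exp[W^2]=e^{\sigma^2}$, which, combined with the parallel analysis showing that $\Exp\!\big[\mu_N((\varphi-\pi(\varphi))^2)\big]\to \mathbb{V}\textrm{ar}_\pi[\varphi]$, recovers the advertised leading term $(1+e^{\sigma^2_{t_{m^*-1}:1}})\,\mathbb{V}\textrm{ar}_\pi[\varphi]/N$.

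The principal obstacle is upgrading the weak-convergence statement of Remark~3.1 into a $d$-uniform, quantitative $\mathbb{L}_2$-bound while preserving the explicit exponential constants. Two issues need handling: (i) the particles entering the final segment are not truly i.i.d.\ from $\pi_{t_{m^*-1}}^{\otimes d}$ but descend from an earlier multinomial resampling, which must be absorbed by coupling them to an idealised product-form initial law via (A\ref{hyp:A})--(A\ref{hyp:B}) and the SMC perturbation estimates of \cite{beskos}; (ii) the SNIS denominator $\bar W = N^{-1}\sum_i W^i$ must be kept away from $0$, which is achieved by truncation on the small-denominator event together with a H\"older inequality with exponents near $6$ on its complement. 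The interaction between the Gaussian moment generating function and these H\"older parameters is what will produce the constant $17$ in the exponent of $e^{17\sigma^2_{t_{m^*-1}:1}}$ and the residual $N^{-1/6}$ factor appearing in $M(\sigma^2_{t_{m^*-1}:1})$. I expect the bulk of the technical work to lie in assembling these ingredients into a clean, $d$-uniform quantitative estimate rather than in any single step.
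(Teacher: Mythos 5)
Your proposal is correct and takes essentially the same route as the paper: your conditional-variance identity is algebraically identical to the paper's exchangeability expansion into $\tfrac{1}{N}\Exp[\overline{\varphi}(\check{X}_{d,1}^1)^2]+\tfrac{N-1}{N}\Exp[\overline{\varphi}(\check{X}_{d,1}^1)\overline{\varphi}(\check{X}_{d,1}^2)]$, and the leading term $e^{\sigma^2_{t_{m^*-1}:1}}\mathbb{V}\textrm{ar}_\pi[\varphi]/N$ plus the $M e^{17\sigma^2_{t_{m^*-1}:1}}N^{-7/6}$ remainder arise exactly as you anticipate. The only real difference is that the paper does not pursue a $d$-uniform quantitative coupling: it first passes to the $d\rightarrow\infty$ limit using the joint weak convergence of normalised weights and positions to independent log-normals and $\pi$-draws (Lemma \ref{lem:growth}(ii), imported from \cite{beskos}), and then controls the self-normalising denominator on the Gaussian limit via H\"older with exponents $(3/2,3)$, Cauchy--Schwarz, an order-statistics bound $\Exp[e^{-6Z^{(1)}}]\leq Ne^{18\sigma^2}$ on the minimum log-weight (rather than your truncation on a small-denominator event), and the Marcinkiewicz--Zygmund inequality.
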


\begin{rem} Compared to the  i.i.d.~sampling scenario, the upper bound contains the additional term  
$\textrm{\emph{Var}}_{\pi}[\varphi]\,\tfrac{1}{N}\,M(\sigma^2_{t_{m^*-1}:1})$.
 This is a bound on the cost induced due to the dependence of the particles.
\end{rem}

\subsubsection{Normalizing Constants}

The second main result of the paper is the stability of estimating normalising constants in high dimensions.
The quantity of interest here is the ratio of normalising constants:
\begin{equation}
\label{eq:norm}
c_d:=\frac{\int_{E^{d}}\Pi(x)dx}{\int_{E^{d}}\Pi_{\phi_0}(x)dx}\ .
\end{equation}
We first consider the SMC sampler in Figure \ref{tab:SMC} with no resampling. Define:
\begin{equation*}
\gamma^N_d(1) = \tfrac{1}{N}\sum_{i=1}^{N}w^{i}_{0:d}\equiv  \tfrac{1}{N}\,\sum_{i=1}^N e^{\frac{1}{d}\sum_{j=1}^d \bar{G}_j^i} \ ; \quad 
\gamma_d(1)= \mathbb{E}\,\big[\,e^{\frac{1}{d}\sum_{j=1}^d \bar{G}_j^1}\,\big]\ ,
\end{equation*}
where
$\bar{G}_j^i = (1-\phi_0)\sum_{n=0}^{d-1}g(X_{n,j}^i)$. 
From standard properties of SMC we have $\Exp\,[\,\gamma^N_d(1)\,]=\gamma_d(1)\equiv c_d$. 
Now, consider the relative $\mathbb{L}_2$-error:
%
\begin{equation*}
\mathbb{V}_2(\gamma_d(1))=\mathbb{E}\,\big[\,\big(\tfrac{\gamma^N_d(1)}{\gamma_d(1)}-1\,\big)^2\,\big]\ .
\end{equation*}
We then have the following result, proven in Appendix \ref{app:nc}. 

\begin{theorem}\label{theo:nc}
Assume (A1-2) and $g\in\mathcal{B}_b(E)$.
Then for any $N\geq 1$:
$$
\lim_{d\rightarrow\infty}\mathbb{V}_2(\gamma_d(1)) = \tfrac{e^{\sigma_{\phi_0:1}^2}-1}{N}\ .
$$
\end{theorem}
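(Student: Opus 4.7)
The plan is to exploit the absence of resampling---which makes the $N$ particle trajectories i.i.d.---together with the i.i.d.\@ coordinate structure of the target. First, independence of the $N$ weights yields
$$
\mathbb{V}_2(\gamma_d(1)) \;=\; \frac{\mathbb{V}\textrm{ar}(w^1_{0:d})}{N\,\gamma_d(1)^2} \;=\; \frac{1}{N}\!\left(\frac{\mathbb{E}\bigl[e^{(2/d)\sum_{j=1}^d \bar G^1_j}\bigr]}{\bigl(\mathbb{E}\bigl[e^{(1/d)\sum_{j=1}^d \bar G^1_j}\bigr]\bigr)^2}\;-\;1\right).
$$
Writing $\bar G^1_j = G^1_j + C^d_j$ with $C^d_j = (1-\phi_0)\sum_{n=0}^{d-1} \mathbb{E}_{\pi_{\phi_0}}[g(X^1_{n,j})]$ deterministic, the multiplicative factor $\exp\{(2/d)\sum_j C^d_j\}$ appears in both numerator and squared denominator and cancels. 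Since the bridging densities, the Markov kernels and the initial law all factorise coordinate-wise, the variables $(G^1_j)_{j=1}^d$ are i.i.d.\@ copies of $G^1_1$, so for $\alpha\in\{1,2\}$,
$$
\mathbb{E}\bigl[e^{(\alpha/d)\sum_{j=1}^d G^1_j}\bigr] \;=\; \bigl(\mathbb{E}[e^{(\alpha/d)G^1_1}]\bigr)^d.
$$

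It therefore suffices to prove $(\mathbb{E}[e^{(\alpha/d)G^1_1}])^d \to e^{\alpha^2\sigma^2_{\phi_0:1}/2}$ for $\alpha\in\{1,2\}$, since the claimed value is then $\tfrac{1}{N}(e^{2\sigma^2_{\phi_0:1}}/e^{\sigma^2_{\phi_0:1}}-1)$. Boundedness of $g$ gives $|G^1_1|\le 2(1-\phi_0)\|g\|_\infty\,d$, hence $|(\alpha/d)G^1_1|\le 2\alpha(1-\phi_0)\|g\|_\infty$ uniformly in $d$. Taylor-expanding $e^y = 1+y+\tfrac{y^2}{2}+R(y)$ with $|R(y)|\le |y|^3 e^{|y|}/6$ and using $\mathbb{E}[G^1_1]=0$ produces
$$
\mathbb{E}[e^{(\alpha/d)G^1_1}] \;=\; 1 + \frac{\alpha^2}{2d^2}\,\mathbb{E}[(G^1_1)^2] + O\!\left(\frac{\mathbb{E}[|G^1_1|^3]}{d^3}\right).
$$
Provided $\mathbb{E}[(G^1_1)^2]/d\to \sigma^2_{\phi_0:1}$ and $\mathbb{E}[|G^1_1|^3]=O(d^{3/2})$, this equals $1+\alpha^2\sigma^2_{\phi_0:1}/(2d)+o(1/d)$, and raising to the $d$-th power returns the exponential limit.

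The main obstacle is the second-moment asymptotic $\lim_d \mathbb{V}\textrm{ar}(G^1_1)/d = \sigma^2_{\phi_0:1}$. This is essentially the computation driving the Log-Weight-Asymptotic Remark~\ref{th:CLT1}: one expands
$$
\mathbb{V}\textrm{ar}(G^1_1) \;=\; (1-\phi_0)^2\sum_{n,m=0}^{d-1}\textrm{Cov}\bigl(g(X^1_{n,1}),g(X^1_{m,1})\bigr),
$$
substitutes the Poisson solutions $g-\pi_u(g)=\widehat{g}_u - k_u(\widehat{g}_u)$ to telescope the covariances in time, and invokes uniform ergodicity (A\ref{hyp:A}) together with the Lipschitz-in-time perturbation bound (A\ref{hyp:B}) to identify the Riemann-sum approximation of the integral \eqref{eq:varia} while controlling the mixing remainder. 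The third-moment estimate $\mathbb{E}[|G^1_1|^3]=O(d^{3/2})$ is a softer consequence of the same ergodicity, giving partial sums of order $O_{\mathbb{L}^p}(\sqrt d)$ for every $p\geq 1$. These ingredients are already present in the proof of Theorem~4.1 of \cite{beskos}, so the task reduces to reusing rather than redeveloping them.
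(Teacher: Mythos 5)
Your argument is correct, and it shares the paper's opening reduction but evaluates the decisive limit by a different, more elementary route. Both proofs start the same way: with no resampling the $N$ weights are i.i.d., so after centring (your constants $C_j^d$ cancel exactly as you say) the relative $\mathbb{L}_2$-error collapses to $\tfrac{1}{N}\big(\mathbb{E}[e^{2Z_d}]/\mathbb{E}^2[e^{Z_d}]-1\big)$ with $Z_d=\tfrac{1}{d}\sum_{j}G_j^1$, and everything hinges on $\mathbb{E}[e^{cZ_d}]\rightarrow e^{c^2\sigma^2_{\phi_0:1}/2}$ for $c\in\{1,2\}$. The paper obtains this from Lemma \ref{lem:CLT}(iii), i.e.\ by combining the central limit theorem for $Z_d$ imported from \cite{beskos} with the uniform integrability of $e^{cZ_d}$ established in Lemma \ref{lem:ui_exp}. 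You instead exploit the coordinate-wise product structure (valid here precisely because there is no resampling to couple the coordinates) to factorise $\mathbb{E}[e^{cZ_d}]=\big(\mathbb{E}[e^{cG_1^1/d}]\big)^d$ and read off the limit from a second-order Taylor expansion of the per-coordinate Laplace transform, needing only $\mathbb{E}[G_1^1]=0$, $\mathbb{V}\textrm{ar}(G_1^1)/d\rightarrow\sigma^2_{\phi_0:1}$ and $\mathbb{E}|G_1^1|^3=O(d^{3/2})$ --- exactly the moment estimates recorded in Lemma \ref{lem:CLT}(ii),(iv) and proved in \cite{beskos} via the Poisson equation, so your attribution is legitimate. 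In effect you carry out the computation that the paper performs inside Lemma \ref{lem:ui_exp} one step further, identifying the constant rather than merely bounding the product by $(1+M/d)^d$; this buys a self-contained, CLT-free proof of Theorem \ref{theo:nc}, at the price of being tied to the product structure --- the factorisation over $j$ breaks down once resampling correlates the coordinates, which is why the paper's treatment of the resampled analogue (Theorem \ref{theo:nc1}) has to route through the conditional Lemmas \ref{lem:cond_conv}--\ref{lem:cond_exp_conv} instead.
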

%
\noindent The result establishes a $\mathcal{O}(N^{-1})$ rate of convergence at a computational cost of $\mathcal{O}(Nd^2)$. The information in the limit is in terms of the expression $\sigma_{\phi_0:1}^2$. As in \cite{beskos},
this is a critical quantity, which helps to measure the rate of convergence of the algorithm.

We now consider the SMC sampler in Figure \ref{tab:SMC} with resampling at the deterministic times 
$\{t_k(d)\}$ described in Section \ref{sect:conditions}. We make the following definitions:
\begin{equation*}
\gamma_{d,k}^N(1) = \tfrac{1}{N}\sum_{i=1}^Ne^{\frac{1}{d}
\sum_{j=1}^d \bar{G}_{k,j}^i}\ ; \quad \gamma_{d,k}(1) = \mathbb{E}_{\pi_{t_{k-1}(d)}^{\otimes Nd}}\,\big[\,e^{\frac{1}{d}
\sum_{j=1}^d \bar{G}_{k,j}^1}\,\big],
\end{equation*}
where $k\in\{1,\dots,m^*+1\}$ and
$
\bar{G}_{k,j}^i$ as defined in Section \ref{sec:log_weight}. As in the non-resampling case, 
we again have the unbiasedness property for the estimate of $c_d$:
$\Exp\,\big[\,\prod_{k=1}^{m^{*}+1}\gamma_{d,k}^N(1)\,\big] = \prod_{k=1}^{m^{*}+1}\gamma_{d,k}(1) \equiv c_d\ .$
We have the following result whose proof is in Appendix \ref{app:nc}. 

\begin{theorem}
\label{theo:nc1}
Assume (A\ref{hyp:A}-\ref{hyp:B}) and $g\in\mathcal{B}_b(E)$.
Then for any $N\geq 1$:
$$
\lim_{d\rightarrow\infty} \mathbb{V}_2\Big(\prod_{k=1}^{m^*+1}\gamma_{d,k}(1)\Big) = 
e^{-\sigma_{\phi_0:1}^2} \prod_{k=1}^{m^*+1} \big[\,\tfrac{1}{N}e^{2\sigma^2_{t_{k-1}:t_k}}+\left(1-\tfrac{1}{N}\right)e^{\sigma^2_{t_{k-1}:t_k}}\,\big] - 1.
$$
\end{theorem}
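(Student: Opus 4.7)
\textbf{Proof proposal for Theorem \ref{theo:nc1}.} The plan is to exploit the multiplicative structure of the resampled estimator and compute the limit of the relative $\mathbb{L}_2$-error from a joint CLT for the standardised log-weights across all particles and resampling blocks. First, the unbiasedness identity $\Exp[\prod_k \gamma_{d,k}^N(1)]=\prod_k\gamma_{d,k}(1)$ gives
$$
\mathbb{V}_2\Big(\prod_{k=1}^{m^*+1}\gamma_{d,k}(1)\Big) = \Exp\Big[\prod_{k=1}^{m^*+1}\Big(\frac{\gamma_{d,k}^N(1)}{\gamma_{d,k}(1)}\Big)^2\Big]-1.
$$
Writing $\bar G_{k,j}^i = G_{k,j}^i + \alpha_{k,j,d}$ with $\alpha_{k,j,d}=(1-\phi_0)\sum_n\Exp_{\pi_{t_{k-1}(d)}}[g(X_{n,j}^i)]$ deterministic and independent of $i$, the factor $\exp(\tfrac{1}{d}\sum_j\alpha_{k,j,d})$ cancels between numerator and denominator, yielding
$$
\frac{\gamma_{d,k}^N(1)}{\gamma_{d,k}(1)} = \frac{\frac{1}{N}\sum_{i=1}^N\exp\big(\tfrac{1}{d}\sum_{j=1}^d G_{k,j}^i\big)}{\Exp_{\pi_{t_{k-1}(d)}^{\otimes Nd}}\big[\exp\big(\tfrac{1}{d}\sum_j G_{k,j}^1\big)\big]}.
$$
Under the i.i.d.\ initial law appearing in the denominator the co-ordinates $G_{k,j}^1$ are i.i.d.\ (because $K_n=\prod_j k_n$); a standard CLT together with a uniform-in-$d$ MGF bound (from $g\in\mathcal B_b(E)$) then gives the denominator limit $e^{\sigma^2_{t_{k-1}:t_k}/2}$.

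The key analytic step is to strengthen Remark \ref{th:CLT1} to the joint weak limit
$$
\Big(\tfrac{1}{d}\sum_{j=1}^d G_{k,j}^i\Big)_{1\le i\le N,\;1\le k\le m^*+1}\;\Rightarrow\;(Z_k^i)_{i,k},
$$
with the $\{Z_k^i\}$ mutually independent and $Z_k^i\sim N(0,\sigma^2_{t_{k-1}:t_k})$. Independence within a block over the particle index $i$ is exactly Remark \ref{th:CLT1}. Independence across distinct blocks is obtained by conditioning on the $\sigma$-field $\mathcal{F}_{k-1}$ up to and including the resampling at $t_{k-1}(d)$ and showing that the conditional law of the block-$k$ standardised sums converges to $N(0,\sigma^2_{t_{k-1}:t_k})^{\otimes N}$ \emph{uniformly} in the resampled starting configuration. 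The $\mathcal{O}(d)$ MCMC steps per block combined with (A\ref{hyp:A})--(A\ref{hyp:B}) wash out the dependence on the initial distribution, so the limit is non-random and the blocks become asymptotically independent.

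Combining these two ingredients with uniform integrability (derived from the same MGF bound applied at exponent slightly larger than~$2$), the expectation passes to the limit and factorises:
$$
\Exp\Big[\prod_k\Big(\tfrac{\gamma_{d,k}^N(1)}{\gamma_{d,k}(1)}\Big)^2\Big]\;\longrightarrow\;\prod_{k=1}^{m^*+1}e^{-\sigma^2_{t_{k-1}:t_k}}\,\Exp\Big[\Big(\tfrac{1}{N}\sum_{i=1}^N e^{Z_k^i}\Big)^2\Big].
$$
Splitting the double sum into the $N$ diagonal terms (each with expectation $e^{2\sigma^2_{t_{k-1}:t_k}}$) and the $N(N-1)$ off-diagonal terms (each with expectation $e^{\sigma^2_{t_{k-1}:t_k}}$), the inner expectation equals $\tfrac{1}{N}e^{2\sigma^2_{t_{k-1}:t_k}}+(1-\tfrac{1}{N})e^{\sigma^2_{t_{k-1}:t_k}}$. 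Additivity of the integral in \eqref{eq:varia} over the partition $\phi_0=t_0<\cdots<t_{m^*+1}=1$ gives $\sum_k\sigma^2_{t_{k-1}:t_k}=\sigma^2_{\phi_0:1}$, so $\prod_k e^{-\sigma^2_{t_{k-1}:t_k}}=e^{-\sigma^2_{\phi_0:1}}$ and the claimed expression follows after subtracting~$1$.

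The principal obstacle is the uniform-in-past CLT underlying the asymptotic block independence: at finite $d$ multinomial resampling genuinely couples consecutive blocks through the resampled particles, and one must show this coupling vanishes as $d\to\infty$. The resolution mirrors the propagation-of-chaos mechanism behind the stability results of \cite{beskos}: uniform ergodicity (A\ref{hyp:A}) together with the Lipschitz perturbation bound (A\ref{hyp:B}) ensure that the conditional characteristic function of the block-$k$ standardised log-weights given $\mathcal{F}_{k-1}$ converges to the unconditional Gaussian limit uniformly in the conditioning, after which iterated use of the tower property reduces the global expectation to the advertised product form.
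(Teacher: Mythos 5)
Your proposal follows essentially the same route as the paper: recentre to the standardised log-weights, condition on the resampling filtration $\mathscr{F}_{t_{k-1}(d)}^N$, use the ergodicity of the $\mathcal{O}(d)$ MCMC steps per block to decouple consecutive blocks, apply the per-block CLT with a uniform exponential-moment bound for uniform integrability, and factorise the limit (the paper implements the decoupling via a telescoping identity plus Lemma \ref{lem:cond_conv} rather than a single joint CLT, but the mechanism is identical, and is isolated in Proposition \ref{theo:asymp_indep}). One caveat: your claim that the conditional law of the block-$k$ standardised sums converges to the Gaussian limit \emph{uniformly} in the resampled starting configuration is too strong — for an adversarial configuration the centring term $\tfrac{1}{d}\sum_j \Exp_{x_j}[G_{k,j}]$ is only $\mathcal{O}(1)$ and need not vanish. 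What actually holds, and what the paper proves (Lemma \ref{lem:cond_exp_conv}, via Lemma \ref{lem:growth}(i)), is convergence in probability over the \emph{random} starting configuration, exploiting that the resampled particles are themselves asymptotically $\pi_{t_{k-1}(d)}$-distributed; this weaker statement, combined with the uniform integrability you already invoke, suffices to complete your argument.
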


\begin{rem}
In comparison with the no resampling scenario, the limiting expression here depends upon the incremental variance expressions.  On writing the limit in the form:
$$
e^{-\sigma_{\phi_0:1}^2}\prod_{k=1}^{m^*+1} 
e^{\sigma^2_{t_{k-1}:t_k}}
\big[1 + \tfrac{1}{N}\{e^{\sigma^2_{t_{k-1}:t_k}}-1\}\,\big] - 1
$$
if $N>(m^*+1)(e^{\overline{\sigma}^2}-1)$, with $\overline{\sigma}^2=\max_k\sigma^2_{t_{k-1}:t_k}$, then using similar manipulations to \cite[Corollary 5.2]{cerou1}, we have that
$$
\lim_{d\rightarrow\infty} \mathbb{V}_2\Big(\prod_{k=1}^{m^*+1}\gamma_{d,k}(1)\Big) \leq  \frac{2(m^*+1)(e^{\overline{\sigma}^2}-1)}{N}
$$
hence, the no resampling scenario has a lower error if this is less than the limit in Theorem \ref{theo:nc}. The upper-bound also shows that the error seems to grow with number of times one resamples.
The limit in Theorem \ref{theo:nc1} corresponds to the behavior of $\mathbb{E}_{\pi_{t_{k-1}(d)}^{\otimes Nd}}[\,\gamma_{d,k}^N(1)^2/\gamma_{d,k}(1)^2\,]$ between each resampling time. In effect, the ergodicity of the system takes over, and breaks up the error in estimation of the ratio of normalizing constants to different tours between resampling times (see Proposition \ref{theo:asymp_indep}).
\end{rem}
\noindent To provide some intuition for Theorem \ref{theo:nc1}, we give the following result.
It is associated to the asymptotic independence, between resampling times,
of the log-weights
$\sum_{j=1}^{d}\frac{1}{d}\,G_{k,j}^i$ in (\ref{eq:defineGG}).
The proof of the following can be found in Appendix \ref{app:nc}.

\begin{prop}
\label{theo:asymp_indep}
Assume (A\ref{hyp:A}-\ref{hyp:B}) and that $g\in\mathcal{B}_b(E)$. Then 
for any $N\geq 1$, $i\in\{1,\dots,N\}$,  $c_{1:k}\in\mathbb{R}$ and $k\in\{1,\dots,m^*+1\}$, we 
have that:
$$
\lim_{d\rightarrow\infty} \mathbb{E}\,\big[\,e^{\sum_{l=1}^k c_l\,\frac{1}{d}\,\sum_{j=1}^d G_{l,j}^i}\,\big]=
\prod_{l=1}^k \Exp\,[\,e^{c_l Z^l}\,]\ ,
$$
where $Z^{l}\sim N(0,\sigma^2_{t_{l-1}:t_l})$ independently over $1\le l \le k$.
\end{prop}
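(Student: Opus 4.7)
The plan is to prove factorisation of the joint MGF by iterated conditioning on the $\sigma$-algebras generated by the SMC algorithm at the successive resampling times, peeling off one block at a time. Write $Y_l^{(d)} := \tfrac{1}{d}\sum_{j=1}^d G_{l,j}^i$ and let $\mathcal{F}_k^{(d)}$ be the natural filtration of the algorithm up to the discrete step $l_d(t_k(d))$. Then
\[
\mathbb{E}\big[\,e^{\sum_{l=1}^k c_l Y_l^{(d)}}\,\big]
=\mathbb{E}\Big[\,e^{\sum_{l=1}^{k-1} c_l Y_l^{(d)}}\,\mathbb{E}\big[\,e^{c_k Y_k^{(d)}}\,\big|\,\mathcal{F}_{k-1}^{(d)}\,\big]\,\Big],
\]
and the goal becomes to show that the inner conditional expectation converges in $\mathbb{L}_1$ to the deterministic constant $\mathbb{E}[e^{c_k Z^k}]=e^{c_k^2 \sigma^2_{t_{k-1}:t_k}/2}$.

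For this, I would exploit the coordinate-product form of $K_n$ together with the fact that multinomial resampling at time $l_d(t_{k-1}(d))$ acts particle-wise: conditional on $\mathcal{F}_{k-1}^{(d)}$, the $d$ coordinate chains $(X_{n,j}^i)_{n\ge l_d(t_{k-1}(d))}$ of particle $i$ evolve as $d$ independent Markov chains driven by $\{k_n\}$, started from the (random) coordinates $X_{l_d(t_{k-1}(d)),j}^i$. Consequently, conditionally, $Y_k^{(d)}$ is $\tfrac{1}{d}$ times a sum of $d$ independent bounded centred functionals of Markov trajectories of length $\mathcal{O}(d)$. The computation performed in the proof of Theorem~4.1 of \cite{beskos} (yielding Remark~\ref{th:CLT1}) applies verbatim to this conditional setting: a Poisson-equation decomposition using $g(x)-\pi_u(g)=\widehat g_u(x)-k_u(\widehat g_u)(x)$ replaces each chain sum by a telescoping martingale plus a boundary term, the boundary terms are $\mathcal{O}(1/d)$ after the $\tfrac{1}{d}$-scaling by boundedness of $\widehat g_u$, and the quadratic variation of the martingale part converges, using assumption (A\ref{hyp:B}) to match $\pi_n$ with $\pi_u$, to the deterministic limit $c_k^2\sigma^2_{t_{k-1}:t_k}$ given in \eqref{eq:varia}. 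This gives the desired conditional MGF limit $e^{c_k^2\sigma^2_{t_{k-1}:t_k}/2}$.

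Plugging back and using that $|\,e^{\sum_{l=1}^{k-1} c_l Y_l^{(d)}}\,|$ is bounded (since $g\in\mathcal{B}_b(E)$ implies each $Y_l^{(d)}$ is uniformly bounded), bounded convergence yields
\[
\lim_{d\to\infty}\mathbb{E}\big[\,e^{\sum_{l=1}^k c_l Y_l^{(d)}}\,\big]
= e^{c_k^2\sigma^2_{t_{k-1}:t_k}/2}\cdot
\lim_{d\to\infty}\mathbb{E}\big[\,e^{\sum_{l=1}^{k-1} c_l Y_l^{(d)}}\,\big],
\]
and iterating the argument for $k-1,k-2,\dots,1$ produces the claimed product, with the base case $k=1$ being exactly Remark~\ref{th:CLT1}.

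The main obstacle is the uniformity in the random initial configuration $\{X_{l_d(t_{k-1}(d)),j}^i\}_{j=1}^d$ at the start of block $k$: Remark~\ref{th:CLT1} gives the limit under the idealised initial law $\pi_{t_{k-1}(d)}^{\otimes d}$, while under the SMC dynamics this initial law is random and can be quite different from $\pi_{t_{k-1}(d)}^{\otimes d}$. The reason this can be absorbed is assumption (A\ref{hyp:A}): the $(1,\theta,\varsigma)$-minorisation gives uniform-in-$x$ geometric mixing for $k_s$, so after $\mathcal{O}(\log d)$ steps (negligible compared to the $\mathcal{O}(d)$ block length) every chain has forgotten its starting point; hence the martingale+boundary decomposition gives a conditional limit that does not depend on the starting coordinates, and the required uniformity follows. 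Verifying this step carefully (i.e.\ that the residual dependence on starts contributes $o(1)$ in $\mathbb{L}_1$ after the $\tfrac{1}{d}$-normalisation) is the part requiring most attention, but it is exactly the same mechanism already invoked in \cite{beskos} for the marginal statement.
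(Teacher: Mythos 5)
Your overall strategy---conditioning on the filtration at the $(k-1)$-th resampling time, showing the conditional moment generating function of the $k$-th block converges to the deterministic constant $e^{c_k^2\sigma^2_{t_{k-1}:t_k}/2}$, and iterating---is the same as the paper's, which phrases the iteration as an induction and isolates the conditional convergence in Lemma \ref{lem:cond_exp_conv}. However, there is a genuine gap in the step where you pass to the limit: you assert that $g\in\mathcal{B}_b(E)$ makes each $Y_l^{(d)}=\frac{1}{d}\sum_{j=1}^d G_{l,j}^i$ uniformly bounded and then invoke bounded convergence. This is false. Each $G_{l,j}^i$ is itself a sum of $l_d(t_l(d))-l_d(t_{l-1}(d))=\mathcal{O}(d)$ bounded terms, so $|G_{l,j}^i|$ can be of order $d$; the prefactor $\frac{1}{d}$ only absorbs the sum over the $d$ coordinates $j$, not the $\mathcal{O}(d)$ time steps inside each $G_{l,j}^i$. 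Hence $|Y_l^{(d)}|$ is only bounded by $Md$, the variables $e^{\sum_{l<k}c_lY_l^{(d)}}$ are not uniformly bounded, and neither is the conditional expectation $\Exp[e^{c_kY_k^{(d)}}\mid\mathcal{F}_{k-1}^{(d)}]$, so the bounded convergence theorem does not apply. What is actually needed---and what the paper supplies---is the uniform integrability statement $\sup_d\Exp[e^{\frac{c}{d}\sum_jG_{k,j}^i}]<\infty$ for every fixed $c$ (Lemma \ref{lem:ui_exp}, proved by a second-order Taylor expansion of the per-coordinate conditional MGF together with the moment bounds of Lemma \ref{lem:CLT}(ii)), combined with H\"older's inequality to control the product of the block-$(1{:}(k-1))$ factor with the fluctuation of the conditional block-$k$ MGF around its limit. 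Without this, your identity $\lim_d\Exp[e^{\sum_{l\le k}c_lY_l^{(d)}}]=e^{c_k^2\sigma^2_{t_{k-1}:t_k}/2}\lim_d\Exp[e^{\sum_{l<k}c_lY_l^{(d)}}]$ is unjustified.

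The remaining ingredient---that $\Exp[e^{c_kY_k^{(d)}}\mid\mathcal{F}_{k-1}^{(d)}]$ converges to the value it would take under the idealised initial law $\pi_{t_{k-1}(d)}^{\otimes d}$---you correctly flag as the delicate point, and your heuristic (uniform minorisation makes the chains forget their resampled starting points) is the right intuition, though the paper's mechanism is not literally ``forgetting after $\mathcal{O}(\log d)$ steps'': it writes the conditional MGF as $\prod_{j=1}^d(1+\beta_j(d))$ times the idealised MGF, where $\beta_j(d)$ is the relative perturbation caused by starting coordinate $j$ at $\check{X}^i_{l_d(t_{k-1}(d)),j}$ rather than at stationarity, and shows $\sum_j\beta_j(d)\to0$ and $\sum_j\beta_j(d)^2\to0$ in $\mathbb{L}_1$ via a Taylor expansion and the results imported from \cite{beskos} (Lemmas \ref{lem:cond_exp_conv} and \ref{lem:betaj_conv}). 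So your outline captures the right structure, but the convergence-of-expectations step must be repaired with the uniform integrability argument; as written, it fails.
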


\subsubsection{Propagation of Chaos}
Finally we deduce a rather classical result in the analysis of particle systems: propagation of chaos, demonstrating the asymptotic (in the classical setting) independence of any fixed block of $q$ of $N$ particles as $N$ grows.
The following scenario, with the SMC sampler with resampling (at the times $\{t_k(d)\}$), is considered: let $s(d)$ be a sequence
 such that $s(d)\in(t_{k-1}(d),t_k(d))$ for some $1\leq k \leq m^*+1$, with limit $s\in(t_{k-1}(d),t_k(d))$. Denote by $\mathbb{P}_{s(d),j}^{(q,N)}$ the marginal law of any of the $q$
particles out of $N$ at time $s(d)$ and in dimension $j\in\{1,\dots,d\}$.
 By construction, particles are considered at a time when they are not resampled. 
We have the following Propagation of Chaos result, whose proof is in Appendix \ref{app:A}.

\begin{prop}\label{theo:prop_chaos}
Assume (A\ref{hyp:A}-\ref{hyp:B}) and that $g\in\mathcal{B}_b(E)$. Then, for any fixed $j\in\{1,\dots,d\}$ and  
any $1\leq k \leq m^*+1$, a sequence $s(d)\in(t_{k-1}(d),t_k(d))$ with $s(d)\rightarrow s$,
and any $N\geq 1$, with
$1\leq q \leq N$ fixed:
\begin{equation*}
\lim_{d\rightarrow\infty}\|\,\mathbb{P}_{s(d),j}^{(q,N)}-\pi_{s}^{\otimes q}\,\|_{tv} = 0\ .
\end{equation*}
\end{prop}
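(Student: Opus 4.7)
The plan is to exploit the coordinate-wise factorization $K_n(x,dx')=\prod_j k_n(x_j,dx_j')$ and the fact that, between two resampling times, the $N$ particles evolve through independent copies of the same time-inhomogeneous one-dimensional semigroup along each coordinate. Conditional on the $N$-tuple of parent states at step $l_d(t_{k-1}(d))$ (i.e.\ immediately after the resampling at $t_{k-1}(d)$), the $j$-th coordinates of the $q$ selected particles at step $l_d(s(d))$ are conditionally independent. Denote by $P_d(x,\cdot)$ the law of the $j$-th coordinate of a single particle at step $l_d(s(d))$ when initialized at $x$ at step $l_d(t_{k-1}(d))$; by the coordinate-independent structure, this does not depend on $j$ nor on the particle index. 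Then
\begin{equation*}
\mathbb{P}_{s(d),j}^{(q,N)}(dx_{1:q})=\int Q_d^{(q)}(d\check{x}_{1:q})\,\prod_{i=1}^q P_d(\check{x}_i,dx_i),
\end{equation*}
where $Q_d^{(q)}$ is the joint law of the $j$-th coordinates of the $q$ resampled parents, generally highly correlated because of multinomial resampling.

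Given this mixture representation, the whole statement reduces to establishing the uniform-in-starting-state convergence
\begin{equation*}
\varepsilon_d:=\sup_{x\in E}\big\|P_d(x,\cdot)-\pi_s\big\|_{tv}\longrightarrow 0\qquad (d\to\infty).
\end{equation*}
Indeed, the triangle inequality and the standard product-measure bound $\|\mu_1\otimes\cdots\otimes\mu_q-\nu^{\otimes q}\|_{tv}\le\sum_{i=1}^q\|\mu_i-\nu\|_{tv}$ give
\begin{equation*}
\big\|\mathbb{P}_{s(d),j}^{(q,N)}-\pi_s^{\otimes q}\big\|_{tv}\le\int Q_d^{(q)}(d\check{x}_{1:q})\,\Big\|\prod_{i=1}^q P_d(\check{x}_i,\cdot)-\pi_s^{\otimes q}\Big\|_{tv}\le q\,\varepsilon_d,
\end{equation*}
so the entire dependence structure encoded in $Q_d^{(q)}$ drops out.

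The uniform convergence $\varepsilon_d\to 0$ is essentially the backbone of the marginal-stability arguments in \cite{beskos}, and combines (A\ref{hyp:A}) and (A\ref{hyp:B}) as follows. Uniform ergodicity (A\ref{hyp:A}) gives a geometric forgetting of the initial state over the $\Theta(d)$ steps between $t_{k-1}(d)$ and $s(d)$, while the perturbation bound (A\ref{hyp:B}) controls how fast the invariant family $\{\pi_n\}$ drifts along those steps (an $O(1/d)$ amount per step). The standard way to reconcile these two scales is a telescoping decomposition that compares the time-inhomogeneous semigroup with the frozen one under $k_s$: the contraction drives the frozen chain to $\pi_s$, while the telescoping remainder is controlled by (A\ref{hyp:B}). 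This balancing of an exponential contraction $(1-\theta)^{\Theta(d)}$ against an $O(1)$ cumulative drift in the invariant measure is the main technical obstacle, and it is handled verbatim by the techniques already developed in \cite{beskos}; the present statement simply packages their outcome. Crucially, because the argument only uses ergodicity of the kernels and does not involve sending $N\to\infty$, the shared-ancestor configurations that obstruct classical propagation-of-chaos proofs are here rendered harmless: regardless of how the $q$ ancestors are jointly distributed in $Q_d^{(q)}$, each coordinate has forgotten its starting point within a TV margin $\varepsilon_d\to 0$.
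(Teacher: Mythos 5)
Your proof is correct and follows essentially the same route as the paper: conditioning on the resampled ancestors, a telescoping/product bound over the $q$ particles that reduces everything to a single-coordinate marginal, and convergence of the time-inhomogeneous one-coordinate semigroup to $\pi_s$ (forgetting via (A\ref{hyp:A}), invariant-measure drift via (A\ref{hyp:B})), which the paper likewise imports from \cite{beskos} through Lemma \ref{lem:growth}(i). The only cosmetic difference is that you assert a uniform-in-starting-point bound $\varepsilon_d$ (valid here by compactness and the one-step minorization), whereas the paper states the corresponding total-variation convergence in $\mathbb{L}_1$ over the random ancestors; either way the joint law $Q_d^{(q)}$ of the ancestors drops out.
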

The result establishes the asymptotic independence of the marginals of the particles, between any two resampling times, as $d$ grows. This is in contrast to the standard scenario where the particles only become independent as $N$ grows. Critically, the MCMC steps provide the effect that the marginal particle distributions converge 
to the target $\pi_s^{\otimes q}$. Thus,
Proposition \ref{theo:prop_chaos} establishes that it is essentially the ergodicity of the system which helps to drive the stability properties of the algorithm.
It should be noted that if one considers the particles just after resampling, one cannot obtain an asymptotic independence in $d$. Here, as in classical results for particles methods, one has to rely on increasing $N$.

\subsection{Other Sequences of Densities}

In this section, we discuss some issues associated with the selection of the sequence of chosen bridging  densities $\{\Pi_n\}$.

\subsubsection{Annealing Sequence}
\label{sec:annealing_seq}

Recall that we use the equidistant annealing sequence $\phi_n$  in \eqref{eq:tune}. However, one could also
consider a general differentiable, increasing Lipschitz function $\phi(s)$, $s\in[0,1]$ with $\phi(0)=\phi_0\ge 0$,  $\phi(1)=1$, and use the construction
$\phi_{n,d}=\phi(n/d)$; 
then the asymptotic result in Theorem \ref{th:CLT1} (and others that will follow) generalised to the choice 
of $\phi_{n,d}$ considered here would involve the variances:
\begin{equation}
\sigma_{s:t}^{2,\phi}= \int_{s}^{t}
\pi_{\phi(u)}(\widehat{g}_{\phi(u)}^2-k_{\phi(u)}(\widehat{g}_{\phi(u)})^2)
\bigg[\frac{d\phi(u)}{du}\bigg]
d\phi(u)\ , \quad 0\leq s < t \leq 1\label{eq:general_variance}\ ,
\end{equation}
in the place of $\sigma^2_{s:t}$ in (\ref{eq:varia}).
Our proofs in this paper are given in terms of the annealing sequence \eqref{eq:tune}, corresponding to a linear choice of $\phi(\cdot)$, but it is straightforward to modify them to the above scenario.

This point is illuminated by our main results. For example Theorem \ref{theo:nc} helps to compare various annealing schemes for estimating normalizing constants, via the limiting quantity \eqref{eq:general_variance}. That is, if we are only concerned with variance one prefers an annealing scheme which discretizes $\phi(s)$ versus one which discretizes $\nu(s)$ 
(differentiable monotonic increasing Lipschitz function with $\nu(0)=\nu_0>0$, $\nu(1)=1$)
for estimating normalizing constants if
$$
\tfrac{e^{\sigma_{0:1}^{2,\phi}}-1}{N} \leq 
\tfrac{e^{\sigma_{0:1}^{2,\nu}}-1}{N} \quad \Longleftrightarrow \quad
\sigma_{0:1}^{2,\phi} \leq \sigma_{0:1}^{2,\nu}\ .
$$
In practice, however, one has to numerically approximate $\sigma_{0:1}^{2,\phi}$ and $\sigma_{0:1}^{2,\nu}$, lessening the practical impact of this result. Similar to the scenario of Theorem \ref{theo:nc}, one can use Theorem \ref{theo:nc1}
to compare annealing schemes $\phi(s)$ and $\nu(s)$ 
(which potentially generate a different collection and number of limiting times $0<t_{1}^{\phi}<\cdots<t_{m_{\phi}^*}^{\phi}\leq 1$, $m_{\phi}^*$ and $0<t_{1}^\nu<\cdots<t_{m_{\nu}^*}^\nu\leq 1$,
$m_{\nu}^*$ respectively)
via the inequality
$$
e^{-\sigma_{0:1}^{2,\phi}} \prod_{k=1}^{m_{\phi}^*+1} \Big[\,\tfrac{1}{N}e^{2\sigma^{2,\phi}_{t_{k-1}^{\phi}:t_k^{\phi}}}
+\tfrac{N-1}{N}\,e^{\sigma^{2,\phi}_{t_{k-1}^{\phi}:t_k^{\phi}}}\Big]
\leq 
e^{-\sigma_{0:1}^{2,\nu}} \prod_{k=1}^{m_{\nu}^*+1} \Big[\,\tfrac{1}{N}e^{2\sigma^{2,\nu}_{t_{k-1}^{\nu}:t_k^{\nu}}}
+\tfrac{N-1}{N}e^{\sigma^{2,\nu}_{t_{k-1}^{\nu}:t_k^{\nu}}}\,\Big]
$$
but again, both quantities are difficult to calculate.

\subsubsection{Data Point Tempering}\label{sec:datapointtemp}

An interesting sequence of densities introduced in \cite{chopin1} arises in the scenario when $\Pi$ is 
associated  with a batch data-set
$y_1,\dots,y_p$. The idea is to construct the sequence of densities so that arriving data-points are added sequentially to the target as the time parameter of the algorithm increases. More concretely, 
we will assume here that:
\begin{equation*}
\Pi(x) \propto \exp\bigg\{\sum_{k=1}^p \sum_{j=1}^d g(y_k,x_j)\bigg\}\,,\quad x_j\in E\ ,
\end{equation*}
that is a density that is i.i.d.~in both the data and dimension.
In this scenario 
one could then adopt a sequence of densities of the form:
\begin{equation*}
\Pi_n(x) \propto \exp\bigg\{\sum_{k=1}^n \sum_{j=1}^d g(y_k,x_j)\bigg\}\ , \quad 1\leq n \leq p\ .
\end{equation*}
As noted also in Remark 3.5~of \cite{beskos}, for $d\rightarrow\infty$  
one cannot stabilize the associated SMC algorithm (described in Figure~\ref{tab:SMC}) as the ratio $\Pi_{n+1}/\Pi_n$ explodes for increasing $d$. 
To stabilize the algorithm as $d$ grows one can insert $\lfloor d/p\rfloor$ annealing steps between consecutive data points, thus forming the densities:
\begin{equation*}
\Pi^{(n)}_{k}(x) \propto \exp\bigg\{\phi\big(\,\tfrac{k}{\lfloor d/p\rfloor}\,\big)
\sum_{j=1}^d g(y_n,x_j)\bigg\} 
\times 
\Pi_{n-1}(x)
\ , \quad n\in\{1,\dots,p\}\,,\,\,\, k\in\{1,\dots,\lfloor \tfrac{d}{p}\rfloor\}\ ,
\end{equation*}
where $\phi(s)$ is as in Section \ref{sec:annealing_seq} with $\phi_0=0$. 
Then one can adopt
Markov kernels $K^{(n)}_{s}$, $1\leq n \leq p$, of product form with each component kernel 
$k^{(n)}_{s}$
having invariant measure  
\begin{equation*}
\pi_{s}^{(n)}(x) \propto \exp\bigg\{s\cdot g(y_n,x) + \sum_{k=1}^{n-1}g(y_k,x) \bigg\}\ , \quad x\in E\ .
\end{equation*}
We consider the scenario where there is no resampling and
denote by $\textrm{ESS}_{(0,d)}(n,N)$ the effective sample size with $n$ data after 
$\lfloor d/p\rfloor     $ steps of the $n$-th SMC sampler.
Throughout the data are taken as fixed.
We have the following result, which follows directly from Theorem 3.1.~of \cite{beskos}
and illustrates the stability of $\textrm{ESS}_{(0,d)}(n,N)$ as $d\rightarrow\infty$.

\begin{prop} 
Assume conditions (A\ref{hyp:A}-\ref{hyp:B}) for the kernels $k_s^{(n)}$, $n\in\{1,\dots,p\}$. Suppose
that for each data index $n\in\{1,\dots,p\}$, $g(y_n,\cdot)\in\mathcal{B}_b(E)$.
Then, for any fixed $N>1$ and $n\geq 1$, $\textrm{\emph{ESS}}_{(0,d)}(n,N)$ converges in distribution to: 
\begin{equation*}
\frac{\big(\,\sum_{i=1}^N e^{Z^{i}}\,\big)^2}{\sum_{i=1}^N e^{2 Z^i}}
\end{equation*}
where $Z^{i} \stackrel{i.i.d.}{\sim}N(0,\sigma^2)$ with
$$
\sigma^2 = \sum_{k=1}^n \int_{0}^1
\pi_{\phi(u)}^{(k)}((\widehat{g}_{\phi(u)}^{(k)})^2-k_{\phi(u)}^{(k)}(\widehat{g}_{\phi(u)}^{(k)})^2)
\bigg[\frac{d\phi(u)}{du}\bigg]
d\phi(u)\ .
$$
In particular,
\begin{equation*}
\lim_{d\rightarrow\infty}
\mathbb{E}\,\big[\,\textrm{\emph{ESS}}_{(0,d)}(n,N)\,\big]
=\mathbb{E}\bigg[\frac{\big(\,\sum_{i=1}^N e^{Z^{i}}\,\big)^2}{\sum_{i=1}^N e^{2 Z^{i}}}\bigg]\ .
\end{equation*}
\end{prop}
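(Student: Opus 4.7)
The strategy is to reduce the problem to the single-block SMC CLT (Theorem 3.1 of \cite{beskos}, whose log-weight conclusion is recorded in Remark \ref{th:CLT1}) applied separately within each data-point segment, and then to combine the segments using uniform ergodicity. Since there is no resampling, the final log-weight of particle $i$ after $n$ data points decomposes telescopically as $\log w^{i,(n)} = \sum_{k=1}^{n} \Delta_k^i$, where $\Delta_k^i$ collects the $\lfloor d/p \rfloor$ weight increments generated during the annealing of data point $y_k$. Within segment $k$ the kernels $k_s^{(k)}$ preserve $\pi_s^{(k)}$ and satisfy (A\ref{hyp:A})--(A\ref{hyp:B}) by hypothesis, so Remark \ref{th:CLT1} applies verbatim to give, jointly iid across particles $i=1,\dots,N$, that $\tfrac{1}{d}$ times the centred $\Delta_k^i$ converges weakly to $N(0,\sigma_k^2)$, with $\sigma_k^2$ equal to the $k$-th summand in the stated $\sigma^2$.

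To splice the segments together I would invoke (A\ref{hyp:A}): after $\Theta(d)$ steps of a $(1,\theta,\varsigma)$-uniformly ergodic chain, each particle is at total-variation distance $O(\theta^d)$ from its invariant measure. Hence at the start of segment $k$ the joint law of the $N$ particle positions differs by at most $O(N\theta^d)$ from the $N$-fold product of $\pi_0^{(k)} = \pi_1^{(k-1)}$, which is precisely the initial dynamics assumed by Remark \ref{th:CLT1} for segment $k$. A routine coupling to auxiliary particles started exactly from $\pi_0^{(k)}$ then yields joint weak convergence of the centred standardised log-weights $(\tfrac{1}{d}\log w^{i,(n)} - c_{d,n})_{i=1}^N$ to iid $N(0,\sigma^2)$ variates with $\sigma^2 = \sum_{k=1}^n \sigma_k^2$; independence across $i$ is inherited from Remark \ref{th:CLT1}, while asymptotic independence of the segment contributions (for a fixed $i$) follows from the same coupling.

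The final step is a continuous mapping argument: $\textrm{ESS}_{(0,d)}(n,N)$ can be written as $(\sum_i e^{L^i})^2/\sum_i e^{2L^i}$ with $L^i = \tfrac{1}{d}\log w^{i,(n)} - c_{d,n}$, since the common centring cancels in the ratio, and this map is continuous and bounded in $[1,N]$. The stated weak limit then reads off, and convergence in expectation follows by dominated convergence. The main obstacle will be the coupling in the second step: one must propagate the approximation at each segment boundary through all $n$ segments without blow-up, controlling the joint law of the $N$ trajectories uniformly in $d$. Since $n$ and $N$ are held fixed while $d\to\infty$, the geometric rate afforded by (A\ref{hyp:A}) should absorb the bookkeeping, mirroring the technical core of \cite{beskos}.
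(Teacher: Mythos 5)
Your proposal is correct and coincides with the route the paper intends: the paper gives no written proof of this proposition, stating only that it ``follows directly from Theorem 3.1 of \cite{beskos}'', and your segment-wise telescoping of the log-weights, the ergodic splicing at segment boundaries, and the final continuous-mapping/bounded-convergence step are exactly the unpacking of that citation (the splicing and the cross-segment asymptotic independence you invoke are the content of Lemma \ref{lem:growth} and Proposition \ref{theo:asymp_indep} in the annealing case). The one caution is that within a segment the chain is time-inhomogeneous, so your ``$O(\theta^d)$ to the invariant measure'' claim should be read as the pair of total-variation limits in Lemma \ref{lem:growth}(i) (forgetting of the initial condition plus tracking of the slowly varying invariant measures) rather than a literal geometric bound for a single fixed kernel.
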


\noindent As for the case with annealing densities, there is a direct extension to the case where one resamples (see \cite{beskos}).
In addition, one can easily extend the results in Section \ref{sec:main_res} 
in the  data-point tempering case examined here. 
In connection to the filtering scenario, this is a class of densities that falls into the scenario of a state-space model with a deterministic dynamic on the hidden state (i.e.~only the initial state is stochastic and propagated deterministically; see e.g.~\cite{campillo}). This hints at algorithms for filtering which may stabilize as the dimension grows. As we shall see in Section \ref{sec:filtering}, it is not straight-forward to do this.

\section{Numerical Simulations}
\label{sec:simos}

We now present two numerical examples, to illustrate the practical implications of our theoretical results. It is noted that the state-space $E$ is not compact here, yet the impact of our results can still be observed. 

\subsection{Comparison of Annealing Schemes}\label{sec:nn}

We consider a target distribution comprised of $d$ i.i.d.\@ $N(0,1)$ co-ordinates.
The bridging densities are in this case:
\begin{equation}
\label{eq:avb}
\pi_{\phi(s)}(x) \propto \exp\big\{-\tfrac{1}{2}\phi(s)x^2\big\}\ .
\end{equation}
We will in fact consider two annealing schemes:
\begin{align}
\phi(s) & =  \phi_0 + (1-\phi_0)s\ ; \nonumber \\
\nu(s) & =  \tfrac{\phi_0 e^{\vartheta} - 1}{e^{\vartheta}-1} + \big(\tfrac{1-\phi_0}{e^{\vartheta} - 1}\big)e^{\vartheta s}\ . \label{eq:anna}
\end{align}
These are graphically displayed in Figure \ref{fig:asymp_prop} (a), with $\vartheta=5$. 

The purpose of investigating the two annealing schemes is as follows. In practical applications of SMC samplers we have observed that algorithms with slow initial annealing schemes can often out-perform those with faster ones (see Figure \ref{fig:asymp_prop} (a)). Thus, we expect scheme $\nu(s)$ to perform better than $\phi(s)$ w.r.t.~the expression for the asymptotic variance  \eqref{eq:general_variance}, hence deliver 
a lower relative $\mathbb{L}_2$-error for the estimation of the normalizing constant in high dimensions.
To obtain some analytically computable proxies for the asymptotic variances \eqref{eq:general_variance} we use the variances that one would obtain when $k_s(x,dx')\equiv \pi_s(dx')$, that is we substitute
$
\pi_{\phi(u)}(g^2) - \pi_{\phi(u)}(g)^2$ for  $\pi_{\phi(u)}(\widehat{g}_{\phi(u)}^2-k_{\phi(u)}(\widehat{g}_{\phi(u)})^2)
$
in \eqref{eq:general_variance}. In this scenario it is simple to show that, under the choice (\ref{eq:avb}),
we have that:
$$
\sigma^{2,\phi}_{0:1} = \frac{1}{2}\int_{0}^1 \bigg[\frac{1}{\phi(u)}\frac{d\phi(u)}{du}\bigg]^2 du\ .
$$
Figure \ref{fig:asymp_prop} (b) now plots the analytically available variances $\sigma^{2,\phi}_{0:1}$ and $\sigma^{2,\nu}_{0:1}$ (broken line) against $\phi_0$. The graph indeed provides some evidence that 
that the scheme $\nu(s)$ should give better results. This is particularly evident when $\phi_0$ is small; this is unsurprising as one initializes from $\Pi_{\phi_0}$, hence if $\phi_0$ is closer to 1  one expects a constant increase in the annealing parameter to be preferable to a slow initial evolution.

We ran SMC samplers with both annealing schemes with $N=10^4$ particles
and different dimension values $d\in\{10,25,50\}$.
The choice $\phi_0=\frac{1}{d}$ is used for both annealing schemes.
We used a Markov kernel $k_s(x,dx')$ corresponding to a Random-Walk Metropolis with proposal 
$y = x + N(0,\frac{1}{25\phi_0})$, thus the proposal variance is $1/25$ times the variance 
of the starting distribution of the bridge $N(0,\frac{1}{\phi_0})$; this is a choice that gave good acceptance 
probabilities over all $d$ bridging steps of the sampler.
Multinomial resampling was used when the effective sample size dropped below~$\frac{N}{2}$. 
We made 50 independent runs of the algorithm, and calculated the corresponding realisations of the log Ratio:    
\begin{equation}
\sum_{k=1}^{m^*(d)+1}\log\big(\tfrac{\gamma_{d,k}^N(1)}{\gamma_{d,k-1}^N(1)}\big)
\big/\log\big(\tfrac{\gamma_{d,k}(1)}{\gamma_{d,k-1}(1)}\big)
\label{eq:comp_criteria}
\end{equation}
(note that now the resampling times and their number are random)
and their sample variance.
This experiment was carried out for choices of dimension $d\in\{10,25,50\}$
and for both annealing schemes $\phi(s)$ and $\nu(s)$. 
The ratio of the obtained variances for the annealing sequence 
$\phi(s)$ over $\nu(s)$ are shown in Table~\ref{tab:rel_var}.
The results confirm
our theoretical findings above for the superiority of $\nu(s)$ over $\phi(s)$ 
based on the analytical expression for the asymptotic variance even for moderate $d$.

\begin{table}\centering
\begin{tabular}{|c|c|c|c|}
\hline
Dimension $d$ & 10  & 25 & 50 \\
\hline
Ratio of Variances (for $\phi(s)$ over $\nu(s)$) & 2.32  & 3.47 & 7.05 \\
\hline
\end{tabular}
\vspace{0.1cm}
\caption{The empirical variances (oven $50$ independent runs for the SMC sampler) of the log Ratio \eqref{eq:comp_criteria}  
using the annealing $\phi(s)$ over the corresponding variances for the annealing sequence $\nu(s)$.}
\label{tab:rel_var}
\end{table}

\begin{figure}
\centering
\subfigure[Annealing Schemes, $\phi_0=0.01$]{{\includegraphics[width=0.45\textwidth,height=6.5cm]{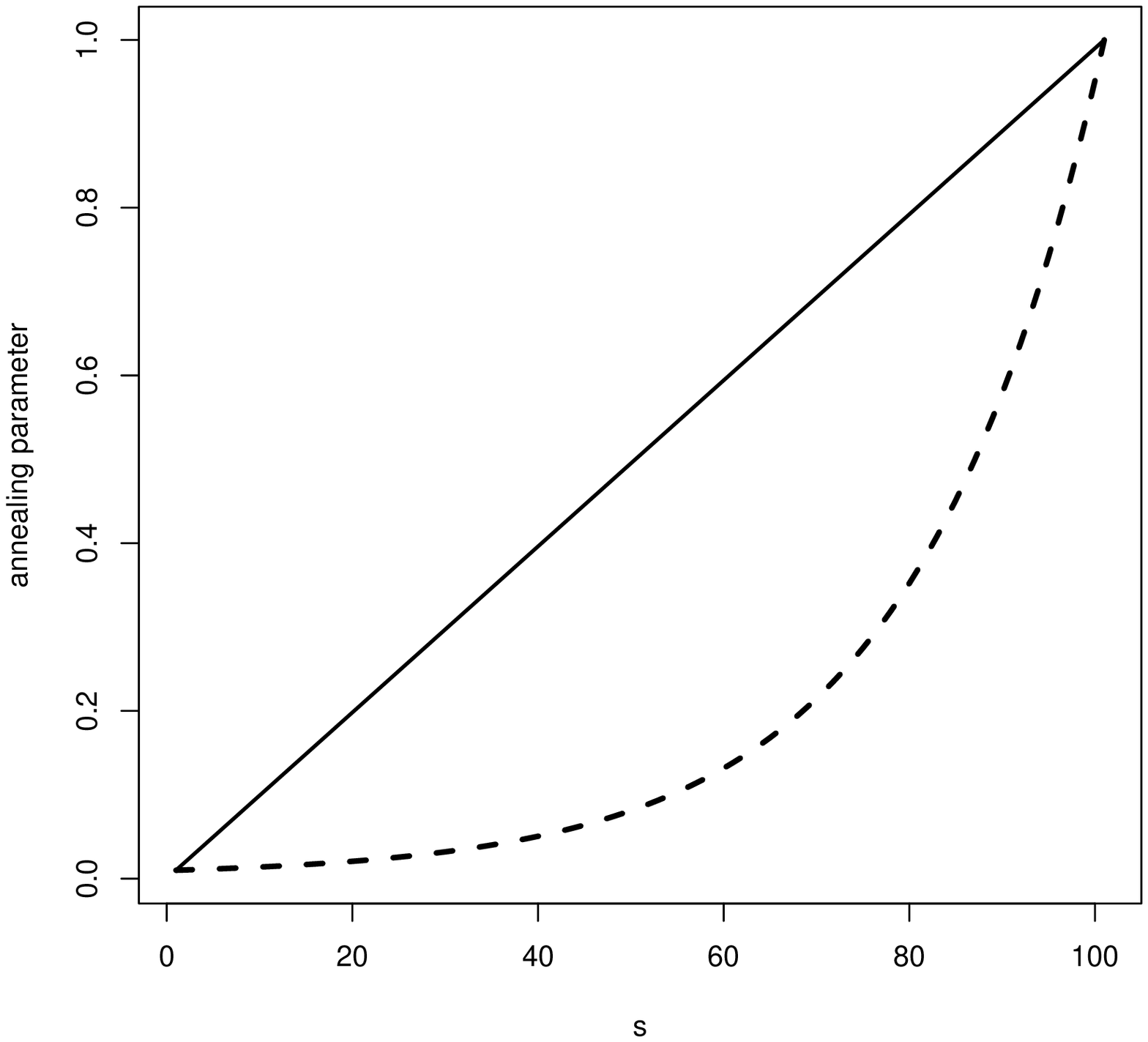}}}
\subfigure[Asymptotic Variances]{{\includegraphics[width=0.45\textwidth,height=6.5cm]{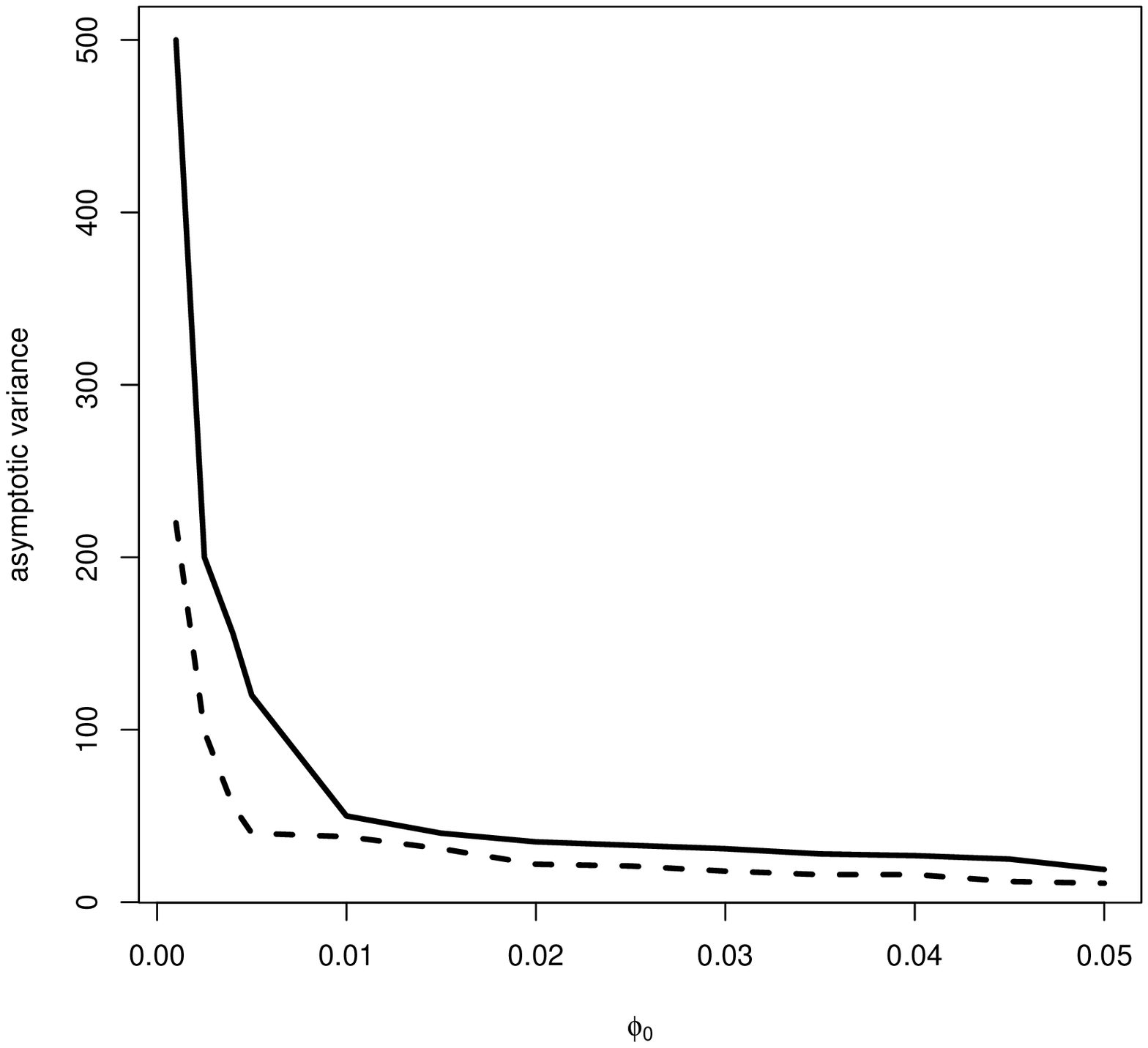}}}
\caption{Panel (a): Plot of the annealing parameters
$\phi(s)$ and $\nu(s)$ (broken line) defined in (\ref{eq:anna}) against time when $\phi_0=0.01$ and $\vartheta=5$.
Panel (b): A Plot of $\sigma^{2,\phi}_{0:1}$ and $\sigma^{2,\nu}_{0:1}$ (broken line) against $\phi_0$
for the case of exact sampling 
with $k_s(x,dx')\equiv \pi_s(dx')$ and the scenario (\ref{eq:avb}).}
\label{fig:asymp_prop}
\end{figure}

\subsection{Bayesian Linear Model}

We now consider the implications of main results
in the context of a Bayesian linear regression model (see \cite{holmes} for a book-length introduction as well as a wealth of practical applications).
This is a statistical model that associates a $p$-vector of responses, say $Y$, to 
a $p\times d$-matrix of explanatory variables, $X$, for some $p\geq 1$, $d\ge 1$.   
In particular:
$$
Y = X\beta + \epsilon
$$
where $\beta$ is a $d$-vector of unknown regression coefficients and 
$\epsilon\sim N_p(0,\mathbf{1}_p)$, with $\mathbf{1}_p$ the $p\times p$ identity matrix. A prior density on $\beta$ is taken as $N_d(0,\mathbf{1}_d)$
which yields a posterior density found to be the $d$-dimensional Gaussian $N_d((\mathbf{1}_d + X'X)^{-1}X'Y,
(\mathbf{1}_d + X'X)^{-1})$ where $X'$ denotes transpose. This is the target distribution for our SMC sampler.

The objective is to investigate the bound in Theorem \ref{theo:l2resampling} and the implications of 
Proposition~\ref{theo:prop_chaos}. Note that the target distribution is not of product structure here. The data-point tempering method (see Section \ref{sec:datapointtemp}) is also compared with annealing.
We consider the case $d=50$, $p=50$ with $N=10^3$; the data are all simulated. The annealing scheme $\nu$ in (\ref{eq:anna}) is adopted as well as the data-point tempering method with $\lfloor 10d/p\rfloor$ steps between the $p$ data point arrivals. 
Particles are propagated along the bridging densities via Markov kernels corresponding to Random-Walk Metropolis within Gibbs: 
the proposal for a univariate co-ordinate $x$ conditionally on the rest is $y = x + N(0,\frac{1}{16})$.
Dynamic resampling according to the ESS is employed (threshold 
$\frac{N}{2}$) as well as resampling at the last time step (see Theorem \ref{theo:l2resampling}).
For the annealing scheme the number of SMC steps is scaled as a multiple of $d$).
This increase at the number of time steps aims at illustrating the propagation of chaos (Proposition \ref{theo:prop_chaos}). 
 We fixed $d=50$
for computational cost considerations, but the SMC algorithms will easily stabilize for much larger $d$. 
\begin{table}\centering
\begin{tabular}{|c|c|c|c|}
\hline
Time-Steps & $d$  & $5d$ & $10d$ \\
\hline
Relative Error & 4.75  & 4.47 & 3.9 \\
\hline
\end{tabular}
\vspace{0.1cm}
\caption{The mean square error when estimating $\Exp[\,\beta_1\,|\,Y\,]$ for the (annealed) SMC sampler 
over 100 repeatitions relative to i.i.d.~sampling. 1000 particles are run and we resample at the final time step.
The error is calculated for different choices of number of time-steps for the SMC sampler.}
\label{tab:iid_var}
\end{table}

Each SMC method employed is repeatedly applied 100 times.
We calculate the mean square error for the estimation of $\Exp\,[\,\beta_1\,|\,Y\,]$ (analytically available here)
over the 100 replications and
we compare with the corresponding error under i.i.d.\@ sampling of the posterior of $\beta_1$; 
the results are reported in Table~\ref{tab:iid_var}.
In the table we can observe the increase in mean square error of the (annealed) SMC algorithm to i.i.d.~simulation. 
The increase here is not substantial as indicated by Theorem \ref{theo:l2resampling}, although one may need to take $d$ very large (and have an i.i.d.~target) before the bound in Theorem \ref{theo:l2resampling} is realized.
As the number of time-steps increases we can observe an improvement. This
is due to increase in diversity of the population, which improves the SMC estimate even when resampling at the end. For the data-point tempering method (the CPU time is roughly comparable with the case of $10d$ time-steps of the annealed SMC) the corresponding value of the relative mean square error is 7.5, which is slightly worse than the annealing scheme. In general, it is difficult to draw a definite conclusion on which scheme may be better.

\section{Filtering}\label{sec:filtering}

An important application of SMC methods is filtering. In the following 
we will look at the effect of dimension for several filtering algorithms.

\subsection{Set-Up}

Consider the discrete-time filtering problem; we have fixed observations $y_1,\dots,y_n$, with $y_k\in\mathbb{R}^{d_y}$
and a hidden Markov chain $X_{1},\dots,X_n$, with $X_k \in E^d$ such that the $Y_k$'s are conditionally independent of all other variables, given $X_k$.
We assume that the density w.r.t.~Lebesgue measure is
\begin{equation}
\label{eq:likelihood}
g(y_k|x_k) = \exp\bigg\{\sum_{j=1}^d h(y_k,x_{k,j})\bigg\}
\end{equation}
with $x_{k,j}\in E$ and $h:\mathbb{R}^{d_y}\times E \rightarrow\mathbb{R}$.
The hidden Markov chain is taken to be time-homogeneous with transition density w.r.t.~Lebesgue measure:
$$
F(x_k|x_{k-1}) = \prod_{j=1}^d f(x_{k,j}|x_{k-1,j})\ , \quad k\geq 1\ ,
$$
where $x_0=(x_{0,1},x_{0,2},\dots,x_{0,d})\in E^d$ is a given fixed point and $\int_{E}f(x'|x)dx'=1$. This is certainly a very specific model structure chosen, as in the previous part of the paper, for mathematical convenience. 
Clearly, our results depend on the given structure; it is certainly the case that for some other classes of state-space models standard SMC methods could stabilize with the dimension of the problem (as could be the case for instance 
when the lihelihood in (\ref{eq:likelihood}) involved only a few of the co-ordinates of $x$).

The objective in filtering is to compute for a $\pi$-integrable function $\varphi$:
\begin{equation}
\mathbb{E}\,[\,\varphi(X_n)\,|\,y_{1:n}\,] = \int_{E}\varphi(x) \pi_n(x_n|y_{1:n})dx_{n}
\label{eq:filter}
\end{equation}
where
$$
\pi_n(x_n|y_{1:n}) = \frac{\int_{E^{(n-1)d}} \prod_{k=1}^n g(y_k|x_k)F(x_k|x_{k-1}) dx_{1:n-1}}{\int_{E^{nd}} \prod_{k=1}^n g(y_k|x_k)F(x_k|x_{k-1}) dx_{1:n}}\ .
$$
It should be noted that one can re-write the filter via the standard prediction-updating formula:
\begin{align*}
\pi_n(x_n|y_{1:n}) & =  \frac{g(y_n|x_n) \pi_{n|n-1}(x_n|y_{1:n-1})}{p(y_n|y_{1:n-1})}\ ; \\
\pi_{n|n-1}(x_n|y_{1:n-1}) & =  \int_{E^d} F(x_n|x_{n-1})\pi_{n-1}(x_{n-1}|y_{1:n-1})dx_{n-1} \ ; \\
p(y_n|y_{1:n-1}) & =  \frac{\int_{E^{nd}} \prod_{k=1}^n g(y_k|x_k)F(x_k|x_{k-1}) dx_{1:n}}
{\int_{E^{(n-1)d}} \prod_{k=1}^{n-1} g(y_k|x_k)F(x_k|x_{k-1}) dx_{1:n-1}}\ .
\end{align*}

Typically, one cannot calculate \eqref{eq:filter}, so we resort to particle filtering. The most basic approach is to perform an SMC algorithm, which approximates the sequence of densities
$$
\pi(x_{1:n}|y_{1:n}) \propto \prod_{k=1}^n g(y_k|x_k)F(x_k|x_{k-1})
$$
by using the prior dynamics, characterised by $F$, as a proposal. This yields an un-normalized incremental weight
at time step $n$ of the algorithm which is $g(y_n|x_n)$. One can then resample or not.
We consider the scenario as the dimension of the state increases, when the data record is fixed (i.e.~we keep the time parameter $n$ and data fixed).

In reference to the works \cite{bengtsson,bickel,legland,snyder}, it is clear that the standard particle filter cannot be used in general to approximate filters with high-dimensional
states. 
An alternative, as mentioned for the data-point tempering scenario in Section \ref{sec:datapointtemp} (see also \cite{bickel,godsill}) is to insert an annealing SMC sampler between consecutive filtering steps, updating the entire trajectory $x_{1:n}\in E^{nd}$.

Assuming the i.i.d.~structure above, the model dynamics decompose over $d$ independent 
co-ordinates.
One could use MCMC kernels for the SMC samplers between arrivals of consecutive data-points, with each univariate kernel (the product of $d$ of them forming the complete kernel)
having invariant density:
$$
\pi_{p}^{(n)}(x_{1:n}) \propto \exp\Big\{\phi\big(\tfrac{p}{d}\big)\,h(y_n,x_n) + \log(f(x_n|x_{n-1}))+
\sum_{i=1}^{n-1} \{\,h(y_i,x_i) + \log(f(x_i|x_{i-1}))\,\}\,\Big\}\ ,
$$
for $x_{1:n}\in E^{n}$, $\phi(0)=0$, and $0\le p\le d$.
We write these marginal target densities at data-time $n$ on the continuum as $\pi^{(n)}_{s}$ with associated
 Markov kernels $k_{s}^{(n)}$ (which operate on spaces of increasing dimension). No resampling is added to the algorithm, but easily could be. The ESS (see \eqref{eq:ess_def} in Figure \ref{tab:SMC}) is denoted $\textrm{ESS}_{(0,d)}(n,N)$.
It is a straight-forward application of Theorem~3.1 of \cite{beskos} 
to get that, under assumptions (A\ref{hyp:A}-\ref{hyp:B}) for the kernels $k_{s}^{(n)}$, $n\geq 1$, and  
the condition that  for each $n\geq 1$,  $h(y_n;x_n)\in\mathcal{B}_b(E)$
(with the associated solution to Poisson's equation written $\hat{g}_{s}^{(n)}$),
for any fixed $N>1$, $n\geq 1$, $\textrm{ESS}_{(0,d)}(n,N)$ converges in distribution to 
$$
\frac{[\,\sum_{i=1}^N e^{Z^{i}}\,]^2}{\sum_{i=1}^N e^{2 Z^{i}}}
$$ 
where $Z^{i} \stackrel{i.i.d.}{\sim}N(0,\sigma^2)$ with
$$
\sigma^2 = \sum_{k=1}^n \int_{0}^1 \pi_{\phi(u)}^{(k)}((\hat{g}_{\phi(u)}^{(k)})^2-
k_{\phi(u)}^{(k)}(\hat{g}_{\phi(u)}^{(k)})^2)\bigg[\frac{d\phi(u)}{du}\bigg]
d\phi(u)\ .
$$
In particular,
\begin{equation}
\lim_{d\rightarrow\infty}
\mathbb{E}\,\big[\,\textrm{ESS}_{(0,d)}(n,N)\,\big]
=\mathbb{E}\bigg[\frac{[\,\sum_{i=1}^N e^{Z^{i}}\,]^2}{\sum_{i=1}^N e^{2 Z^{i}}}\bigg]\ .
\label{eq:limitingess}
\end{equation}
\noindent Thus the cost for this algorithm to be stable as $d\rightarrow \infty$, is $\mathcal{O}(n^2 d^2 N)$. 
This result is not surprising; one updates the whole state-trajectory and the stability proved in \cite{beskos} is easily imported into the algorithm. However, this algorithm is not online and will be of limited practical significance unless one has access to substantial computational power. 
The result is also slightly misleading:
it assumes that the MCMC kernels have a uniform mixing with respect to the time parameter of the HMM (see condition 
A\ref{hyp:A}). This is unlikely to hold unless one increases the computational effort, associated to the MCMC kernel, with $n$.

One apparent generalization would be to use SMC samplers at each data-point time to sample from the annealed smoothing densities, except freezing the first $n-1$ co-ordinates; it is then easily seen that one does not have to
store the trajectory. However, one can use the following intuition as to why this procedure will not work well (the following is based upon personal communication with Prof.~A.~Doucet). In the idealized scenario, one samples exactly from the final target density of the SMC sampler. In this case, the final target density is exactly the conditionally optimal proposal (see \cite{doucet}) and the incremental weight is: 
$$
\int_{E^d}g(y_n|x_n)F(x_n|x_{n-1})dx_n = 
\prod_{j=1}^d \int_{E}e^{h(y_n,x_{n,j})}f(x_{n,j}|x_{n-1,j})dx_{n,j}\ ,
$$
which will typically have exponentially increasing variance in $d$. We conjecture that similar issues arise for advanced SMC approaches such as \cite{chorin}. 

\subsection{Marginal Algorithm}

Due to the obvious instability of the above procedure, we consider another alternative, presented e.g.\@ in \cite{poyiadjis}.
This algorithm would proceed as follows. When targeting the filter at time 1, one adopts an SMC sampler, which as discussed before, will stabilize with the dimension. Then at subsequent time-steps, to consider the initial density of the SMC sampler:
\begin{equation}
\sum_{l=1}^N \overline{w}_{d}^{l,(n-1)}\prod_{j=1}^d f(x_{0,j}^{l',(n)}|x_{d,j}^{l,(n-1)})
\label{eq:first_density}
\end{equation}
where we have defined: 
$$
\overline{w}_{0:d-1}^{l,(n-1)} \propto \exp\Big\{\,\tfrac{1}{d}\sum_{j=1}^d\sum_{i=0}^{d-1} h(y_{n-1},x_{i,j}^{l,(n-1)})
\,\Big\}\ ;\quad \sum_{l=1}^N \overline{w}_{d}^{l,(n-1)} = 1 \ ,
$$
as the normalized weight.
One could then resample and apply SMC samplers on the sequence of target distributions (e.g.~with $n=2$)
\begin{equation}
\pi_k^{(n)}(x_{1:d}) \propto \exp\Big\{\,\phi\big(\tfrac{k}{d}\big)\sum_{j=1}^d h(y_n,x_{j})\,\Big\}
\Big[\,\sum_{l=1}^N \prod_{j=1}^d f(x_{j}|\check{x}_{d,j}^{l,(n-1)})\,\Big]\ , \quad k\in\{0,\dots,d\}\ ,\label{eq:pi_approx}
\end{equation}
where $\check{x}_{d,j}^{l,(n-1)}$ is the resampled particle when sampling from \eqref{eq:first_density}.

Using simple intuition, one might expect that this SMC algorithm may stabilize as the dimension grows. For example, if one could sample exactly from  $\pi_p^{(d)}(x_{1:d})$,
then the importance weight is exactly 1; there is no weight degeneracy problem: As proved in \cite{beskos}, under ergodicity assumptions, the SMC sampler will asymptotically produce a sample from the final target density. 

However,
the following result suggests that the algorithm will collapse unless
the number of particles grows exponentially fast with the dimension. Consider 
the case with $N_d$ particles, where $N$ depends on $d$ here,
and these are samples exactly from the previous filter; denote the samples $\check{X}_{1:d}^{1:N_d}$. Conditionally upon $\check{X}_{1:d}^{1:N_d}$,
sample $X_{1:d}^{1:N_d}$ exactly from the approximation $\pi_d^{(n)}$ \eqref{eq:pi_approx}. This presents the most optimistic scenario one could hope for. Denote below, $\overline{\varphi}_n(x)
=\varphi(x)-\pi_n(\varphi)$. We have the following result.

\begin{prop}\label{prop:marginal_doesnt_seem_to_work}
Consider the algorithm above so that for any $(x,x')\in E^2$ we have  $f(x|x')\in(\underline{f},\overline{f})$
for constants
$0<\underline{f}<\overline{f}<\infty$
and for any $(y,x)\in\mathbb{R}^{d_y}\times E$, $h(y|x)\in (\underline{h},\overline{h})$ for
$0<\underline{h}<\overline{h}<\infty$. Suppose $\varphi\in\mathcal{B}_b(E)$. Then, there exist an $M<\infty$ and $\kappa>1$ such that for any $d\geq 2$,
$j\in\{1,\dots,d\}$ and $N_d\geq 1$ we have
$$
\mathbb{E}\,\Big[\,\big(\tfrac{1}{N_d}\sum_{i=1}^{N_d}\overline{\varphi}_n(X_{j}^i)\,\big)^2\,\Big] \leq \frac{M\kappa^d}{N_d}\ .
$$
\end{prop}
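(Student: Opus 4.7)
My plan is to condition on the i.i.d.\ samples $\check{X}_d^{1:N_d}$ drawn from the exact filter at time $n-1$ and exploit the resulting factorisation. Conditional on $\check{X}_d^{1:N_d}$, the particles $X^{1},\dots,X^{N_d}$ are i.i.d.\ from the mixture $\pi_d^{(n)}$ in \eqref{eq:pi_approx}; writing $\mu_j^n$ for its (random) marginal on coordinate $j$, one gets
\begin{equation*}
\mathbb{E}\Big[\big(\tfrac{1}{N_d}\textstyle\sum_{i=1}^{N_d}\overline{\varphi}_n(X_j^i)\big)^{2}\Big]
=\tfrac{1}{N_d}\,\mathbb{E}\big[\mathbb{V}\textrm{ar}_{\mu_j^n}(\overline{\varphi}_n)\big]+\mathbb{E}\big[(\mu_j^n(\overline{\varphi}_n))^2\big].
\end{equation*}
Using $|\overline{\varphi}_n|\le 2\|\varphi\|_\infty$, the first term is at most $4\|\varphi\|_\infty^2/N_d$, so the real task is to show the squared bias satisfies $\mathbb{E}[(\mu_j^n(\overline{\varphi}_n))^2]\le M\kappa^d/N_d$.

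For the bias I would work directly with the mixture structure of $\pi_d^{(n)}$. Set
$$
Z_{l,j'}=\int e^{h(y_n,x)}f(x|\check{x}_{d,j'}^l)\,dx,\qquad V_l=\prod_{j'\ne j}Z_{l,j'},\qquad \beta_l=Z_{l,j},
$$
and
$$
\gamma_l=\int\big(\varphi(x)-\pi_n(\varphi)\big)\,e^{h(y_n,x)}f(x|\check{x}_{d,j}^l)\,dx.
$$
A direct marginalisation of \eqref{eq:pi_approx} then gives
$$
\mu_j^n(\overline{\varphi}_n)=\frac{\sum_{l=1}^{N_d}V_l\gamma_l}{\sum_{l=1}^{N_d}V_l\beta_l}.
$$
Two structural facts drive the argument: (i) the assumed product-in-coordinate form of the model is preserved by the exact filter, so by induction on $n$ the samples $\check{X}_{d,j'}^l$ are i.i.d.\ across both $l$ and $j'$; this makes $V_l$ (a function only of coordinates $j'\ne j$ of particle $l$) independent of $(\beta_l,\gamma_l)$ for each $l$ and $(V_l,\beta_l,\gamma_l)$ i.i.d.\ in $l$; (ii) a short computation using the prediction/update identity for $\pi_n$ yields $\mathbb{E}[\gamma_l]=0$.

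With these in hand, lower-bound the denominator pointwise using $h\in(\underline{h},\overline{h})$: since $Z_{l,j'}\ge e^{\underline{h}}$ for every $(l,j')$, one has $\sum_l V_l\beta_l=\sum_l\prod_{j'}Z_{l,j'}\ge N_d\, e^{d\underline{h}}$. For the numerator, independence across $l$ together with $\mathbb{E}[V_l\gamma_l]=\mathbb{E}[V_l]\,\mathbb{E}[\gamma_l]=0$ kill every off-diagonal contribution, leaving
$$
\mathbb{E}\Big[\big(\textstyle\sum_l V_l\gamma_l\big)^{2}\Big]=\sum_l\mathbb{E}[V_l^2]\,\mathbb{E}[\gamma_l^2]\le 4\,N_d\,\|\varphi\|_\infty^2\, e^{2d\overline{h}}.
$$
Dividing the numerator bound by the square of the denominator lower bound and absorbing the conditional-variance piece gives the claim with $\kappa=e^{2(\overline{h}-\underline{h})}>1$ and $M$ a constant multiple of $\|\varphi\|_\infty^2$.

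The main obstacle I anticipate is the clean verification of the product-measure structure of the exact filter $\pi_{n-1}$ and, tied to it, the identity $\mathbb{E}[\gamma_l]=0$: the whole argument hinges on $V_l$ being factorable away from $(\beta_l,\gamma_l)$ under the law of $\check{X}_d^l$. Once this is in place, the cancellation of cross-terms is the decisive quantitative step, because it is exactly this cancellation that produces the $1/N_d$ factor and prevents a crude $\mathcal{O}(\kappa^d)$ inflation; the remaining steps are elementary uniform bounds. The hypothesised two-sided bounds on $f$ appear redundant for this particular estimate and presumably serve only to guarantee regularity of $\pi_{n-1}$.
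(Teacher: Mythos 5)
Your proof is correct, and while it opens with the same decomposition as the paper, it resolves the dominant term by a genuinely different mechanism. The paper likewise conditions on the exact i.i.d.\ samples $\check{X}^{1:N_d}$ from $\pi_{n-1}$ and splits the error into the conditional Monte Carlo variance (handled by Marcinkiewicz--Zygmund, $\mathcal{O}(N_d^{-1})$ uniformly in $d$) plus the squared bias of $\pi_d^{(n)}$ against $\pi_n$; your ratio representation $\sum_l V_l\gamma_l/\sum_l V_l\beta_l$ is exactly the bias expression the paper manipulates. The divergence is in how that bias is controlled: the paper perturbs numerator and denominator separately around their population values $\pi_{n-1}f(\varphi e^h)\,\pi_{n-1}f(e^h)^{d-1}$ and $\pi_{n-1}f(e^h)^{d}$, applies the $C_2$-inequality twice, and reduces everything to the variance of the product $\prod_{l}f(e^h)(\check{X}_{l}^{1})$, which yields $\kappa=\pi_{n-1}(f(e^h)^2)/\pi_{n-1}(f(e^h))^2$; you instead center the numerator exactly --- the coordinate-wise product structure of the exact filter (which does follow by induction from the factorized transition and likelihood with deterministic initial point) gives $V_l\perp(\beta_l,\gamma_l)$, the prediction/update identity gives $\mathbb{E}[\gamma_l]=0$ once $\pi_n(\varphi)$ is read as the coordinate-$j$ filter marginal, and all cross terms vanish --- while bounding the denominator pointwise below by $N_d\,e^{d\underline{h}}$. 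Both routes extract the crucial $1/N_d$ from the same i.i.d.\ cancellation across $l$; yours is more self-contained and delivers a fully explicit $\kappa=e^{2(\overline{h}-\underline{h})}>1$, at the cost of a cruder rate (the paper's $\kappa$ is a relative second moment that can be much closer to $1$, which is the quantitatively meaningful statement about how fast $N_d$ must grow). Your closing observation is also accurate: neither argument actually uses the two-sided bounds on $f$ --- only $\int_E f(x'|x)\,dx'=1$ and the bounds on $h$ enter.
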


\begin{proof}
Throughout the proof, write $\pi^{(n)}_d$ as the approximated filter at time $n$. 
Then we have the simple decomposition:
$$
\Exp\,\Big[\,\big(\,\tfrac{1}{N_d}\sum_{i=1}^{N_d}\varphi(X_{j}^i)-\pi_n(\varphi)\,\big)^2\,\Big]
= 
\Exp\,\Big[\,\big(\,\frac{1}{N_d}\sum_{i=1}^{N_d}\varphi(X_{j}^i)- \pi^{(n)}_d(\varphi)
+ \pi^{(n)}_d(\varphi)-\pi_n(\varphi)\,\big)^2\,\Big]\ .
$$
On applying the $C_2-$inequality, we can decompose the error into the one of the Monte Carlo error of approximating expectations w.r.t.~$\pi^{(n)}_d$
and that of approximating the filter. 

Consider the first error. Conditioning on the i.i.d.~samples drawn from the filter at time $n-1$, one may apply the Marcinkiewicz-Zygmund inequality and use the i.i.d.~property of the algorithm to obtain  the upper bound:
$$
\big(\tfrac{M}{N_d}\big)\,\Exp\bigg[\frac{\frac{1}{N_d}\sum_{i=1}^{N_d}f(\varphi^2e^{h})(\check{X}_{j}^{i})\prod_{l\neq j}f(e^h)(\check{X}_{l}^{i})}{\frac{1}{N_d}\sum_{i=1}^{N_d}\prod_{l=1}^df(e^h)(\check{X}_{l}^{i})} - 
\bigg\{\frac{\frac{1}{N_d}\sum_{i=1}^{N_d}f(\varphi e^{h})(\check{X}_{j}^{i})\prod_{l\neq j}f(e^h)(\check{X}_{l}^{i})}{\frac{1}{N_d}\sum_{i=1}^{N_d}\prod_{l=1}^df(e^h)(\check{X}_{l}^{i})}\bigg\}^2
\bigg]\ .
$$
As $\varphi$ and $h$ are bounded, it is easily seen that the function in the expectation is uniformly bounded in $d$ and hence that one has an upper-bound of the form $M/N_d$.

Now to deal with the second error, this can be written:
$$
\Exp\,\bigg[\,\bigg(\,\frac{\frac{1}{N_d}\sum_{i=1}^{N_d}f(\varphi e^{h})(\check{X}_{j}^{i})\prod_{l\neq j}f(e^h)(\check{X}_{l}^{i})}{\frac{1}{N_d}\sum_{i=1}^{N_d}\prod_{l=1}^df(e^h)(\check{X}_{l}^{i})}
-
\frac{\pi_{n-1}f(\varphi e^h)\pi_{n-1}f(e^h)^{d-1}}{\pi_{n-1}f(e^h)^{d}}
 \,\bigg)^2\,\bigg]\ .
$$
The bracket can be decomposed into the form:
\begin{align*}
\bigg(\,\frac{\frac{1}{N_d}\sum_{i=1}^{N_d}f(\varphi e^{h})(\check{X}_{j}^{i})\prod_{l\neq j}f(e^h)(\check{X}_{l}^{i})}{\{\pi_{n-1}f(e^h)^{d}\}\frac{1}{N_d}\sum_{i=1}^{N_d}\prod_{l=1}^df(e^h)(\check{X}_{l}^{i})}
\bigg)\bigg(\pi_{n-1}f(e^h)^{d}-
\tfrac{1}{N_d}\,\sum_{i=1}^{N_d}\prod_{l=1}^df(e^h)(\check{X}_{l}^{i})\bigg) +\\ 
\frac{1}{\pi_{n-1}f(e^h)^{d}}
\bigg(\frac{1}{N_d}\sum_{i=1}^{N_d}f(\varphi e^{h})(\check{X}_{j}^{i})\prod_{l\neq j}f(e^h)(\check{X}_{l}^{i})-\pi_{n-1}f(\varphi e^h)\pi_{n-1}f(e^h)^{d-1}\,\bigg)\ .
\end{align*}
Applying the $C_2-$inequality again, we can break up the two terms. Using the lower bound on $h$ and upper-bounds on $\varphi$ and $h$ the $\mathbb{L}_2$-error
of the first term is upper-bounded by:
$$
\tfrac{\|\varphi\|_{\infty}^2e^{2\overline{h}}}{\pi_{n-1}f(e^h)^{2d}e^{2\underline{h}}}\,
 \Exp\,\bigg[\,\bigg(\,\pi_{n-1}f(e^h)^{d}-\tfrac{1}{N_d}\sum_{i=1}^{N_d}\prod_{l=1}^df(e^h)(\check{X}_{l}^{i})\,\bigg)^2\,\bigg]
= \tfrac{\|\varphi\|_{\infty}^2e^{2\overline{h}}}{N_d \pi_{n-1}f(e^h)^{2d}e^{2\underline{h}}}
\mathbb{V}\textrm{ar}\,\Big[\,\prod_{l=1}^df(e^h)(\check{X}_{l}^{1})\,\Big]\ .
$$
As the variance on the R.H.S.~is easily seen to be equal to $\pi_{n-1}(f(e^h)^{2d})-\pi_{n-1}(f(e^h))^{2d}$
one yields the upper-bound:
$$
\frac{M}{N_d}\bigg[\frac{\pi_{n-1}(f(e^h)^{2d})}{\pi_{n-1}(f(e^h))^{2d}}-1\bigg]\ .
$$
For the second term one can follow similar arguments to yield the upper-bound:
$$
\frac{1}{N_d\pi_{n-1}[f(e^h)]^{2d}}\,\Big[\,\pi_{n-1}[f(\varphi e^h)|^2]\pi_{n-1}[f(e^h)^2]^{(d-1)}
- 
\pi_{n-1}[f(\varphi e^h)]^2\pi_{n-1}[f(e^h)]^{2(d-1)}\,
\Big]
$$
from which one can easily conclude.
\end{proof}

\begin{rem}
On inspection of the proof, it is easily seen that one can 
write the error as
$
\tfrac{M}{N_d}(1+\kappa^d)
$
which represents two sources of error. The first is the Monte Carlo error due to estimating the marginal expectation w.r.t.~the approximation. This appears to be controllable for any $N_d$ converging to infinity. The second source of error is in approximating the filter, which seems to require a number of particles which will increase exponentially in the dimension; this is the drawback of this algorithm. 
\end{rem}

\begin{rem}
We remark that this is only an upper-bound, but we can be even more precise; if one considers the relative $\mathbb{L}_2$-error of the estimate of $p(y_n|y_{1:n-1})$ then this is equal to
$$
\frac{1}{N_d}\,\bigg[\,\frac{\pi_{n-1}(f(e^h)^2)^d}{\pi_{n-1}(f(e^h))^{2d}}-1\,\bigg]
$$
which will explode in the dimension, unless $N_d$ grows at an exponential rate. This is in contrast to the SMC sampler case in Section \ref{sec:main_res}, where one
can obtain an estimate of the normalizing constant, whose relative $\mathbb{L}_2$-error
stabilizes for any $N$.
\end{rem}

\subsection{Approximate Bayesian Computation (ABC)}\label{sec:abc_analysis}

In this section we consider SMC methods in the context of an ABC filter - an approximate filtering scheme which is of practical interest when evaluation of the likelihood function in the state-space model is intractable. We note in passing the connection of the ABC methods to the ensemble Kalman filter \cite{nott}, a full treatment of the latter is well beyond the scope of the present work. 

The idea of this approach, which is primarily adopted when:
\begin{itemize}
\item{The function $g(y|x_{1:d})$ is intractable, that is, one cannot evaluate it point-wise.}
\item{It is possible to simulate from $g(\cdot|x_{1:d})$ for any $x_{1:d}\in E^d$.}
\end{itemize}
In this scenario, standard SMC methods can be used to sample from an approximation of the smoothing density, of the form (for some $\epsilon>0$):
\begin{equation}
\pi_{\epsilon}(x_{1:n},u_{1:n}|y_{1:n}) \propto \prod_{k=1}^n \mathbb{I}_{\{u_k:|y_k-u_k|<\epsilon\}}(u_k)\, g(u_k|x_k)\,\overline{f}(x_k|x_{k-1})\label{eq:abc_approx}\ .
\end{equation}
Here, the idea is to sample, at each time-point, pseudo-data $u_k\in\mathbb{R}^{d_y}$; the density is non-zero when all of the simulated pseudo data lie within $\epsilon$ of the observed data (in $\mathbb{L}_1$-distance). Adopting an SMC algorithm with proposals $g\otimes \overline{f}$ yields an un-normalized incremental weight of the form $\mathbb{I}_{\{u_k:|y_k-u_k|<\epsilon\}}(u_k)$, which circumvents the evaluation of $g$. In the context of high-dimensional models, as discussed here, the SMC algorithm will collapse using the approaches in the previous sections. However, when $d_y$ is not large, one would expect that indeed, the SMC approximation of the ABC filter should be reasonably stable (in some sense). We quantify this with the following result.

We will assume here conditions (A1-A3) of \cite{jasra}. In particular, that:
\begin{equation*} 
\sup_{x_{1:d}}\|g(\cdot,x_{1:d})\|_{\infty}=e^{d\sup_{x_{1:d}}\|h(\cdot,x_{1:d})\|_{\infty}}\ ; 
\quad \sup_{x_{1:d}}|g(y,x_{1:d})-g(u,x_{1:d})| \leq M\,|y-u|\ .
\end{equation*}
The latter assumption will typically only hold when $E\subset \mathbb{R}$ with $E$ being compact. We consider only the scenario where no resampling is performed.
The expectation with respect to the SMC algorithm conditioned on the fixed data (which is suppressed from the notation) is written as $\mathbb{E}$.
Also, we write simply $\mathbb{I}_{A_{y_{1:n},\epsilon}}$ in the place of  $\mathbb{I}_{\{u_{1:n}^{1:N}:\sum_{j=1}^N \prod_{k=1}^n \mathbb{I}_{\{u_k\,:\,|y_k-u_k|<\epsilon\}}(u_k^j)>0\}}$.

\begin{prop}\label{prop:abc}
Given the set-up above, one has that for any $n\geq 1$, $p\geq 1$, $\varphi\in\mathcal{B}_b(E)$, $\epsilon>0$ there exists an $M(n,p,\varphi,\epsilon)>0$, and for $d\geq 1$ there exists $\kappa_n(d,\varphi)>0$ which does not depend upon $p,\epsilon$ such that for any $N\geq 1$:
\begin{equation}
\mathbb{E}\bigg[\,\bigg|\,\sum_{i=1}^N \frac{\prod_{k=1}^n \mathbb{I}_{\{u_k:|y_k-u_k|<\epsilon\}}(u_k^i)}
{\sum_{j=1}^N \prod_{k=1}^n \mathbb{I}_{\{u_k:|y_k-u_k|<\epsilon\}}(u_k^j)} \overline{\varphi}_n(X_{n,1}^i)\,\bigg|^p\,\mathbb{I}_{A_{y_{1:n},\epsilon}}\,\bigg]^{1/p} \leq \tfrac{M(n,p,\varphi,\epsilon)}{\sqrt{N}} + \kappa_n(d,\varphi)\epsilon
\label{prop:abc_eq}
\end{equation}
where $\kappa_n(d,\varphi)$, as $d\rightarrow\infty$,  converges to zero or diverges to infinity.
\end{prop}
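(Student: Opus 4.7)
The plan is to decompose the error as the sum of a Monte Carlo error of the SMC approximation of the ABC filter $\pi_{\epsilon,n}$ (the $x_n$-marginal of \eqref{eq:abc_approx}) and a bias error of $\pi_{\epsilon,n}$ against the true filter $\pi_n$. Writing $\widehat{\pi}_{\epsilon,n}^N(\varphi)$ for the SMC estimator appearing on the left-hand side of \eqref{prop:abc_eq}, a triangle inequality gives $\|\widehat{\pi}_{\epsilon,n}^N(\varphi)-\pi_n(\varphi)\|_p \leq \|\widehat{\pi}_{\epsilon,n}^N(\varphi)-\pi_{\epsilon,n}(\varphi)\|_p + |\pi_{\epsilon,n}(\varphi)-\pi_n(\varphi)|$, where the $\mathbb{L}_p$-norms are understood on the event $A_{y_{1:n},\epsilon}$. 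The two right-hand-side quantities are treated separately, the first yielding the $1/\sqrt{N}$ part of the bound and the second yielding the $\epsilon$-dependent part.

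For the Monte Carlo term I would appeal to the standard Feynman--Kac $\mathbb{L}_p$-bounds for SMC samplers with proposal $g\otimes \overline{f}$ and incremental weight $\mathbb{I}_{\{|y_k-u_k|<\epsilon\}}(u_k)$. Conditioning on $A_{y_{1:n},\epsilon}$ ensures the sum in the denominator is strictly positive, and the Marcinkiewicz--Zygmund-type arguments developed in \cite{jasra} under assumptions (A1--A3) deliver a bound of the order $M(n,p,\varphi,\epsilon)/\sqrt{N}$. The essential observation for this step is that the constant is independent of $d$: the only $d$-dependence in the particle dynamics enters through the prior proposal $\overline{f}$, whereas the incremental weights depend solely on the $d_y$-dimensional pseudo-observations $u_k$, so the stability/contraction coefficients underlying the $\mathbb{L}_p$-bound do not inflate with $d$. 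The deterioration as $\epsilon\downarrow 0$ shows up through quantities like $\mathbb{P}(|Y_k-U_k|<\epsilon)$ that appear in the normalizing constants.

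For the bias term I would use the Lipschitz hypothesis $\sup_{x_{1:d}}|g(y,x_{1:d})-g(u,x_{1:d})|\leq M|y-u|$ to write, for each $y$ and each $x_{1:d}$,
$$
\int_{\mathbb{R}^{d_y}} \mathbb{I}_{\{|y-u|<\epsilon\}}\, g(u|x_{1:d})\, du = |B_\epsilon|\, g(y|x_{1:d}) + r_\epsilon(y,x_{1:d}),
$$
with $|B_\epsilon|$ denoting the volume of the $\epsilon$-ball in $\mathbb{R}^{d_y}$ and $|r_\epsilon(y,x_{1:d})|\leq |B_\epsilon|\, M\, \epsilon$. Substituting this expansion into the numerator and denominator of the marginal of \eqref{eq:abc_approx}, the factors $|B_\epsilon|^n$ cancel, leaving an $O(\epsilon)$ perturbation of both numerator and denominator. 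Standard ratio manipulations (add and subtract the unperturbed terms, apply $|a/b-c/d|\le |a-c|/|b| + |c|\,|b-d|/(|b||d|)$) then yield $|\pi_{\epsilon,n}(\varphi)-\pi_n(\varphi)|\leq \kappa_n(d,\varphi)\,\epsilon$, with $\kappa_n(d,\varphi)$ expressible in terms of $\|\varphi\|_\infty$ and ratios of $\sup_{x_{1:d}} g$ and $\inf_{x_{1:d}} g$.

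The main obstacle, and the conceptual pay-off of the proposition, is controlling the $d$-dependence encoded in $\kappa_n(d,\varphi)$. Since $\sup_{x_{1:d}} g(y|x_{1:d}) = \exp\{d\,\sup_x h(y,x)\}$ and the infimum has the analogous exponential form, the ratios appearing in $\kappa_n(d,\varphi)$ are of the form $\exp\{d\,C_n\}$ for a signed $d$-independent constant $C_n$ depending on the data $y_{1:n}$; consequently $\kappa_n(d,\varphi)$ either diverges to infinity or shrinks to zero as $d\to\infty$, which is precisely the claimed dichotomy. In contrast to the SMC sampler analysis of Section \ref{sec:main_res}, it is the bias (not the Monte Carlo error) that refuses to stabilize in the dimension, which is the point the authors wish to emphasize about ABC-based high-dimensional filtering.
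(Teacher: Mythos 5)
Your decomposition into the SMC Monte Carlo error against the ABC filter $\pi_{n,\epsilon}$ plus the deterministic bias $|\pi_{n,\epsilon}(\varphi)-\pi_n(\varphi)|$ is exactly the paper's proof: the first term is handled by standard SMC $\mathbb{L}_p$ arguments (the paper points to part of the proof of Theorem 3.3 of \cite{beskos}) and the second via the Lipschitz condition on $g$ following Theorem 1 of \cite{jasra}, which yields a recursive expression for $\kappa_n(d,\varphi)$ whose exponential-in-$d$ form gives the stated dichotomy. Your direct expansion of the $\epsilon$-ball integral is only a mild variant of the recursive bias bound the paper imports from \cite{jasra}; the strategy and conclusions coincide.
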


\begin{proof}
Write
$$
\pi_{n,\epsilon}(\varphi):= \int \varphi(x_{n,1})\pi_{\epsilon}(x_{1:n},u_{1:n}|y_{1:n})dx_{1:n}du_{1:n}\ .
$$
Then, one can add and subtract this term in the $|\cdot|^p$ and apply Minkowski, leading to:
$$
\Exp\,\Big[\,\Big|\sum_{i=1}^N \frac{\prod_{k=1}^n \mathbb{I}_{\{u_k:|y_k-u_k|<\epsilon\}}(u_k^i)}
{\sum_{j=1}^N \prod_{k=1}^n \mathbb{I}_{\{u_k:|y_k-u_k|<\epsilon\}}(u_k^j)} \varphi(X_{n,1}^i)-\pi_{n,\epsilon}(\varphi)\Big|^p\mathbb{I}_{A_{y_{1:n},\epsilon}}\,\Big]^{1/p}
+ |\pi_{n,\epsilon}(\varphi)-\pi_{n}(\varphi)|\ .
$$
The first term is easily dealt with using standard proof techniques in Monte Carlo computation; see for example part of the proof of Theorem 3.3.~of \cite{beskos}.
Hence we need only treat the bias term. Following the arguments of Theorem 1 of \cite{jasra}, one can obtain that an upper bound on the bias is:
$$
\kappa_n(d,\varphi) = \tfrac{\epsilon\|\varphi\|_{\infty}}{\big(\int_{\mathbb{R}^2} e^{h(y_n,x)}f(x|x')\pi_{n-1,1}(x'|y_{1:n-1})
dxdx'\big)^d}\,\,\big(\,2L + \kappa_{n-1}(d,\varphi)e^{d\sup_{x}\|h(y_n,x)\|_{\infty}}\,\big)
$$
where $\pi_{n-1,1}$ is the filter at time $n-1$ marginalized to its first component and $\kappa_0(d,\varphi)=0$. 
\end{proof}

\begin{rem}
The above result is of interest for high-dimensional filtering. Essentially it establishes that the SMC approximation of the ABC approximation is stable for any $d\geq 1$, with computational cost of $\mathcal{O}(Nd)$; this is the first term on the R.H.S.~of \eqref{prop:abc_eq}. However, the deterministic component of the ABC approximation of the filter is likely to deteriorate as $d\rightarrow\infty$. For any $n\geq 1$ the sequence $(\kappa_n(d,\varphi))_{d\geq 1}$ is likely to diverge, as for example when $n=1$ it is proportional to
$$
\bigg(\int_{E} e^{h(y_1,x_1)} f(x_1|x_{0,1})dx_1\bigg)^{-d}.
$$
Whilst this is only  an upper-bound, we will see in Section \ref{sec:abc_simos} that the error seems to increase with $d$ in simulations.
\end{rem}

\begin{rem}
Due to the link between ABC and EnKF \cite{nott} and the bias of the EnKF \cite{legland1}, we conjecture that the EnKF will be subject to a similar behavior as for ABC (for non-linear models). That is, one can numerically approximate the approximation of the filter in high dimensions, but that the approximation collapses as the dimension of the state grows.
\end{rem}

\subsection{Numerical Example: Linear Gaussian State-Space Model}\label{sec:abc_simos}

In the following example we consider the ABC approximation error of linear Gaussian state-space model, $k\geq 1$:
\begin{align*}
Y_k & =  H X_k + V_k\ ; \\
X_k & =  X_{k-1} +  W_k\ ,
\end{align*}
where $d_y=1$, $H=(1,\dots,1)$ is a $1\times d$ vector, $V_k\stackrel{\textrm{i.i.d.}}{\sim}N_1(0,1)$, $X_0=(0,\dots,0)'$ and $W_k\stackrel{\textrm{i.i.d.}}{\sim}N_d(0,\mathbf{1}_d)$.
200 data points are generated from the model.

We run the SMC-based ABC algorithm in \cite{jasra} with the parameter $\epsilon$
(see (\ref{eq:abc_approx}) for the approximation of the smoothing density) fixed at 5, with $d\in\{10,40,200\}$. The first moment in the first dimension is estimated and the quantity on the L.H.S.~of \eqref{prop:abc_eq} (associated to this) is estimated with $N=1000$ with 50 repeats. The results can be observed in Figure \ref{fig:abc_err}.
where  the estimate of the L.H.S.~of \eqref{prop:abc_eq} over times 1 to 200 for $d=200$ against $d=10$ (black) and $d=40$ (broken blue) is plotted. It appears that the error grows with $d$. It is remarked that in general as the model is not i.i.d.~(in dimension) one cannot guarantee a uniform per-time step increase in the error. However, the relative increase in the error, w.r.t.~dimension, is consistent with our empirical experience with applying the algorithm.

\begin{figure}[h]\label{fig:abc_err}
\centering
{{\includegraphics[width=0.8\textwidth,height=5.5cm]{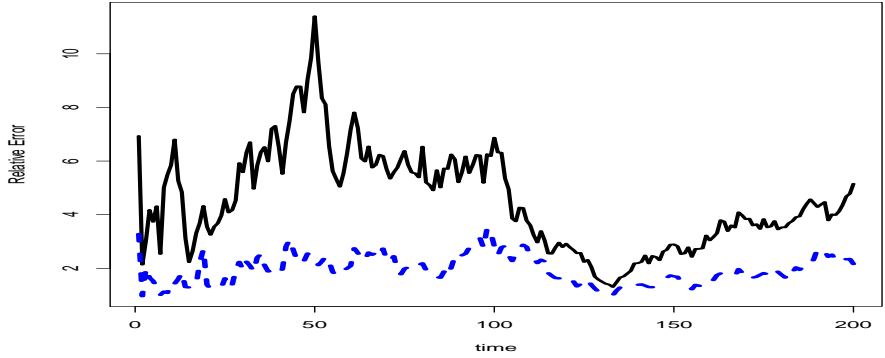}}}
\caption{A Plot of the Relative $\mathbb{L}_2$-Error of the ABC Filter Algorithm. The black line is the 
$\mathbb{L}_2$-error for estimating the first moment in the first dimension for 200 dimensions against 10 dimensions over 200 time steps. The broken blue line is similar except for 200 dimensions versus 40 dimensions.}
\end{figure}

\subsection{Discussion on High-Dimensional Filtering}

One of the motivations of our work was to investigate the issue of stability in high dimensions of SMC algorithms for filtering. As can be seen, there is still much scope for future work and this issue is far from resolved. In particular and in relation to this issue (as established in Section \ref{sec:abc_analysis}) what is currently missing in the literature is a concerted effort from probabilists, statisticians and applied mathematicians on the analysis of algorithms used in data assimilation (at least one exception is \cite{legland1}).

In relation to the above discussion,
one potentially very fruitful starting point, is the filtering of the Navier stokes equation; see \cite{brett}. In this context, one is given the 2-dimensional Navier-Stokes equation on a torus and the objective is to infer the initial condition of the PDE, given access to noisy data. In mathematical terms,
given a Gaussian prior on $x$ the initial condition, on Hilbert space $\mathcal{H}$
one seeks to deal with the density of the filter w.r.t.~the prior (see \cite[Theorem 3.2]{brett})
$$
\pi_k(x|y_{1:k}) \propto \exp\{-\Phi_k(y_{1:k},x)\}
$$
where $x\in\mathcal{H}$ and $\Phi_k:\mathbb{R}^k\times\mathcal{H}\rightarrow\mathbb{R}$ a potential function
associated to observed data $y_{1:k}\in\mathbb{R}$. This is an example of trying to filter a state with deterministic dynamics, albeit in infinite dimensions. Given the analysis in Section \ref{sec:datapointtemp}, it is likely that, by defining a data-point tempering SMC algorithm on the infinite dimensional space, one can consider the associated finite-dimensional algorithm (with state-vector in $\mathbb{R}^d$) and establish the stability as $d$ grows. This is assuming one has sufficiently mixing MCMC kernels; see \cite{pillai} for some ideas. This issue is a subject of current research jointly with Prof.~A.~Stuart.

More generally, one is interested in the issue of when the dynamics of the hidden state are Markovian. As noted here and in more details in \cite{bickel},
the underlying structure of the state-space model is important for establishing some sort of stability in high dimensions of a numerical filtering algorithm, SMC or otherwise. Akin to the standard problem of filter-stability in time (e.g.~\cite{vanhandel}), perhaps considerable research is needed with regards to dimension of the deterministic (true) filter, before a full analysis of numerical algorithms can be undertaken. However, it is not obvious how that analysis can be undertaken.

\section{Summary}\label{sec:disc}
 
In this paper we have considered the stability of SMC methods in high-dimensions.
In particular: the $\mathbb{L}_2$-error of marginal estimates, the $\mathbb{L}_2$-relative error of the normalizing constants and propagation of chaos properties.
The stability of some SMC-based filtering algorithms have also been investigated.
Some directions for future work are as follows. 

Firstly, in the context of normalizing constants, one direction is the consideration of rare events problems. Following  \cite{cerou1}, it is possible to obtain computational complexity results for some rare events problems. However, one can pose some rare-events problems in terms of the dimensionality.
Our results would, in many cases, not apply to this scenario and an extension to this case is important. In the case where one uses SMC samplers to sample from `twisted' target densities (see \cite{johansen}) the analysis adopted here can be applied. However, one would still need to verify that the path-sampling-based estimate will stabilize as $d$ grows.

Secondly, for normalizing constants, we have only considered the relative $\mathbb{L}_2$-error. It would be of interest to consider e.g.~logarithmic efficiency or higher-order errors. In addition, we have only considered one particular important functional that grows with $d$. More generally, when one can perform estimation with direct Monte Carlo, with a cost which is less than exponential in $d$, is it possible to do this also with SMC methods?
 
Thirdly, and rather importantly, is it possible to find any online SMC algorithm to solve the filtering problem in general, whose cost does not increase exponentially in the dimension? At present, our only suggestion is the accept/reject scheme in \cite{delmoral}. We are currently investigating the stability properties of this algorithm. It could be that in general, as noted above, one cannot obtain a stability result as in \cite{beskos}.

Finally, one could considerably weaken the the hypotheses made in this article. Given the number of exponential moments that we need to treat, it seems that multiplicative drift conditions \cite{kont} could be adopted; see \cite{whiteley1}.

\subsubsection*{Acknowledgements}

We would like to thank Arnaud Doucet and Anthony Lee for some useful discussions on this work. The work of Dan Crisan was partially supported by the EPSRC Grant No: EP/H0005500/1.
The third and fourth authors acknowledge assistance from a LMS research in pairs grant. The third author is supported by an ministry of education grant.

\appendix

\section{Proofs}


\subsection{Preliminary Results}
\label{app:0}

We summarize in  Lemmas \ref{lem:CLT} and \ref{lem:growth} below some results required in the proofs
obtained in \cite{beskos} or implied directy from results in that paper. Recall the definition 
of $G_{k,j}^{i}$ from (\ref{eq:defineGG}).
\begin{lem}[$G$-Asymptotics]
\label{lem:CLT}
Assume (A\ref{hyp:A}-\ref{hyp:B}) and $g\in\mathcal{B}_b(E)$.
\begin{itemize}
\item[i)] Under the starting distribution $X^{1:N}_{l_d(t_{k-1}(d)),1:d}\sim
\pi_{t_{k-1}(d)}^{\otimes Nd}$ we have that:
\begin{equation*}
\tfrac{G_{k,j}^{i}}{\sqrt{d}} \Rightarrow N(0,\sigma^{2}_{t_{k-1}:t_{k}})\ ;\quad  
\tfrac{1}{d}\,\sum_{j=1}^{d} G_{k,j}^{i}\Rightarrow N(0,\sigma^{2}_{t_{k-1}:t_{k}})\ .
\end{equation*}
%
%
\item[ii)] We have that $|\,\Exp_{\check{X}_{l_d(t_{k-1}(d)),j}^{i}}\,[\,G_{k,j}^{i}\,]\,|\le M$,
and for any $p\ge 2$:
\begin{equation*}
 \Exp_{\check{X}_{l_d(t_{k-1}(d)),j}^{i}}\,|G_{k,j}^{i}|^{p} \le M\, d^{\frac{p}{2}\vee 1}\  .
\end{equation*}

\item[iii)] Under either 
$\pi_{t_{k-1}(d)}^{\otimes Nd}$ as in i) or the actual particle distribution we have that:
\begin{equation*}
\Exp\,[\,e^{\frac{c}{d}\,\sum_{j=1}^{d} G_{k,j}^{i}}\,]\rightarrow 
\Exp\,[\,e^{c\,N(0,\sigma^{2}_{t_{k-1}:t_k})}\,]\equiv e^{\frac{1}{2}\,c^2\,\sigma^{2}_{t_{k-1}:t_k}}\ .
\end{equation*}
\item[iv)] We have that:
\begin{gather*}
\tfrac{1}{d}\sum_{j=1}^d\Exp_{\check{X}_{l_d(t_{k-1}(d)),j}^{i}}
[\,G^i_{k,j}\,]\rightarrow 0 \ ,\quad \textrm{in }\,\,\mathbb{L}_1 \ ;
 \\
\tfrac{1}{d^2}\,\sum_{j=1}^{d}\Exp_{\check{X}_{l_d(t_{k-1}(d)),j}^{i}}\big[\,\big(\,G_{k,j}^{i}-
\Exp_{\check{X}_{l_d(t_{k-1}(d)),j}^{i}}\,[\,G_{k,j}^{i}\,]\,\big)^2\,\big] \rightarrow \sigma^{2}_{t_{k-1}:t_k}
\ ,\quad \textrm{in }\,\,\mathbb{L}_1 \ .
\end{gather*}
\end{itemize}
\end{lem}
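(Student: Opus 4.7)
The plan is to treat all four parts within a common framework built on the Poisson equation and martingale arguments applied to the slowly-varying Markov chain $(X_{n,j}^i)_n$; what is packaged here as a lemma is essentially a reorganisation of results in \cite{beskos} adapted to the present notation. Under (A\ref{hyp:A}) the Poisson identity $g(x)-\pi_u(g) = \widehat g_u(x) - k_u(\widehat g_u)(x)$ admits a bounded solution with $\|\widehat g_u\|_\infty \le 2\|g\|_\infty/\theta$. Writing each summand in $G_{k,j}^i$ as $[g(X_n)-\pi_{\phi_n}(g)] + [\pi_{\phi_n}(g) - \pi_{t_{k-1}(d)}(g)]$, applying Poisson to the first bracket, and using that $\widehat g_{\phi_n}(X_n) - k_{\phi_n}(\widehat g_{\phi_n})(X_{n-1}) = \Delta M_n$ is a martingale difference for the natural filtration, one splits $G_{k,j}^i$ into a martingale part $M_d$, a boundary/telescoping remainder controlled by the Lipschitz bound (A\ref{hyp:B}) that is $O(1)$, and a deterministic drift which, combined with the fact that the centering in the definition of $G_{k,j}^i$ is $\Exp_{\pi_{t_{k-1}(d)}}[g(X_n)]$ rather than $\pi_{\phi_n}(g)$, is $O(1)$ via geometric ergodicity.

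For (i)(a) I would apply the martingale CLT to $M_d/\sqrt d$. The predictable quadratic variation equals $(1-\phi_0)^2 \sum_n \Exp[(\Delta M_n)^2\mid \mathcal{F}_{n-1}]$; ergodic replacement justified by (A\ref{hyp:A}) converts this to $(1-\phi_0)^2 \sum_n \pi_{\phi_n}(\widehat g_{\phi_n}^2 - k_{\phi_n}(\widehat g_{\phi_n})^2) + o(d)$, and dividing by $d$ Riemann-approximates (\ref{eq:varia}), giving $\sigma^2_{t_{k-1}:t_k}$; the Lindeberg condition is automatic from $g\in\mathcal{B}_b(E)$. Part (i)(b) then uses the tensor structure of the kernels and initial distribution, which makes $(G_{k,j}^i)_{j=1}^d$ i.i.d.\ in $j$, each with variance $d\,\sigma^2_{t_{k-1}:t_k}(1+o(1))$; a classical i.i.d.\ CLT closes the argument. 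Part (iii) follows by combining (i)(b) with uniform integrability of $\exp\{(c/d)\sum_j G_{k,j}^i\}$; the i.i.d.-in-$j$ structure together with the $d^{p/2}$ moment bound from (ii) feeds Rosenthal's inequality and shows that all moments of $(1/d)\sum_j G_{k,j}^i$ are uniformly bounded in $d$, upgrading weak convergence to MGF convergence. Passing from the idealised starting distribution $\pi_{t_{k-1}(d)}^{\otimes Nd}$ to the actual particle dynamics uses the additional input from \cite{beskos} that the empirical coordinate distribution at time $l_d(t_{k-1}(d))$ is close to $\pi_{t_{k-1}}$.

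For the mean bound in (ii), I would couple two chains started at $x$ and at $\pi_{t_{k-1}(d)}$ driven by the same time-varying kernels: (A\ref{hyp:A}) yields $|\Exp_x[g(X_n)]-\Exp_{\pi_{t_{k-1}(d)}}[g(X_n)]|\le 2\|g\|_\infty(1-\theta)^{n-l_d(t_{k-1}(d))}$, and summing the geometric series bounds $|\Exp_x[G_{k,j}^i]|$ by a constant. The $\mathbb L^p$ bound follows from the Poisson decomposition: BDG applied to $M_d$ gives $\Exp|M_d|^p \le C_p \|\widehat g\|_\infty^p\, d^{p/2}$, and the telescoping remainder plus drift are $O(1)^p$. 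For (iv), conditional on the resampled particle at time $l_d(t_{k-1}(d))$, the coordinates $\check X_{l_d(t_{k-1}(d)),j}^i$ are i.i.d.\ across $j$, because resampling selects entire particles whose coordinates started i.i.d.\ and evolved through tensor-product kernels; hence the two averages in (iv) are Ces\`aro means of i.i.d.\ functions of these coordinates, and a strong law together with the uniform integrability supplied by (ii) delivers $\mathbb L^1$ convergence to $\Exp_{\pi_{t_{k-1}}}\Exp_{X}[G_{k,j}^i]=0$ and to the Riemann limit $\sigma^2_{t_{k-1}:t_k}$ respectively.

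The main obstacle is the subtle cancellation between the centering drift $\pi_{\phi_n}(g) - \pi_{t_{k-1}(d)}(g)$, which is individually $O(d)$, and the geometric-ergodic approach of the chain to $\pi_{\phi_n}$; the key quantitative ingredient is the Lipschitz estimate (A\ref{hyp:B}) applied at the scale $1/d$, combined with the minorisation rate $\theta$ from (A\ref{hyp:A}). Carrying this through produces the $O(1)$ net drift needed in (ii) and the $o(d)$ error term needed in the martingale CLT for (i). This bookkeeping, together with the ergodic replacement turning time-dependent predictable quadratic variations into Riemann sums of functionals of $\pi_u$, is the portion of the argument that the authors prefer to import from \cite{beskos} rather than reproduce.
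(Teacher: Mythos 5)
Your overall strategy (Poisson equation, martingale plus telescoping remainder, martingale CLT for part (i), Dobrushin-type coupling and BDG for part (ii)) is a faithful reconstruction of the machinery that the paper simply imports by citation from \cite{beskos} (proof of Theorem 3.2, Theorem A.1, Proposition C.4 there), so for (i) and (ii) you are on the right track and arguably more self-contained than the paper itself. However, two of your steps have genuine gaps.

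First, in part (iii) you claim that uniformly bounded moments of every fixed order $p$ for $\frac{1}{d}\sum_j G_{k,j}^i$ (via Rosenthal) "upgrade weak convergence to MGF convergence." This does not follow: convergence of $\Exp[e^{cS_d}]$ requires uniform integrability of the \emph{exponential}, i.e.\ a bound $\sup_d\Exp[e^{c'S_d}]<\infty$ for some $c'>c$, and bounds on each polynomial moment with constants $C_p$ whose growth in $p$ you do not track cannot deliver this. The paper closes exactly this gap with Lemma \ref{lem:ui_exp}, which bounds the exponential moment directly: conditionally on the resampled starting points the coordinates are independent, so $\Exp[e^{\frac{c}{d}\sum_j G_{k,j}^i}]$ factorizes over $j$, and a per-coordinate second-order Taylor expansion using $|G_{k,j}^i/d|\le M$ and the part (ii) moment bounds gives a factor $1+M/d$ per coordinate, hence a uniform bound $(1+M/d)^d\le e^M$. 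You would need either this product argument or an explicit sub-Gaussian tracking of the Rosenthal constants.

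Second, and more seriously, your argument for part (iv) rests on the claim that the coordinates $\check{X}_{l_d(t_{k-1}(d)),j}^i$ are i.i.d.\ across $j$ after resampling. They are not: resampling selects a whole particle with probability proportional to its weight, and the weight is a function of \emph{all} $d$ coordinates, so the post-resampling coordinate vector is distributed as a weight-size-biased draw whose components are exchangeable but dependent. A plain strong law for Ces\`aro means of i.i.d.\ terms therefore does not apply, and the whole difficulty of (iv) (and of the "actual particle distribution" half of (iii)) is precisely to show that this size-biasing washes out of coordinate averages as $d\to\infty$. This is what Proposition C.4 and the proof of Theorem 4.1 of \cite{beskos} establish (and what the companion Lemmas \ref{lem:betaj_conv} and \ref{lem:cond_exp_conv} in this paper do in an analogous setting), typically by quantifying that any single coordinate has a vanishing influence on the weight. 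Without an argument of this type, your proof of (iv) does not go through.
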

\begin{proof}\hspace{0.2cm} \vspace{-0.8cm}\\ 
\begin{itemize}
 \item[i)] Both weak limit follows from the proof of 
Theorem 3.2 of~\cite{beskos}. Notice, that a minor difference is that instead of the \emph{fixed} times 
$\phi_0$ and $1$ considered in Theorem~3.2 of \cite{beskos} we now sum terms 
between the \emph{varying} time instances $t_{k-1}(d)$ and  $t_{k}(d)$. However, the proof for this case follows 
trivially from the proof for the fixed times due to the limits $t_{k-1}(d)\rightarrow t_{k-1}$ 
and  $t_{k}(d)\rightarrow t_k$. 
\item[ii)] All these results follow directly from  Theorem A.1 of \cite{beskos}.
\item[iii)]
This  follows from the CLT's in parts i) and ii) and the uniform integrability result
obtained in Lemma~\ref{lem:ui_exp}. 
\item[iv)] The first result corresponds to Proposition C.4 of \cite{beskos}. The second result is shown in the 
proof of Theorem 4.1 of \cite{beskos}.
\vspace{-0.5cm}
\end{itemize}
\end{proof}
\begin{lem}
\label{lem:growth}(Convergence of Marginal Laws)
Assume (A\ref{hyp:A}-\ref{hyp:B}) and $g\in\mathcal{B}_b(E)$. Then we have:
\begin{itemize}
 \item[i)] For a sequence of times $s(d)\in(\phi_0,1)$ with $t_{k-1}(d)<s(d)$ and $s(d)\rightarrow s\in(t_{k-1},1)$ 
and the collection of time steps $u(d)=(l_d(t_{k-1}(d))+1):l_d(s(d))$ we have that as $d\rightarrow\infty$:
\begin{gather*}
\|k_{u(d)}(\check{X}_{l_d(t_{k-1}(d)),1}^{i})-\pi_{t_{k-1}(d)}k_{u(d)}\|_{tv} \rightarrow 0\ ,\quad
 \textrm{in }\,\,\, \mathbb{L}_{1}\ ; \\
\|\pi_{t_{k-1}(d)}k_{u(d)}-\pi_{s(d)}\|_{tv} \rightarrow 0\ .
\end{gather*}
\item[ii)] For a sequence of times $s(d)\in(\phi_0,1)$ with $t_{k-1}(d)<s(d)$ and $s(d)\rightarrow s\in(t_{k-1},1)$ 
and the collection of time steps $u(d)=l_d(t_{k-1}(d)):l_d(s(d))$ we have that:
\begin{equation*}
\big(\,\overline{w}^{1:N}_{u(d)},\,X^{1:N}_{l_d(s(d)),1}\,\big)  \Rightarrow 
\big(\,\tfrac{e^{Z^{1:N}}}{\sum_{l=1}^{N}e^{Z^l}},\, Y^{1:N} \,\big)
\end{equation*}
where $\{Z^i\}_{i=1}^{N}$ are i.i.d.\@ copies from $N(0,\sigma^2_{t_{k-1}:s})$ and, independently,  
$\{Y^i\}_{i=1}^{N}$ are  i.i.d.\@ copies from $\pi_s$.
%
%
\end{itemize}
\end{lem}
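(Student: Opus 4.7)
\textbf{Proof proposal for Lemma \ref{lem:growth}.}

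\emph{Part (i), first claim.} The natural route is to rely on the one-step TV-contraction implied by (A\ref{hyp:A}): the $(1,\theta,\varsigma)$-small condition gives $\|\mu k_{n}-\nu k_{n}\|_{tv}\leq (1-\theta)\|\mu-\nu\|_{tv}$ for any $\mu,\nu\in\mathscr P(E)$ and any $n$. Iterating this along the composition $k_{u(d)}$ and taking $\mu=\delta_{\check X^{i}_{l_d(t_{k-1}(d)),1}}$, $\nu=\pi_{t_{k-1}(d)}$ yields the deterministic, sample-independent bound $(1-\theta)^{|u(d)|}$, where $|u(d)|=l_d(s(d))-l_d(t_{k-1}(d))$ behaves like $d\,(s-t_{k-1})/(1-\phi_0)\to\infty$. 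Dominated convergence then upgrades the pointwise bound to $\mathbb{L}_1$ convergence.

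\emph{Part (i), second claim.} The subtlety is that the invariant measures of the kernels being composed drift with $n$. I would set $\mu_j=\pi_{t_{k-1}(d)}k_{n_1}\cdots k_{n_j}$ and run the one-step recursion
\begin{equation*}
\|\mu_j-\pi_{s(d)}\|_{tv}\leq (1-\theta)\|\mu_{j-1}-\pi_{s(d)}\|_{tv}+\|\pi_{s(d)}k_{n_j}-\pi_{s(d)}\|_{tv},
\end{equation*}
and control the inhomogeneity term by invariance of $\pi_{\phi_{n_j}}$ under $k_{n_j}$:
\begin{equation*}
\|\pi_{s(d)}k_{n_j}-\pi_{s(d)}\|_{tv}\leq 2\|\pi_{\phi_{n_j}}-\pi_{s(d)}\|_{tv}=O(|\phi_{n_j}-s(d)|),
\end{equation*}
which is $O(1/d)$ uniformly in $j$ because $g\in\mathcal B_b(E)$ implies $s\mapsto\pi_s$ is Lipschitz in TV. Summing the geometric series in the recursion gives a total bound of order $1/d$, which tends to zero.

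\emph{Part (ii).} The two marginals are already settled: the log-weight CLT in Lemma \ref{lem:CLT}(i) controls $\tfrac{1}{d}\sum_{j=1}^{d}G_{k,j}^{i}\Rightarrow N(0,\sigma^{2}_{t_{k-1}:t_k})$ with the analogous statement for the partial range up to $l_d(s(d))$ (the limiting variance becoming $\sigma^{2}_{t_{k-1}:s}$), and part (i) above together with $s(d)\to s$ gives $X_{l_d(s(d)),1}^{i}\Rightarrow \pi_s$. Across particles, post-resampling independence propagates through the independent MCMC dynamics, yielding joint convergence to i.i.d.\@ copies for each marginal; the continuous mapping theorem then transfers the unnormalised weights to $e^{Z^{i}}/\sum_l e^{Z^l}$. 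The real work is in the joint independence of the weight-block and the coordinate-1-position-block in the limit. I would isolate coordinate $1$ in the log-weight, writing
\begin{equation*}
\tfrac{1}{d}\sum_{j=1}^{d}G_{k,j}^{i}=\tfrac{1}{d}G_{k,1}^{i}+\tfrac{1}{d}\sum_{j=2}^{d}G_{k,j}^{i},
\end{equation*}
observing via Lemma \ref{lem:CLT}(ii) that $\tfrac{1}{d}G_{k,1}^{i}\to 0$ in $\mathbb{L}_2$ (hence in probability), while the second piece is, by the tensor-product structure $K_n=\otimes_j k_n$, generated by the coordinates $j\geq 2$ only, hence independent of $X_{l_d(s(d)),1}^{i}$. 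Combined with the marginal CLT (whose proof, inspected, still produces $N(0,\sigma^{2}_{t_{k-1}:s})$ when the $j=1$ term is removed) and an application of Slutsky, one obtains the claimed joint weak limit.

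\emph{Principal obstacle.} The contractive recursion of part (i) is routine once one has the Lipschitz estimate for $s\mapsto\pi_s$; the real balancing act is in part (ii), where one must simultaneously exploit (a) the $\sqrt d$-scaling provided by Lemma \ref{lem:CLT}(ii) to discard the single coordinate that correlates weight and position, and (b) the coordinatewise product form of the kernels to certify independence of the remaining block. Care is also required at the resampling step: the post-resampling particles are only conditionally i.i.d., but their $N$-fold joint law converges to $\pi_{t_{k-1}}^{\otimes N}$ as $d\to\infty$ (this is essentially the content of Proposition \ref{theo:prop_chaos} in the range just before resampling, applied one block earlier), which is what delivers joint convergence to i.i.d.\@ $Y^{1:N}$ and $Z^{1:N}$.
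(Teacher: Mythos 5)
The paper does not actually prove this lemma from scratch: both parts are dispatched by citation to Propositions A.1 and C.4 and the proof of Theorem 4.1 of \cite{beskos}, with the independence of the $Z$- and $Y$-blocks justified only by the remark that a single coordinate has a vanishing effect on the weights. Your proposal therefore reconstructs arguments the paper leaves implicit. For part (i) the reconstruction is the natural one and essentially sound: the Dobrushin contraction from (A\ref{hyp:A}) together with $|u(d)|\rightarrow\infty$ gives the first claim, and the inhomogeneous recursion gives the second. One slip: $\|\pi_{\phi_{n_j}}-\pi_{s(d)}\|_{tv}=O(|\phi_{n_j}-s(d)|)$ is \emph{not} $O(1/d)$ uniformly in $j$ --- for $j$ near the start of the block it is of order $s-t_{k-1}=O(1)$. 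The conclusion survives because the geometric weights suppress exactly those terms: $|\phi_{n_j}-s(d)|\leq M(J-j+1)/d$ and $\sum_{r\geq 0}(r+1)(1-\theta)^{r}<\infty$, so the total inhomogeneity contribution is still $O(1/d)$. Your isolation of coordinate $1$ in part (ii) is also fine: Lemma \ref{lem:CLT}(ii) gives $\mathbb{E}|G^i_{k,1}|^2\leq Md$, hence $\tfrac{1}{d}G^i_{k,1}\rightarrow 0$ in $\mathbb{L}_2$, which is precisely the paper's ``vanishing effect of a single coordinate'' remark made quantitative.

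The genuine gap is in your handling of the resampling step in part (ii). You assert that the $N$-fold joint law of the post-resampling particles converges to $\pi_{t_{k-1}}^{\otimes N}$, invoking Proposition \ref{theo:prop_chaos} ``one block earlier''. This is false, and the paper says so explicitly right after that proposition: if one considers the particles just after resampling, one cannot obtain asymptotic independence in $d$. Multinomial resampling with $N$ fixed assigns two particles the same ancestor with probability bounded away from zero, so the post-resampling joint law at any fixed coordinate retains mass on the diagonal and cannot converge to a product, no matter how chaotic the pre-resampling system is. The correct mechanism --- and the one the cited results of \cite{beskos} actually use --- is to condition on $\mathscr{F}^N_{t_{k-1}(d)}$: given the resampled configuration, the $N$ particles (and, within each particle, the $d$ coordinates) evolve independently; each particle's conditional position law converges in total variation to $\pi_{s(d)}$ \emph{uniformly over the starting point} by your part (i), and the conditional law of the standardised log-weight converges to the same Gaussian regardless of the starting point (this is what Lemma \ref{lem:cond_exp_conv} provides at the level of Laplace transforms). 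Because these conditional limits are deterministic and do not depend on the conditioning, the unconditional joint law converges to the product of the limits; no independence of the $\check{X}^{1:N}$ is needed, nor is it available. With that substitution your argument for part (ii) closes.
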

\begin{proof}
\hspace{0.2cm} \vspace{-0.8cm}\\ 
\begin{itemize}
\item[i)] The first result follows by the proof of Proposition C.4 of \cite{beskos}; the second result from Proposition A.1 
of \cite{beskos}.
\item[ii)] The weak convergence of the weights is analytically illustrated in the proof of 
Theorem 4.1 of~\cite{beskos}. The weak convergence of the positions of the Markov chain is proven 
in Proposition A.1 of~\cite{beskos}. The independence between the $Z^{1:N}$ and $Y^{1:N}$ 
limiting variables follows trivially 
from the fact that any single co-ordinate has a vanishing effect on the weights as $d\rightarrow \infty$.
\vspace{-0.5cm}
\end{itemize}
\end{proof}
%
%
%

%
%
%

\subsection{$\mathbb{L}_2$-Error}\label{app:l2proof}

\begin{proof}[Proof of Theorem \ref{theo:l2resampling}]
We begin by noting that, due to exhangeability of the particles:
\begin{equation}
\Exp\Big[\,\Big(\tfrac{1}{N}\sum_{i=1}^N[\,\varphi(\check{X}_{d,1}^i)-\pi(\varphi)\,]\Big)^2\,\Big]
= 
\tfrac{1}{N} \,\Exp\,[\,\{\,\overline{\varphi}(\check{X}_{d,1}^1)\,\}^2\,] + \big(\tfrac{N-1}{N}\big)\,
\Exp\,[\,\overline{\varphi}(\check{X}_{d,1}^1)\,\overline{\varphi}(\check{X}_{d,1}^2)\,]
\label{eq:first_decomp_l2}
\end{equation}
where we have set $\overline{\varphi}(x)=\varphi(x)-\pi(\varphi)$.
Starting with the first term on the R.H.S.~of \eqref{eq:first_decomp_l2}, and averaging over the 
resampling time, one has
\begin{equation*}
\tfrac{1}{N}\,\Exp\,[\,\{\,\overline{\varphi}(\check{X}_{d,1}^1)\,\}^2\,] = 
\tfrac{1}{N}\,\sum_{i=1}^N \Exp\,[\,\overline{w}^{i}_{u(d)}\,\{\,\overline{\varphi}(X_{d,1}^i)\,\}^2\,]
\end{equation*}
 where we have set  $u(d)=l_d(t_{m^*}(d)):d$. Recall that  $\overline{w}_{u(d)}^{i}$ 
denote the normalized weights. By the 
asymptotic independence result in Lemma \ref{lem:growth}(ii)
we have that
\begin{equation*}
\lim_{d\rightarrow\infty} \tfrac{1}{N}\,\sum_{i=1}^N \Exp\,[\,\overline{w}_{u(d)}^{i}\,\{\,\overline{\varphi}(X_{d,1}^i)\,\}^2\,]
=\tfrac{1}{N}\,\Exp\,\Big[\,\sum_{i=1}^N \tfrac{ e^{ Z^i } }{\sum_{l=1}^N e^{Z^l} }\,\{\,\overline{\varphi}(Y^i)\,\}^2\,\Big] = 
\tfrac{\mathbb{V}\textrm{ar}_{\pi}[\varphi]}{N}\ ,
\end{equation*}
where $\{Z^i\}_{i=1}^{N}$ are  i.i.d.~from $N(0,\sigma^2_{t_{m^*-1}:1})$
and, independently, $Y^1,\dots,Y^N$ i.i.d.~from $\pi$.
We now look at the second term on the R.H.S.~of \eqref{eq:first_decomp_l2}. 
Averaging over the resampling index and invoking again the asymptotic independence result of 
Lemma \ref{lem:growth}(ii) 
we have:
\begin{align}
\Exp\,[\,\overline{\varphi}(\check{X}_{d,1}^1)\,&\overline{\varphi}(\check{X}_{d,1}^2)\,] 
 = \sum_{i=1}^{N}\Exp\,[\,\overline{\varphi}^2(X_{d,1}^{i})\,(\overline{w}^{i}_{u(d)})^2\,]+\sum_{i\neq l}
\Exp\,[\,\overline{\varphi}(X_{d,1}^{i})\,\overline{\varphi}(X_{d,1}^{l})\,\overline{w}_{u(d)}^{i}\,
\overline{w}_{u(d)}^{l}\,]\rightarrow 
\nonumber \\
& \pi(\overline{\varphi}^2)\,\Exp\,\Big[\,
 \sum_{i=1}^N \tfrac{ e^{ 2Z^i } }{(\sum_{l=1}^N e^{Z^l})^2 }\,
 \Big] + 0 \equiv  N\,\pi(\overline{\varphi}^2)\,\Exp\,\Big[\,
 \tfrac{ e^{ 2Z^1 } }{(\sum_{l=1}^N e^{Z^l})^2 }\,
 \Big]
\label{eq:abc}
\end{align}
for random variables $\{Z^i\}_{i=1}^{N}$ as defined above (in the last calculation we took advantage of exhangeablity).
We have the decomposition (writing $\sigma^2\equiv\sigma^2_{t_{m^*-1}:1}$ for notational convenience):
\begin{align*}
 \tfrac{ e^{ 2Z^1 } }{(\sum_{l=1}^N e^{Z^l})^2 }
  =    \tfrac{1}{N^2}\,\tfrac{e^{2Z^1}}{e^{\sigma^2}} + e^{-\sigma^2}\,
\tfrac{ e^{ 2Z^1 } }{(\sum_{l=1}^N e^{Z^l})^2 }\,\Big(\, 
e^{\sigma^2} -  \big(\, \tfrac{\sum_{l=1}^{N}e^{Z^l}}{N}\, \big)^2 \,\Big)\ .
\end{align*}
We concentrate on the second term. Using  Holder inequality we have:
\begin{align*}
\Exp\,\Big[\,\Big|\tfrac{ e^{ 2Z^1 } }{(\sum_{l=1}^N e^{Z^l})^2 }\,\Big(\, 
e^{\sigma^2} -  \big(\, \tfrac{\sum_{l=1}^{N}e^{Z^l}}{N}\, \big)^2 \,\Big)\,\Big|\,\Big]
\le 
\Exp^{\frac{2}{3}}\,\big[\,\tfrac{ e^{ 3Z^1 } }{(\sum_{l=1}^N e^{Z^l})^3 }\,\big]\,\,
\Exp^{\frac{1}{3}}\,\Big[\,
\Big|\, 
e^{\sigma^2} -  \big(\, \tfrac{\sum_{l=1}^{N}e^{Z^l}}{N}\, \big)^2 \,\Big|^3
\,\Big]
\end{align*}
Setting  $Z^{(1)}:=\min_{1\le i\le N}Z^{i}$ we get that (using also Cauchy-Schwarz):
\begin{equation*}
\Exp\,\big[\,\tfrac{ e^{ 3Z^1 } }{(\sum_{l=1}^N e^{Z^l})^3 }\,\big]
\le \tfrac{1}{N^3}\,\Exp\,[\,e^{3Z^1-3Z^{(1)}}\,] \le 
\tfrac{1}{N^3}\,\Exp^{\frac{1}{2}}\,[\,e^{6Z^1}\,]\,\Exp^{\frac{1}{2}}\,
[\,e^{-6Z^{(1)}}\,]\ .
\end{equation*}
By standard results on order statistics the pdf of $Z^{(1)}$ is upper bounded 
by $N$ times the pdf of $N(0,\sigma^2)$. So, we have that:
\begin{equation*}
\Exp\,[\,e^{-6Z^{(1)}}\,] \leq N e^{18\sigma^2}\ .
\end{equation*}
By adding and subtracting $e^{\sigma^2}$ in the summand and multiplying the square, one can use Minkowski and the   Marcinkiewicz Zygmund inequality to obtain:
\begin{equation*}
\Exp^{\frac{1}{3}}\,\Big[\,
\Big|\, 
e^{\sigma^2} -  \big(\, \tfrac{\sum_{l=1}^{N}e^{Z^l}}{N}\, \big)^2 \,\Big|^3
\,\Big] \leq \frac{M e^{6\sigma^2}}{N^{1/2}}\ ,
\end{equation*}
for some $M<\infty$ that does not depend upon $N$ or $\sigma^2$.
Putting together the above arguments, we have shown that the right-hand part of the R.H.S.~of \eqref{eq:abc}, when $d\rightarrow\infty$, is upper-bounded by the quantity
$\mathbb{V}\textrm{ar}_{\pi}(\varphi)(\frac{1}{N}e^{\sigma^2}+
Me^{17\sigma^2}\frac{1}{N^{7/6}} )$
which completes the proof.
\end{proof}

\subsection{Normalizing Constants}
\label{app:nc}

\begin{proof}[Proof of Theorem \ref{theo:nc}]
By the expression of the normalized variance (and the fact that the different  particles are i.i.d.), one can re-center to rewrite:
\begin{equation*}
\mathbb{V}_2(\gamma_d(1)) = \mathbb{E}\,\big[\,\big(\tfrac{\overline{\gamma}^N_d(1)}{\overline{\gamma}_d(1)}-1\,\big)^2\,\big]
\end{equation*}
with
\begin{equation*}
\overline{\gamma}^N_d(1) = \tfrac{1}{N}\sum_{i=1}^N e^{\frac{1}{d}\sum_{j=1}^d
G_j^i}\ ; \quad
\overline{\gamma}_d(1)=  \mathbb{E}\,\big[\,e^{\frac{1}{d}\sum_{j=1}^d
G_j^1 }\,\big]\ ,
\end{equation*}
where we have now set
\begin{equation}
\label{eq:gg}
G_j^i = (1-\phi_0)\sum_{n=0}^{d-1}\big(\,g(x_{n,j}^i)-\mathbb{E}\,[\,g(X_{n,j}^i)\,]\,\big)
\end{equation}
and $i\in\{1,\dots,N\}$, $j\in\{1,\dots,d\}$.
We have that:
\begin{align}
\Exp\,\Big[\,\big(\tfrac{\overline{\gamma}^N_d(1)}{\overline{\gamma}_d(1)}-1\big)^2\,\Big]  &=  1 - \tfrac{2}{\overline{\gamma}_d(1)}\,\Exp\,[\,\overline{\gamma}^N_d(1)\,] + \tfrac{1}{\overline{\gamma}_d(1)^2}\,
\Exp\,[\,\overline{\gamma}^N_d(1)^2\,] \nonumber \\
& \equiv -1 + \tfrac{1}{\overline{\gamma}_d(1)^2}\,
\Exp\,[\,\overline{\gamma}^N_d(1)^2\,] \label{eq:eqeq}
\end{align}
where we have used the unbiasedness property (i.e.\@ $\Exp\,[\,\overline{\gamma}^N_d(1)\,] = \overline{\gamma}_d(1)$) of the normalizing constant, see e.g.\@ \cite{delmoral}.
We define
%
$Z^{i}_{d} = \frac{1}{d}\,\sum_{j=1}^{d}G_{j}^{i}$
%
for $G_{j}^{i}$ defined in (\ref{eq:gg}) and $1\le i \le N$.
Thus, due to $Z_{d}^{i}$'s being i.i.d., we have:
\begin{equation*}
\Exp\,[\,\overline{\gamma}^N_d(1)^2\,] = \tfrac{1}{N}\,\Exp\,[\,e^{2Z^{1}_{d}}\,] + (1-\tfrac{1}{N})\,\Exp^2\,[\,e^{Z^{1}_{d}}\,]\ .
\end{equation*}
%
%
%
By Lemma \ref{lem:CLT}(iii), applied when $t_{k-1}(d)\equiv \phi_0$ 
and $t_{k}(d)\equiv 1$, one has that:
\begin{equation*}
\Exp\,[\,e^{2Z^{1}_{d}}\,] \rightarrow \exp\{2\sigma_{\phi_0:1}^2\}\ ;\quad 
\Exp\,[\,e^{Z^{1}_{d}}\,] \rightarrow \exp\{\tfrac{1}{2}\sigma_{\phi_0:1}^2\}\ .
\end{equation*}
Using these limits in (\ref{eq:eqeq}) and recalling also that $\overline{\gamma}_d(1)\equiv \Exp\,[\,e^{Z^{1}_{d}}\,]$,
gives the required result.
\end{proof}
%
%
%
%
%
\begin{proof}[Proof of Theorem \ref{theo:nc1}]
Denote:
\begin{equation}
\overline{\gamma}_{d,k}^N(1) =  \tfrac{1}{N}\sum_{i=1}^Ne^{\frac{1}{d}
\sum_{j=1}^d G_{k,j}^i}\ ; \quad
\overline{\gamma}_{d,k}(1)=  \mathbb{E}_{\pi_{t_{k-1}(d)}^{\otimes Nd}}\big[\,
e^{\frac{1}{d}
\sum_{j=1}^d G_{k,j}^1}\,\big]\ ,
\label{eq:gg2}
\end{equation}
for the standardised $G_{k,j}^{i}$ in (\ref{eq:defineGG}).
We look at the relative $\mathbb{L}_2$-error:
$$
\mathbb{V}_2\Big(\prod_{k=1}^{m^*+1}\gamma_{d,k}(1)\Big) = 
\mathbb{E}\,\Big[\,\bigg(\,\prod_{k=1}^{m^*+1}\tfrac{\overline{\gamma}_{d,k}^N(1)}
{\overline{\gamma}_{d,k}(1)}-1\,\Big)^2\,\Big]\ .
$$
Using the unbiased property of normalising constants,  see e.g.\@ \cite{delmoral},  we have:
\begin{equation*}
\Exp\,\Big[\,
\big(\,\prod_{k=1}^{m^*+1}\tfrac{\overline{\gamma}_{d,k}^N(1)}{\overline{\gamma}_{d,k}(1)}-1\,\big)^2
\,\Big]
 = \Exp\,\Big[
\prod_{k=1}^{m^*+1}\tfrac{\overline{\gamma}_{d,k}^N(1)^2}{\overline{\gamma}_{d,k}(1)^2}\,\Big] - 1\ .
\end{equation*}
For notational convenience, we set:
\begin{equation*}
\Delta_{k,d}: =  \tfrac{\overline{\gamma}_{d,k}^N(1)^2}{\overline{\gamma}_{d,k}(1)^2}\ ;
\quad 
\delta_{k,d} := \Exp_{\pi_{t_{k-1}(d)}^{\otimes N}}\Big[\,\tfrac{\overline{\gamma}_{d,k}^N(1)^2}{\overline{\gamma}_{d,k}(1)^2}\,\Big]\ ;\quad 
\Delta_{1:k,d}= \prod_{q=1}^{k}\Delta_{q,d}\ ;\quad \delta_{1:k,d}= \prod_{q=1}^{k}\delta_{q,d}\ .
\end{equation*}
Following the definitions of $\overline{\gamma}_{d,k}^N(1)$ and $\overline{\gamma}_{d,k}(1)$
in \eqref{eq:gg2}, and exploiting independence among particles under $\pi_{t_{k-1}(d)}^{\otimes
 Nd}$, we have that:
\begin{align*}
 \Exp_{\pi_{t_{k-1}(d)}^{\otimes
 Nd}}\Big[\,\tfrac{\overline{\gamma}_{d,k}^N(1)^2}{\overline{\gamma}_{d,k}(1)^2}\,&\Big]
= \frac{\frac{1}{N}\,\Exp\,[\,e^{\frac{2}{d}\,\sum_{j=1}^{d} G^{1}_{k,j}}\,]+ \big(1-\frac{1}{N} \big)\Exp^2\,[\,e^{\frac{1}{d}\,\sum_{j=1}^{d} G^{1}_{k,j}}\,]}
{\Exp^2\,[\,e^{\frac{1}{d}\,\sum_{j=1}^{d} G^{1}_{k,j}}\,]}\nonumber \\ 
 &\rightarrow e^{-\sigma^{2}_{t_{k-1}:t_{k}}}\,\big[\,e^{2\sigma^{2}_{t_{k-1}:t_{k}}}\,\tfrac{1}{N} +
 (1-\tfrac{1}{N})\,e^{\sigma^{2}_{t_{k-1}:t_{k}}}\,\big] \ ,\label{eq:anteste}
\end{align*}
with the limit obtained from Lemma \ref{lem:CLT}(iii).
Therefore:
\begin{equation*}
 \delta_{1:(m^{*}+1),d} \rightarrow
e^{-\sigma_{\phi_0:1}^2} \prod_{k=1}^{m^{*}+1} \big[\,\tfrac{1}{N}\,e^{2\sigma^2_{t_{k-1}:t_k}}+(1-\tfrac{1}{N})\,e^{\sigma^2_{t_{k-1}:t_k}}\,\big] \ .
\end{equation*}
Thus, it suffices to show that the following difference goes to zero as $d\rightarrow\infty$:
\begin{equation*}
A_d: =\big|\,\Exp\,[\,\Delta_{1:(m^*+1),d}\,]
- \delta_{1:(m^*+1),d}\,\big|\ .
\end{equation*}
Now, note that a simple identity gives that:
\begin{equation*}
A_d = \bigg|\,\sum_{k=1}^{m^{*}+1}\Exp\,\big[\,
\Delta_{1:(k-1),d}
\big(\,\Exp\,[\,\Delta_{k,d}|\mathscr{F}_{t_{k-1}(d)}^N\,] - \delta_{k,d}\,\big)\,\big]\cdot  
\delta_{(k+1):(m^*+1),d}
\,\bigg|\ ,
\end{equation*}
under the conventions that $\Delta_{1:0,d}=\delta_{(m^*+2):(m^*+1)}=1$.
Applying Cauchy-Schwarz yields the following upper-bound:
\begin{align*}
\Exp\,\big[\,\Delta_{1:(k-1),d}\,
\big|\,\Exp\,[\,\Delta_{k,d}&|\mathscr{F}_{t_{k-1}(d)}^N\,] - \delta_{k,d}\,\big|\,\big] \le \\
&\Exp^{1/2}\big[\,
\Delta_{1:(k-1),d}^{2}\,\big]\cdot 
\Exp^{1/2}\,\big[\,
|\,\Exp\,[\,\Delta_{k,d}\mid \mathscr{F}_{t_{k-1}(d)}^N\,] - \delta_{k,d}\,|^{2}\,
\big]\ .
\end{align*}
Via Lemma \ref{lem:cond_conv} the second of the terms in the bottom line vanishes in the limit, so 
it suffices to show that 
 the first term in the bottom line is upper bounded uniformly in~$d$.
Using the Cauchy-Schwarz inequality, we have that:
\begin{equation*}
\Exp\big[\,
\Delta_{1:(k-1),d}^{2}\,\big] \le 
\prod_{q=1}^{k-1}\Exp^{1/2}\,[\,\Delta_{q,d}^{4}\,]\ .
\end{equation*}
Recalling the  definition of $\Delta_{k,d} =  \frac{\overline{\gamma}_{d,k}^N(1)^2}{\overline{\gamma}_{d,k}(1)^2}$ from (\ref{eq:gg2}), 
using triangle inequality for norms we have:
\begin{equation*}
 \Exp\,[\,\overline{\gamma}_{d,q}^N(1)^{8}\,] \le \Big( \tfrac{1}{N} 
\sum_{i=1}^{N} \Exp^{1/8}\,\big[\,
e^{ \frac{8}{d}\sum_{j=1}^{d} G_{q,j}^{i}}
\,\big]\,
\Big)^{8} 
\end{equation*}
%
%
%
Now, we complete via Lemma \ref{lem:ui_exp}.
%
\end{proof}
\begin{proof}[Proof of Proposition \ref{theo:asymp_indep}]
To simplify the notation we drop $i$ for the particle number and define:
\begin{equation*}
\mathcal{G}_{l,d} = \sum_{j=1}^d G_{l,j}\ ,
\end{equation*}
for $1\le l\le k$.
Our proof proceeds by induction. For $k=1$, the result follows by Lemma \ref{lem:CLT}(iii).
Assume that the result holds at time $k-1\ge 1$. Then we have the simple decomposition:
\begin{align}
\Exp\,[\,e^{\sum_{l=1}^{k}c_l\,\mathcal{G}_{l,d}/d} \,] = 
\Exp\,&\Big[\,
\Exp\,[\,e^{c_k\,\mathcal{G}_{k,d}/d}\,|\,\mathscr{F}_{t_{k-1}(d)}^N\,]\,
\big\{\,e^{\sum_{l=1}^{k-1}c_l\,\mathcal{G}_{l,d}/d}-
\Exp\,[\,e^{\sum_{l=1}^{k-1}c_l\,\mathcal{G}_{l,d}/d}\,]\,\big\}\,\Big]\nonumber \\ 
& +\Exp\,[\,e^{\sum_{l=1}^{k-1}c_l\,\mathcal{G}_{l,d}/d}\,]\,\,
\Exp\,[\,e^{c_k\,\mathcal{G}_{k,d}/d}\,]
\label{eq:ind_theorem_master}\ .
\end{align}
We begin by dealing with the first term on the R.H.S.~of \eqref{eq:ind_theorem_master}.
By Lemma \ref{lem:cond_exp_conv} we have that:
\begin{equation*}
\Exp\,[\,e^{c_k\,\mathcal{G}_{k}/d}\,|\,\mathscr{F}_{t_{k-1}(d)}^N\,] - 
\Exp_{\pi_{t_{k-1}(d)}^{\otimes d}}\,[\,e^{c_k \mathcal{G}_{k}/d}\,] \stackrel{\mathbb{P}}{\longrightarrow} 0\ ,
\end{equation*}
whereas from Lemma \ref{lem:CLT}(iii) we have: 
\begin{equation}
\label{eq:here}
\Exp_{\pi_{t_{k-1}(d)}^{\otimes d}}\,[\,e^{c_k \mathcal{G}_{k}/d}\,] \rightarrow e^{\frac{1}{2}\,c_k^2\,\sigma^2_{t_{k-1}:t_k}} \ .
\end{equation}
Moreover, by the induction hypothesis:
\begin{equation*}
 \big\{\,e^{\sum_{l=1}^{k-1}c_l\,\mathcal{G}_{l,d}/d}-
\Exp\,[\,e^{\sum_{l=1}^{k-1}c_l\,\mathcal{G}_{l,d}/d}\,]\,\big\}\, \Rightarrow
 \,e^{\sum_{l=1}^{k-1} c_l X_l} - e^{\frac{1}{2}\sum_{l=1}^{k-1}c_l^2\sigma^2_{t_{l-1}:t_l}}\ .
\end{equation*}
The expression in the expectation of the first term of \eqref{eq:ind_theorem_master} is uniformly integrable: indeed, careful and repeated (but otherwise 
straightforward) use of H\"older and Jensen inequalities will eventually give that:
\begin{align*}
\Big|\,&\Exp\,[\,e^{c_k\,\mathcal{G}_{k,d}/d}\,|\,\mathscr{F}_{t_{k-1}(d)}^N\,]\,
\big\{\,e^{\sum_{l=1}^{k-1}c_l\,\mathcal{G}_{l,d}/d}-
\Exp\,[\,e^{\sum_{l=1}^{k-1}c_l\,\mathcal{G}_{l,d}/d}\,]\,\big\}\,\Big|_{\mathbb{L}_{1+\epsilon}} \\
&\le M\,\prod_{l=1}^{k-1}\big(\,\Exp\,[\,e^{\sum_{l=1}^{k}c'_l\,\mathcal{G}_{l,d}/d}\,]\,
\big)^{1/(1+\delta_l)}
\end{align*}
for positive constants $c'_{1:k}$, $\delta_{1:k}$, $M$ independent of $d$.
 As a consequence, convergence in distribution 
implies also convergence of  expectations:
\begin{align*}
\Exp\,\Big[\,
\Exp\,[\,e^{c_k\,\mathcal{G}_{k,d}/d}\,|\,\mathscr{F}_{t_{k-1}(d)}^N\,]\,
\big\{\,e^{\sum_{l=1}^{k-1}c_l\,\mathcal{G}_{l,d}/d}-
\Exp\,[\,e^{\frac{1}{d}\sum_{l=1}^{k-1}c_l\,\mathcal{G}_{l,d}/d}\,]\,\big\}\,\Big] \\
\rightarrow \Exp\,[\,e^{c_k^2\sigma^2_{t_{k-1}:t_k}/2}\,\{\,e^{\sum_{l=1}^{k-1} c_l X_l} - e^{\frac{1}{2}\sum_{l=1}^{k-1}c_l^2\sigma^2_{t_{l-1}:t_l}}\}\,] \equiv 0\ .
\end{align*}
Now turning to the second term on the R.H.S.\@ of \eqref{eq:ind_theorem_master}, we work as follows:
\begin{align*}
\Exp\,[\,e^{c_k\,\mathcal{G}_{k,d}/d}\,]  = 
\Exp\,\Big[\,
&\Exp\,[\,e^{c_k\,\mathcal{G}_{k,d}/d}\,|\,\mathscr{F}_{t_{k-1}(d)}^N\,]\, - 
\Exp_{\pi_{t_{k-1}(d)}^{\otimes d}}[\,e^{c_k\,\mathcal{G}_{k,d}/d}\,]\,\Big] +
\Exp_{\pi_{t_{k-1}(d)}^{\otimes d}}[\,e^{c_k\,\mathcal{G}_{k,d}/d}\,]\\ 
&\rightarrow 0 + e^{\frac{1}{2}\,c_k^2\,\sigma^2_{t_{k-1}:t_k}}\ ,
\end{align*}
from Lemma \ref{lem:cond_exp_conv} and (\ref{eq:here}).
We can thus deduce by the induction hypothesis that
\begin{equation*}
\Exp\,[\,e^{\sum_{l=1}^{k-1}c_l\,\mathcal{G}_{l,d}/d}\,]\,\,
\Exp\,[\,e^{c_k\,\mathcal{G}_{k}/d}\,] \rightarrow e^{\frac{1}{2}\sum_{l=1}^{k}c_l^2\sigma^2_{t_{l-1}:t_l}}
\equiv \prod_{l=1}^k \Exp\,[\,e^{c_l Z^l}\,] 
\end{equation*}
which completes the proof.
\end{proof}

\begin{lem}
\label{lem:cond_conv}
Assume (A\ref{hyp:A}-\ref{hyp:B}) and $g\in\mathcal{B}_b(E)$. Then 
for any $\epsilon>0$, $N\geq 1$ and $1\leq k \leq m^*+1$:
\begin{equation*}
\Exp\,\big[\,\tfrac{\overline{\gamma}_{d,k}^N(1)^2}
{\overline{\gamma}_{d,k}(1)^2}\,\big|\,\mathscr{F}_{t_{k-1}(d)}^N\,\big]
-\Exp_{\pi_{t_{k-1}(d)}^{\otimes Nd}}\,\big[\,\tfrac{\overline{\gamma}_{d,k}^N(1)^2}{\overline{\gamma}_{d,k}(1)^2}\,\big]
\rightarrow 0 \ ,\quad   \textrm{in }\,\, \mathbb{L}_{1+\epsilon}\ .
\end{equation*}
\end{lem}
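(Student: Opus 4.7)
The plan is to exploit a double conditional-independence structure. Given $\mathscr{F}_{t_{k-1}(d)}^{N}$, the $N$ particles evolve independently under the Markov kernels, and within each particle the coordinate-wise product form of $K_{n}$ preserves independence across the $d$ coordinates inherited from the initial positions $X^{i}_{l_d(t_{k-1}(d)),1:d}$. Writing $U^{c}_{k,d}(x):=\mathbb{E}_{x}[e^{(c/d)G^{i}_{k,j}}]$ and $\pi_{k}:=\pi_{t_{k-1}(d)}$, this yields the exact factorisation
\begin{equation*}
\mathbb{E}\Big[\,\tfrac{\overline{\gamma}_{d,k}^{N}(1)^{2}}{\overline{\gamma}_{d,k}(1)^{2}}\,\Big|\,\mathscr{F}_{t_{k-1}(d)}^{N}\,\Big]=\tfrac{1}{N^{2}}\sum_{i=1}^{N}D^{i}+\tfrac{1}{N^{2}}\sum_{i\neq i'}B^{i}B^{i'},
\end{equation*}
where $B^{i}:=\prod_{j=1}^{d}\tfrac{U^{1}_{k,d}(X^{i}_{l_d(t_{k-1}(d)),j})}{\pi_{k}(U^{1}_{k,d})}$ and $D^{i}:=\prod_{j=1}^{d}\tfrac{U^{2}_{k,d}(X^{i}_{l_d(t_{k-1}(d)),j})}{\pi_{k}(U^{1}_{k,d})^{2}}$. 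Under $\pi_{k}^{\otimes Nd}$ the $Nd$ starting positions are i.i.d.\ $\pi_{k}$, and the same computation produces the deterministic counterpart $\tfrac{1}{N}(\pi_{k}(U^{2}_{k,d})/\pi_{k}(U^{1}_{k,d})^{2})^{d}+(1-\tfrac{1}{N})$.

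The central step is to establish $B^{i}\to 1$ and $D^{i}\to e^{\sigma^{2}_{t_{k-1}:t_{k}}}$ in probability, with the deterministic analogues sharing the same limits. I would Taylor-expand
\begin{equation*}
\log U^{c}_{k,d}(x)=\tfrac{c}{d}\,\mathbb{E}_{x}[G^{i}_{k,j}]+\tfrac{c^{2}}{2d^{2}}\,\mathbb{V}\textrm{ar}_{x}[G^{i}_{k,j}]+O(d^{-3/2}),
\end{equation*}
with the remainder controlled through the moment bounds $\mathbb{E}_{x}|G^{i}_{k,j}|^{p}\le Md^{p/2\vee 1}$ of Lemma \ref{lem:CLT}(ii). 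Summation over $j$ combined with Lemma \ref{lem:CLT}(iv)---which gives $\tfrac{1}{d}\sum_{j}\mathbb{E}_{X^{i}_{j}}[G^{i}_{k,j}]\to 0$ and $\tfrac{1}{d^{2}}\sum_{j}\mathbb{V}\textrm{ar}_{X^{i}_{j}}[G^{i}_{k,j}]\to\sigma^{2}_{t_{k-1}:t_{k}}$ in $\mathbb{L}_{1}$---then delivers $\sum_{j=1}^{d}\log U^{c}_{k,d}(X^{i}_{j})\to c^{2}\sigma^{2}_{t_{k-1}:t_{k}}/2$ in $\mathbb{L}_{1}$, hence $\prod_{j=1}^{d}U^{c}_{k,d}(X^{i}_{j})\to e^{c^{2}\sigma^{2}_{t_{k-1}:t_{k}}/2}$ in probability. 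Applying the identical expansion under $\pi_{k}^{\otimes d}$ (where the empirical sums are replaced by $d$ times the corresponding $\pi_{k}$-integrals) yields $(\pi_{k}(U^{c}_{k,d}))^{d}\to e^{c^{2}\sigma^{2}_{t_{k-1}:t_{k}}/2}$. Forming the ratios with $c=1,2$ identifies all limits as claimed.

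To upgrade convergence in probability to $\mathbb{L}_{1+\epsilon}$ for arbitrary $\epsilon>0$, I would apply conditional Jensen to obtain $(B^{i})^{p}\le \mathbb{E}[e^{(p/d)\sum_{j}G^{i}_{k,j}}|\mathscr{F}_{t_{k-1}(d)}^{N}]/(\pi_{k}(U^{1}_{k,d}))^{pd}$, together with its analogue for $D^{i}$, and then invoke the uniform exponential moment bounds of Lemma \ref{lem:ui_exp} to conclude $\sup_{d}\mathbb{E}[(B^{i})^{p}+(D^{i})^{p}]<\infty$ for every $p<\infty$. This uniform integrability promotes convergence in probability to convergence in $\mathbb{L}_{1+\epsilon}$, and the cross terms $B^{i}B^{i'}-1$ are controlled via the splitting $B^{i}B^{i'}-1=(B^{i}-1)B^{i'}+(B^{i'}-1)$ together with H\"older. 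The main obstacle is making the Taylor remainder rigorous in $\mathbb{L}_{1}$: one must verify that $\sum_{j=1}^{d}R_{d}(X^{i}_{j})\to 0$ in $\mathbb{L}_{1}$ for the cubic-and-higher remainder $R_{d}$, which is precisely where the $d^{p/2}$-scaling of higher $G$-moments supplied by Lemma \ref{lem:CLT}(ii) is needed to counterbalance the aggregation over the $d$ coordinates.
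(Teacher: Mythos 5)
Your proof is correct and follows essentially the same route as the paper's: the same conditional-independence factorisation over particles and coordinates, the same Taylor/logarithm expansion of the coordinate-wise product controlled by the moment bounds of Lemma \ref{lem:CLT}(ii) and the $\mathbb{L}_1$-limits of Lemma \ref{lem:CLT}(iv), and the same uniform-integrability upgrade via the exponential moment bound of Lemma \ref{lem:ui_exp}. The only cosmetic difference is that the paper organises the coordinate-wise expansion around the normalised differences $\beta_j(d)=\{\Exp_{\check{X}_{0,j}}-\Exp_{\pi}\}[e^{cG_j/d}]/\Exp_{\pi}[e^{cG_j/d}]$ (Lemmas \ref{lem:cond_exp_conv} and \ref{lem:betaj_conv}) rather than expanding the conditional and stationary factors separately as you do.
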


\begin{proof}
Due to conditional independence among particles given $\mathscr{F}_{t_{k-1}(d)}^N$, we have:
\begin{align}
&\Exp\,\big[\,\overline{\gamma}_{d,k}^N(1)^2\,\big|\,\mathscr{F}_{t_{k-1}(d)}^N\,\big] = \label{eq:aga}
\\ &= \tfrac{1}{N^2}\,\Big( \,\Exp\,\big[\,
\sum_{i=1}^N e^{\frac{2}{d}\sum_{j=1}^d G_{k,j}^i}|\,\mathscr{F}_{t_{k-1}(d)}^N\,\big] + 
\sum_{i\neq m}\Exp\,[\, e^{\frac{1}{d}\sum_{j=1}^d G_{k,j}^i} |\,\mathscr{F}_{t_{k-1}(d)}^N\,\big]
\,\Exp\,\big[\,e^{\frac{1}{d}\sum_{j=1}^d G_{k,j}^{m}}\,
\big|\,\mathscr{F}_{t_{k-1}(d)}^N\,\big]\,\Big) \ .\nonumber
\end{align}
%
%
Now, for any constant $c\ge 1$ we have $\sup_d\,\Exp_{\pi^{\otimes Nd}_{t_{k-1}(d)}}\big[\,e^{\frac{c}{d}\sum_{j=1}^d G_{k,j}^{i}}\,\big]<\infty$
from Lemma \ref{lem:ui_exp}, so it suffices to prove that for any constant $c\ge 1$, as $d\rightarrow\infty$:
\begin{equation}
\Exp\,\big[\,e^{\frac{c}{d}\sum_{j=1}^d G_{k,j}^{i}}\big|\mathscr{F}_{t_{k-1}(d)}^N\,\big] - \Exp_{\pi_{t_{k-1}(d)}^{\otimes Nd}}\big[\,e^{\frac{c}{d}\sum_{j=1}^d G_{k,j}^{i}}\,\big]
\rightarrow 0\ , \quad \textrm{in }\,\, \mathbb{L}_{2(1+\epsilon)}\ .
        \end{equation}
The factor of two in the norm arises as own has to use Cauchy-Schwarz to separate the product terms 
on the R.H.S.\@ of (\ref{eq:aga}). 
Now, Lemma \ref{lem:cond_exp_conv}  
established the above convergence in probability; this together with uniform integrability implied by Lemma \ref{lem:ui_exp} establishes the result. 
%
\end{proof}

\begin{lem}\label{lem:cond_exp_conv}
Assume (A\ref{hyp:A}-\ref{hyp:B}) and that $g\in\mathcal{B}_b(E)$. Then, we have that 
for any $N\geq 1$, $i\in\{1,\dots,N\}$,  $k\in\{1,\dots,m^*+1\}$ and $c\in\mathbb{R}$:
\begin{equation*}
\Exp\,\big[\,e^{\frac{c}{d}\sum_{j=1}^d G_{k,j}^i}\,\big|\,\mathscr{F}_{t_{k-1}(d)}^N\,\big]
-\Exp_{\pi_{t_{k-1}(d)}^{\otimes Nd}}\big[\,e^{\frac{c}{d}\sum_{j=1}^d G_{k,j}^i}\,\big]
\stackrel{\mathbb{P}}{\longrightarrow} 0\  .
\end{equation*}
\end{lem}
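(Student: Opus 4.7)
The plan is to establish the convergence by comparing both sides to a common Gaussian MGF limit. By Lemma \ref{lem:CLT}(iii), applied under the i.i.d.\@ initial dynamics $\pi_{t_{k-1}(d)}^{\otimes Nd}$, the second expectation on the left-hand side of the claim converges (non-randomly) to $e^{\frac{1}{2}c^{2}\sigma^{2}_{t_{k-1}:t_{k}}}$ as $d\rightarrow\infty$. It therefore suffices to prove that
\begin{equation*}
\mathbb{E}\bigl[\,e^{\frac{c}{d}\sum_{j=1}^{d}G_{k,j}^{i}}\,\bigm|\,\mathscr{F}_{t_{k-1}(d)}^{N}\,\bigr] \stackrel{\mathbb{P}}{\longrightarrow}\, e^{\frac{1}{2}c^{2}\sigma^{2}_{t_{k-1}:t_{k}}}\ .
\end{equation*}

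The starting point is that, conditional on $\mathscr{F}_{t_{k-1}(d)}^{N}$, particle $i$ evolves from the resampled positions $\check{X}^{i}_{l_d(t_{k-1}(d)),1},\ldots,\check{X}^{i}_{l_d(t_{k-1}(d)),d}$ using product-form Markov kernels $K_{n}=\prod_{j}k_{n}$. Consequently the random variables $G_{k,1}^{i},\ldots,G_{k,d}^{i}$ are conditionally independent given $\mathscr{F}_{t_{k-1}(d)}^{N}$, so the conditional MGF factorises and its logarithm equals $\sum_{j=1}^{d}\log \mathbb{E}[e^{cG_{k,j}^{i}/d}\mid \check{X}^{i}_{l_d(t_{k-1}(d)),j}]$. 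The idea is then a second-order cumulant (Taylor) expansion of each summand around $0$:
\begin{equation*}
\log \mathbb{E}[e^{cG_{k,j}^{i}/d}\mid \check{X}^{i}_{l_d(t_{k-1}(d)),j}]
= \tfrac{c}{d}\,\mathbb{E}_{\check{X}^{i}_{l_d(t_{k-1}(d)),j}}[G_{k,j}^{i}]
+ \tfrac{c^{2}}{2d^{2}}\,\mathbb{V}\textrm{ar}_{\check{X}^{i}_{l_d(t_{k-1}(d)),j}}(G_{k,j}^{i}) + R_{d,j}\ ,
\end{equation*}
where the remainder $R_{d,j}$ is bounded in terms of higher conditional moments of $G_{k,j}^{i}/d$. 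Summing over $j$, Lemma \ref{lem:CLT}(iv) delivers exactly what we need for the leading terms: the first sum vanishes in $\mathbb{L}_{1}$ and the second sum converges to $\tfrac{1}{2}c^{2}\sigma^{2}_{t_{k-1}:t_{k}}$ in $\mathbb{L}_{1}$, hence both in probability.

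The main obstacle is controlling the cubic and higher-order remainder $\sum_{j}R_{d,j}$ uniformly in the starting configuration, in probability. Using the bound $|\log(1+x)-x+\tfrac{x^{2}}{2}|\leq M|x|^{3}$ valid for $|x|$ bounded (together with $g\in\mathcal{B}_{b}(E)$ which keeps $\mathbb{E}[e^{cG_{k,j}^{i}/d}\mid\cdot]$ bounded away from $0$ and $\infty$ uniformly once $d$ is large), one reduces $|\sum_{j}R_{d,j}|$ to a quantity controlled by $\tfrac{1}{d^{3}}\sum_{j}\mathbb{E}_{\check{X}^{i}_{l_d(t_{k-1}(d)),j}}|G_{k,j}^{i}|^{3}$. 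By Lemma \ref{lem:CLT}(ii) this is $\mathcal{O}(d^{3/2}/d^{3})=\mathcal{O}(d^{-3/2})$ in $\mathbb{L}_{1}$ and therefore vanishes in probability. Combining the three contributions gives that $\log$ of the conditional MGF converges in probability to $\tfrac{1}{2}c^{2}\sigma^{2}_{t_{k-1}:t_{k}}$, and exponentiating (via the continuous mapping theorem) yields the claim. Matching this with the limit of the unconditional expectation under $\pi_{t_{k-1}(d)}^{\otimes Nd}$ from Lemma \ref{lem:CLT}(iii) completes the proof.
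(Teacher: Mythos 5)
Your proof is correct and rests on essentially the same machinery as the paper's: conditional independence across coordinates turns the conditional expectation into a product over $j$, a second-order Taylor/logarithm expansion reduces everything to the moment estimates of Lemma \ref{lem:CLT}(ii) and the $\mathbb{L}_1$ limits of Lemma \ref{lem:CLT}(iv), and the cubic remainder is killed by the bound $\mathbb{E}_{\check{X}}|G_{k,j}^i|^3\leq Md^{3/2}$. The bookkeeping differs slightly: you expand the conditional log-MGF in cumulants around zero and identify the common Gaussian limit $e^{\frac{1}{2}c^2\sigma^2_{t_{k-1}:t_k}}$ explicitly, matching it against Lemma \ref{lem:CLT}(iii) for the $\pi_{t_{k-1}(d)}^{\otimes Nd}$-started expectation, whereas the paper factors the difference as $\bigl(\prod_j\mathbb{E}_{\pi}[e^{cG_j/d}]\bigr)\bigl(\prod_j(1+\beta_j(d))-1\bigr)$ with $\beta_j(d)$ the relative perturbation of each factor from its $\pi$-started counterpart, and shows $\sum_j\beta_j(d)$ and $\sum_j\beta_j(d)^2$ vanish (Lemma \ref{lem:betaj_conv}); that route never requires identifying the limit of the conditional expectation itself. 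The two arguments are interchangeable here and consume the same inputs; yours is marginally more self-contained at the level of this lemma, while the paper isolates the per-coordinate perturbation estimates in a separate reusable lemma. One small slip in your remainder accounting: after summing over the $d$ coordinates, $\frac{1}{d^3}\sum_j\mathbb{E}_{\check{X}}|G_{k,j}^i|^3$ is $\mathcal{O}(d\cdot d^{3/2}/d^3)=\mathcal{O}(d^{-1/2})$, not $\mathcal{O}(d^{-3/2})$ --- still vanishing, so the conclusion is unaffected.
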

\begin{proof}
By the conditional independence along $j$, we have:
\begin{equation*}
\Exp\,\big[\,e^{\frac{c}{d}\sum_{j=1}^d G_{k,j}^{i}}\,\big|\,\mathscr{F}_{t_{k-1}(d)}^N\,\big]
= \prod_{j=1}^d \Exp_{\check{X}_{l_d(t_{k-1}(d)),j}}\big[\,e^{\frac{c}{d}\,G_{k,j}^{i}}\,]\ .
\end{equation*}
We will now omit various subscripts/superscripts to simplify the notation, using also
$\Exp_{\pi}\equiv \Exp_{\pi_{t_{k-1}(d)}}$ and $\Exp_{\check{X}_{0,j}}\equiv \Exp_{\check{X}_{l_d(t_{k-1}(d)),j}}$.
We can rewrite:
\begin{align}
&\Exp\,\big[\,e^{\frac{c}{d}\sum_{j=1}^d G_{j}}\,\big|\,\mathscr{F}^N\,\big]
-\Exp_{\pi^{\otimes Nd}}\big[\,e^{\frac{c}{d}\sum_{j=1}^d G_{j}}\,\big]  = \nonumber \\
&= \Big(\,\prod_{j=1}^d \Exp_{\pi}\,[\,e^{c\,G_{j}/d}\,]\,\Big)\,\Big[\,\prod_{j=1}^d\Big\{\,\tfrac{\{\,\Exp_{\check{X}_{0,j}}
-\Exp_{\pi}\,\}[\,e^{c\,G_{j}/d}\,]\,}{\Exp_{\pi}[\,e^{c\,G_{j}/d}\,]} + 1\,\Big\} - 1\,\Big]
\label{eq:main_diff_lem_exp}\ .
\end{align}
From Lemma \ref{lem:CLT}(iii) it follows that 
$
\prod_{j=1}^d \Exp_{\pi}\,[\,e^{c\,G_{j}/d}\,]\rightarrow e^{\frac{1}{2}\,c^2\,\sigma^{2}_{t_{k-1}:t_k}}$,
hence we can now concentrate on the second factor-term on the R.H.S.\@ of \eqref{eq:main_diff_lem_exp}.
We will replace the product with a sum using logarithms. To that end define:
\begin{equation*}
\beta_j(d) := 
\tfrac{\{\,\Exp_{\check{X}_{0,j}}
-\Exp_{\pi}\,\}[\,e^{c\,G_{j}/d}\,]\,}{\Exp_{\pi}[\,e^{c\,G_{j}/d}\,]}\ .
\end{equation*}
%
Note that since $g\in\mathcal{B}_b(E)$, 
we have that $G_j/d$ is bounded from above and below, so there exist an $\epsilon>0$ and $M>0$ such that:
\begin{equation}
\label{eq:bbb}
-1+\epsilon \leq \beta_j(d) \leq M < \infty\ .
\end{equation}
We need to prove that $e^{\sum_{j=1}^d\log(1+\beta_j(d))} - 
1\stackrel{\mathbb{P}}{\rightarrow} 0$.
We consider a second order Taylor expansion of the exponent:
\begin{equation}
\label{eq:156}
\sum_{j=1}^d\log(1+\beta_j(d)) = \sum_{j=1}^d\big\{\,\beta_j(d) - \tfrac{1}{2}\,\tfrac{1}{(1+\xi_j(d))^2}\,\beta_j^2(d)\,\big\}
\end{equation}
where $\xi_j(d)\in[\,0\wedge \beta_j(d) ,0\vee \beta_j(d)\,]$.
By Lemma \ref{lem:betaj_conv} we have that:
\begin{equation*}
\sum_{j=1}^d \beta_j(d) \stackrel{\mathbb{P}}{\rightarrow} 0 \ ; 
\quad 
\sum_{j=1}^d \beta^{2}_j(d) \stackrel{\mathbb{P}}{\rightarrow} 0\  .
\end{equation*}
Since $\xi_j(d)$'s are bounded due to (\ref{eq:bbb}), these two results imply via the Taylor expansion in (\ref{eq:156})
that also
$
 \sum_{j=1}^d\log(1+\beta_j(d)) \stackrel{\mathbb{P}}{\rightarrow} 0$.
Due to the continuity of the exponential function, this implies now that 
$e^{\sum_{j=1}^d\log(1+\beta_j(d))} - 1\Rightarrow  0 $
and the proof is now complete since weak convergence to a constant implies convergence in probability.
\end{proof}

\begin{lem}\label{lem:betaj_conv}
Assume (A\ref{hyp:A}-\ref{hyp:B}) and $g\in\mathcal{B}_b(E)$. Then we have that for any $N\geq 1$, $i\in\{1,\dots,N\}$, $k\in\{1,\dots,m^*+1\}$ and $c\in\mathbb{R}$:
\begin{align*}
&(i)\quad \quad \sum_{j=1}^{d} \frac{\{\,\Exp_{\check{X}_{l_d(t_{k-1}(d)),j}^i}-\,\Exp_{\pi_{t_{k-1}(d)}}\,\}[\,e^{\,c\,G_{k,j}^i/d}\,]}
{\Exp_{\pi_{t_{k-1}(d)}}[\,e^{c\,G_{k,j}^i/d}\,]}
\rightarrow 0 \ ,\quad \textrm{in }\,\,\mathbb{L}_1\ . \\
&(ii)\quad \quad \sum_{j=1}^{d} \bigg(\, \frac{\{\,\Exp_{\check{X}_{l_d(t_{k-1}(d)),j}^i}-\,\Exp_{\pi_{t_{k-1}(d)}}\,\}[\,e^{\,c\,G_{k,j}^i/d}\,]}
{\Exp_{\pi_{t_{k-1}(d)}}[\,e^{c\,G_{k,j}^i/d}\,]}\,\bigg)^2
\rightarrow 0 \ ,\quad \textrm{in }\,\,\mathbb{L}_1\ .
\end{align*}
\end{lem}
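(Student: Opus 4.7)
The plan is to second-order Taylor expand the exponentials and then match each resulting sum to one of the estimates in Lemma \ref{lem:CLT}. Because $g\in\mathcal{B}_b(E)$, the bound $|G_{k,j}^i|\le 2(1-\phi_0)\|g\|_\infty\cdot(l_d(t_k(d))-l_d(t_{k-1}(d)))=O(d)$ gives a deterministic bound $|cG_{k,j}^i/d|\le C_0$ uniform in $d,j,i$. In particular $e^{cG_{k,j}^i/d}$ is uniformly bounded and
\[
e^{cG_{k,j}^i/d}=1+\tfrac{c}{d}G_{k,j}^i+\tfrac{c^2}{2d^2}(G_{k,j}^i)^2+R_{k,j}^i(d),\qquad |R_{k,j}^i(d)|\le M\,|G_{k,j}^i|^3/d^3.
\]
Using $\Exp_{\pi_{t_{k-1}(d)}}[G_{k,j}^i]=0$, and writing $\{\Exp_{\check X}-\Exp_\pi\}$ for the difference of the two expectations in the statement, this splits the numerator in the fractions in (i), (ii) as
\[
\mathcal{N}_j(d):=\tfrac{c}{d}\,\Exp_{\check X}[G_{k,j}^i]+\tfrac{c^2}{2d^2}\,\{\Exp_{\check X}-\Exp_\pi\}[(G_{k,j}^i)^2]+\{\Exp_{\check X}-\Exp_\pi\}[R_{k,j}^i(d)].
\]
Moreover, by Lemma \ref{lem:CLT}(iii) the denominator $\Exp_{\pi_{t_{k-1}(d)}}[e^{cG_{k,j}^i/d}]\to e^{c^2\sigma_{t_{k-1}:t_k}^2/2}>0$, so it is uniformly bounded below for large $d$, and it suffices to control $\sum_j\mathcal{N}_j(d)$ for (i) and $\sum_j\mathcal{N}_j(d)^2$ for (ii).

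For (i), I would treat the three pieces of $\sum_j\mathcal{N}_j(d)$ separately. The first piece $\tfrac{c}{d}\sum_j\Exp_{\check X}[G_{k,j}^i]\to 0$ in $\mathbb{L}_1$ is exactly the first assertion of Lemma \ref{lem:CLT}(iv). For the second piece, expand $\Exp_{\check X}[(G_{k,j}^i)^2]=\Exp_{\check X}[(G_{k,j}^i-\Exp_{\check X}[G_{k,j}^i])^2]+(\Exp_{\check X}[G_{k,j}^i])^2$; by Lemma \ref{lem:CLT}(ii) the squared-mean correction is $O(1)$ uniformly so contributes $O(1/d)$ after dividing by $d^2$ and summing, while the centered part converges to $\sigma_{t_{k-1}:t_k}^2$ in $\mathbb{L}_1$ by Lemma \ref{lem:CLT}(iv). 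The $\pi$-expectation piece $\tfrac{1}{d^2}\sum_j\Exp_\pi[(G_{k,j}^i)^2]=\tfrac{1}{d}\Exp_\pi[(G_{k,j}^i)^2]$ also converges to $\sigma_{t_{k-1}:t_k}^2$, using the weak limit in Lemma \ref{lem:CLT}(i) together with the $\mathbb{L}_4$ bound from Lemma \ref{lem:CLT}(ii) (which gives uniform integrability of $(G_{k,j}^i/\sqrt d)^2$). So the second piece also vanishes in $\mathbb{L}_1$. The remainder piece is handled deterministically: $|\{\Exp_{\check X}-\Exp_\pi\}[R_{k,j}^i(d)]|\le M\,d^{-3}(\Exp_{\check X}|G_{k,j}^i|^3+\Exp_\pi|G_{k,j}^i|^3)\le M\,d^{3/2}/d^3$ by Lemma \ref{lem:CLT}(ii) with $p=3$, whence summing over $j$ gives $O(d^{-1/2})$.

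For (ii), the same decomposition combined with $(a+b+c)^2\le 3(a^2+b^2+c^2)$ reduces the problem to bounding three deterministic sums. Lemma \ref{lem:CLT}(ii) gives $|\Exp_{\check X}[G_{k,j}^i]|\le M$ so the first square sums to $(c^2/d^2)\cdot d\cdot M^2=O(1/d)$; the second square is $O(d^{-4}\sum_j(\Exp[(G_{k,j}^i)^2])^2)=O(d^{-4}\cdot d\cdot(Md)^2)=O(1/d)$ again from Lemma \ref{lem:CLT}(ii); the remainder squared sums to $O(d^{-6}\cdot d\cdot d^3)=O(1/d^2)$. All three vanish in the strongest sense and in particular in $\mathbb{L}_1$.

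The only nontrivial step is the reconciliation in the second-order term of part (i): Lemma \ref{lem:CLT}(iv) provides convergence of the \emph{centered} conditional variance, whereas $\mathcal{N}_j(d)$ involves the \emph{uncentered} second moment under $\Exp_{\check X}$ minus the variance under $\Exp_\pi$. The matching is achieved by the expansion above, combining the $\mathbb{L}_1$ limit from Lemma \ref{lem:CLT}(iv) with the boundedness from Lemma \ref{lem:CLT}(ii) to show the $(\Exp_{\check X}[G_{k,j}^i])^2$-correction is negligible, and the CLT plus uniform integrability (again from Lemma \ref{lem:CLT}(i)(ii)) to show $\tfrac{1}{d}\Exp_\pi[(G_{k,j}^i)^2]\to\sigma_{t_{k-1}:t_k}^2$; modulo this bookkeeping no new probabilistic input is required beyond what is summarized in Lemma \ref{lem:CLT}.
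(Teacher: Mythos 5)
Your proof is correct and follows essentially the same route as the paper: Taylor-expand the exponential, kill the linear term with Lemma \ref{lem:CLT}(iv), match the quadratic terms via the centred/uncentred decomposition together with the $\mathbb{L}_1$ limits and moment bounds of Lemma \ref{lem:CLT}(ii,iv), and absorb the remainder using $\Exp|G_{k,j}^i|^3\le M d^{3/2}$. The only cosmetic difference is that you carry an explicit cubic remainder, whereas the paper uses a second-order Lagrange form with the factor $e^{\xi_j(d)}$ and peels off $e^{\xi_j(d)}-1$ separately; both reduce to the same third-moment estimate.
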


\begin{proof}
To simplify the presentation, we drop many super/subscripts: that is, we write the quantity of interest as:
\begin{equation*}
\frac{\{\,\Exp_{\check{X}_{0,j}}-\Exp_{\pi}\,\}[\,e^{c\,G_{j}/d}\,]}
{\Exp_{\pi}\,[\,e^{c\,G_{j}/d}\,]} \ .
\end{equation*}
Note that
$\Exp_{\pi}\,[\,e^{c\,G_{j}/d}\,]\equiv \Exp_{\pi}\,[\,e^{c\,G_{1}/d}\,]$. 
Since $|\,g\,|$ is bounded, $|\,G_{1}/d\,|$ is also bounded, so $\Exp_{\pi}\,[\,e^{c\,G_{1}/d}\,]$
is lower and upper bounded by positive constants and can be ignored in the calculations.
We will be using the second-order Taylor expansion:
\begin{equation}
\label{eq:TTaylor}
e^{c\,G_j/d} = 1 + \tfrac{c\,G_{j}}{d} + \tfrac{1}{2}\,e^{\xi_j(d)}\,\big(\,\tfrac{c\,G_{j}}{d}\,\big)^2 \ , 
\end{equation}
where $\xi_j(d)\in[\,0\wedge \tfrac{c\,G_{j}}{d}\,, 0\vee \tfrac{c\,G_{j}}{d}\,]$. \vspace{0.2cm}\\
\noindent \emph{Proof of (i)}:\\
The $\mathbb{L}_1$-norm of the variable of interest is upper bounded by (recalling that $\Exp_{\pi}\,[\,G_j\,]
\equiv 0$):
\begin{equation*}
\Exp\,\Big|\sum_{j=1}^d \Exp_{\check{X}_{0,j}}
\big[\,\tfrac{c\,G_{j}}{d}\,\big]\,\Big| + \tfrac{c^2}{2}\,
\Exp\,\Big|\sum_{j=1}^d\big\{\,\Exp_{\check{X}_{0,j}}-\Exp_{\pi}\,\big\}\,\big[\,
e^{\xi_j(d)}\,\big(\,\tfrac{G_{j}}{d}\,\big)^2\,\big]\,\Big|\ .
\end{equation*}
The first term in this bound goes to zero by Lemma \ref{lem:CLT}(iv). Thus considering the second term, we have the trivial inequality (for convenience we set $\sigma^2\equiv \sigma^2_{t_{n-1}:t_n}$):
\begin{align}
&\Exp\,\Big|\sum_{j=1}^d\big\{\,\Exp_{\check{X}_{0,j}}-\Exp_{\pi}\,\big\}
\big[\,e^{\xi_j(d)}\big(\,\tfrac{G_{j}}{d}\,\big)^2\,\big]\Big| \leq 
\label{eq:master_eq} \\ 
&\Exp\,\Big|\sum_{j=1}^d \Exp_{\check{X}_{0,j}}\big[\,\,(e^{\xi_j(d)}-1)\big(\,\tfrac{G_{j}}{d}\,\big)^2\,\big]\,
\Big|
 + \Exp\,\Big|\sum_{j=1}^d \Exp_{\check{X}_{0,j}}\big[\,\big(\,\tfrac{G_{j}}{d}\,\big)^2\,\big] -\sigma^2\,\Big| + \big|\,\sigma^2- \tfrac{1}{d}\,\Exp_{\pi}\,[\,e^{\xi_1(d)}\,G_{1}^2\,]\,\big|
\ . \nonumber
\end{align}
Note that 
\begin{itemize}
\item{$|\xi_1(d)|<M$ (due to the boundedness assumption on $g$)}\ ;
\item{$\xi_1(d)\rightarrow 0$ in distribution (so also in $\mathbb{L}_{p}$ for any $p\ge 1$ due to
the above uniform bound)} \ ;
\item{$\Exp_{\pi}\,[\,G_1^2/d\,]\rightarrow \sigma^2$}\ ,
\end{itemize}
with the last two results following from 
Lemma \ref{lem:CLT}(i,ii). These results, together, imply that 
the last term on the R.H.S.\@ of \eqref{eq:master_eq} goes to zero. 
For the first term on the R.H.S.\@ of \eqref{eq:master_eq} we work as follows.
Since for each $j$, $|G_j/d|$ is bounded  
we have that  $|e^{\xi_j(d)}-1|\leq M\,|\xi_j(d)|\leq M\,|\frac{G_j}{d}|$. 
As a result, using the triangular inequality and then this latter bound we have that:
\begin{align*}
\Exp\,\Big|\sum_{j=1}^d \Exp_{\check{X}_{0,j}}\big[\,(e^{\xi_j(d)}&-1)\big(\,\tfrac{G_{j}}{d}\,\big)^2\,\big]\,\Big|
\le \tfrac{M}{d^3}\sum_{j=1}^d\Exp\,\big[\,\Exp_{\check{X}_{0,j}}|G_j|^{3}\,\big] 
= \tfrac{M}{d^3}\sum_{j=1}^d \Exp\,|G_j|^{3}.
\end{align*}
From Lemma \ref{lem:CLT}(ii) we have that this latter term is upper-bounded by
$\tfrac{M}{d^3}d\,d^{3/2}\rightarrow 0$.
Now, for the second term on the R.H.S.\@ of \eqref{eq:master_eq} 
we work as follows. We have that: 
\begin{align*}
\Exp_{\check{X}_{0,j}}[\,G_j^{2}\,] = 
\Exp_{\check{X}_{0,j}}\big[\,\big(\,G_j-\Exp_{\check{X}_{0,j}}\,[\,G_j\,]\,\big)^2\,\big] + 
\Exp^{2}_{\check{X}_{0,j}}\,[\,G_{j}\,]\ .
\end{align*}
Lemma \ref{lem:CLT}(ii) gives that $|\,\Exp_{\check{X}_{0,j}}[\,G_{j}\,]\,|\le M$, 
so we have that $\tfrac{1}{d^2}\,\sum_{j=1}^{d}\,\Exp^{2}_{\check{X}_{0,j}}\,[\,G_{j}\,]\rightarrow 0$ in $\mathbb{L}_1$. The result now follows from Lemma \ref{lem:CLT}(iv).
\vspace{0.2cm}\\
\noindent \emph{Proof of (ii)}:\\
We will use again the Taylor expansion (\ref{eq:TTaylor}). Clearly, 
the $\mathbb{L}_1$-norm of the random variable of interest is bounded by: 
\begin{equation*}
2\sum_{j=1}^d \Exp\,\Big[\,\big(\,\Exp_{\check{X}_{0,j}}
\,\big[\,\tfrac{G_j}{d}\,\big]\,\big)^2\,\Big]
+
2\sum_{j=1}^d \Exp\,\Big[\,\Big(\,\big\{\,\Exp_{\check{X}_{0,j}}-\Exp_{\pi}\,\big\}\,
\big[\,\tfrac{1}{2}\big(\,\tfrac{G_j}{d}\,\big)^2e^{\xi_j(d)}\,\big]\,\Big)^2\,\Big]\  .
\end{equation*}
The first term goes to zero from the first result in Lemma \ref{lem:CLT}(ii) and the second from 
the second result in Lemma \ref{lem:CLT}(ii) applied here for $p=4$.
\end{proof}

\begin{lem}\label{lem:ui_exp}
Assume (A\ref{hyp:A}-\ref{hyp:B}) and $g\in\mathcal{B}_b(E)$. Then 
we have that for any $N\geq 1$, $i\in\{1,\dots,N\}$ and $k\in\{1,\dots,m^*+1\}$
and any fixed $c\in\mathbb{R}$:
\begin{equation*}
\sup_d\, 
\Exp\,\big[\,e^{\frac{c}{d}\sum_{j=1}^d G_{k,j}^i}\,\big]< \infty\ .
\end{equation*}
\end{lem}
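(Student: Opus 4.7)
The plan is to exploit the product structure of the Markov kernels (and the fact that no resampling occurs inside the block from $l_d(t_{k-1}(d))$ to $l_d(t_k(d))$) in order to reduce the $d$-dimensional expectation to a product of one-dimensional factors, and then to control each factor via a Taylor expansion and the conditional moment estimates from Lemma \ref{lem:CLT}(ii). Conditioning on $\mathscr{F}_{t_{k-1}(d)}^N$ and using $K_n(x,dx')=\prod_{j=1}^d k_n(x_j,dx_j')$, the co-ordinates $(X_{n,j}^i)_{n,\,j}$ of a fixed particle $i$ evolve as $d$ independent copies of the one-dimensional chain started from $\check{X}_{l_d(t_{k-1}(d)),j}^i$, and hence the random variables $(G_{k,j}^i)_{j=1}^d$ are conditionally independent. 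This yields
\begin{equation*}
\Exp\big[\,e^{\frac{c}{d}\sum_{j=1}^d G_{k,j}^i}\,\big|\,\mathscr{F}_{t_{k-1}(d)}^N\,\big]
=\prod_{j=1}^d \Exp_{\check{X}_{l_d(t_{k-1}(d)),j}^i}\!\big[\,e^{\frac{c}{d}G_{k,j}^i}\,\big]\ .
\end{equation*}

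Next I would apply the elementary inequality $e^x\le 1+x+\tfrac{1}{2}x^2 e^{|x|}$ with $x=cG_{k,j}^i/d$. Because $g\in\mathcal{B}_b(E)$ and the block length is $\mathcal{O}(d)$, we have $|cG_{k,j}^i/d|\le M|c|$ deterministically, so $e^{|x|}$ is uniformly bounded. Taking $\Exp_{\check{X}_{l_d(t_{k-1}(d)),j}^i}$ and invoking Lemma \ref{lem:CLT}(ii) --- which gives $|\Exp_{\check{X}}[G_{k,j}^i]|\le M$ and $\Exp_{\check{X}}[(G_{k,j}^i)^2]\le Md$ --- each factor is bounded above by
\begin{equation*}
1+\tfrac{c}{d}\Exp_{\check{X}_{l_d(t_{k-1}(d)),j}^i}[G_{k,j}^i]
+\tfrac{c^2 e^{|c|M}}{2d^2}\Exp_{\check{X}_{l_d(t_{k-1}(d)),j}^i}[(G_{k,j}^i)^2]
\le 1+\tfrac{C(c)}{d}\ ,
\end{equation*}
with $C(c)=|c|M+\tfrac{1}{2}c^2e^{|c|M}M$, i.e.\@ depending only on $c$ and on the constants from Lemma \ref{lem:CLT}(ii). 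Taking the product across the $d$ co-ordinates gives the pathwise bound
\begin{equation*}
\prod_{j=1}^d \Exp_{\check{X}_{l_d(t_{k-1}(d)),j}^i}\!\big[\,e^{\frac{c}{d}G_{k,j}^i}\,\big]
\le \Big(1+\tfrac{C(c)}{d}\Big)^d \le e^{C(c)}\ ,
\end{equation*}
and taking outer expectation closes the argument: $\sup_d \Exp[e^{(c/d)\sum_j G_{k,j}^i}]\le e^{C(c)}<\infty$.

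The main (and rather mild) obstacle is the bookkeeping of signs and constants in the Taylor step: the linear term $\tfrac{c}{d}\Exp_{\check{X}}[G_{k,j}^i]$ is not in general centered under the actual particle law (it is centered only under the reference i.i.d.\@ dynamics $\pi_{t_{k-1}(d)}^{\otimes Nd}$), so one cannot drop it. The saving feature is that Lemma \ref{lem:CLT}(ii) bounds it by a constant, so when summed and exponentiated it contributes only $O(1)$ to the exponent; all of the dimensional cancellation is absorbed by the prefactor $1/d$ in the exponent $cG_{k,j}^i/d$, with the correct scaling $\Exp_{\check{X}}[(G_{k,j}^i)^2]=\mathcal{O}(d)$ exactly matching the $1/d^2$ in the quadratic Taylor term.
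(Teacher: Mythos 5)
Your proposal is correct and follows essentially the same route as the paper: factorize the conditional expectation over the $d$ co-ordinates using the product structure of the kernels, apply a second-order Taylor/elementary exponential bound to each factor (valid because $|G_{k,j}^i/d|$ is deterministically bounded when $g\in\mathcal{B}_b(E)$), control the linear and quadratic terms by the moment bounds of Lemma \ref{lem:CLT}(ii), and conclude with $(1+C/d)^d\le e^{C}$. Your remark that the linear term is not centred under the actual particle law and must be bounded by $M/d$ rather than dropped is exactly the point the paper's proof also relies on.
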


\begin{proof}
To simplify the notation we rewrite the quantity of interest as
\begin{equation*}
\Exp\,\big[\,e^{\frac{c}{d}\sum_{j=1}^d G_{j}}\,\big] \equiv \Exp\,\Big[
\prod_{j=1}^{d}\Exp_{\check{X}_{0,j}}\,\big[\,e^{\frac{c}{d} G_{j}}\,\big]\ .
\,\Big]
\end{equation*}
Applying a second order Taylor expansion for $e^{\frac{c}{d} G_{j}}$ yields that the above is equal to: 
\begin{equation*}
\Exp\,\Big[\,\prod_{j=1}^d\big(\,1+c\,\Exp_{\check{X}_{0,j}}\big[\,\tfrac{G_{j}}{d}\,\big] + \tfrac{c^2}{2}\,\Exp_{\check{X}_{0,j}}\big[\,\big(\,e^{\xi_j(d)}\tfrac{G_{j}}{d}\,\big)^2\,\big]\,\big)
\,\Big]
\end{equation*}
with $\xi_j(d)\in[\,0\wedge \tfrac{c\,G_j}{d},0\vee \tfrac{c\,G_j}{d}\,]$.
Using the fact that $|G_{j}/d|$ is upper bounded by a constant, 
from Lemma \ref{lem:CLT}(ii) we have:
\begin{equation*}
\big|\,c\,\Exp_{\check{X}_{0,j}}\,\big[\,\tfrac{G_{j}}{d}\,\big]\,\big|\leq |c|\,\tfrac{M}{d}\ ; 
\quad \tfrac{c^2}{2}\,\Exp_{\check{X}_{0,j}}\big[\,\big(\,e^{\xi_j(d)}\tfrac{G_{j}}{d}\,\big)^2\,\big]
\leq c^2\,\tfrac{M}{d} \ .
\end{equation*}
Hence, we have that:
$$
\Exp\,\big[\,e^{\frac{c}{d}\sum_{j=1}^d G_{j}}\,\big] \leq \big(1+\tfrac{M}{d}\big)^d
$$
with the latter upper bound converging by standard results in analysis.
\end{proof}

\subsection{Propagation of Chaos}\label{app:A}

\begin{proof}[Proof of Proposition \ref{theo:prop_chaos}]

For simplicity, consider the first $q$ of $N$ particles and $j=1$. 
Then, for a function $F:E^q\rightarrow[0,1]$ we have, using the notation $X_{s(d),1}^{1:q} = (X_{s(d),1}^1,\dots,X_{s(d),1}^q)$:
%
\begin{align}
|\,\Exp\,[\,F(X_{s(d),1}^{1:q})\,]
&-\pi_{s}^{\otimes q}(F)\,| 
 \leq |\,\Exp\,[\,F(X_{s(d),1}^{1:q})\,]-
\Exp_{\pi^{\otimes N}_{t_{k-1}(d)}}[\,F(X_{s(d),1}^{1:q})\,]\,|
\nonumber
\\ & + |\,\Exp_{\pi^{\otimes N}_{t_{k-1}(d)}}[F(X_{s(d),1}^{1:q})]-\pi_{s(d)}^{\otimes q}(F)\,|
 + |\,\pi_{s(d)}^{\otimes q}(F)-\pi_{s}^{\otimes q}(F)\,|\ .
\label{eq:decomp}
\end{align}
The last term on the R.H.S.~goes to zero via the bounded convergence theorem (this follows directly from 
having assumed that $g$ is upper bounded), so we consider the first two terms.
For the first term on the R.H.S.\@ of (\ref{eq:decomp}) one can use conditional expectations and write it as:
\begin{equation*}
\Exp\,\Big[\,\Exp\,[\,F(X_{s(d),1}^{1:q})\mid\mathscr{F}_{t_{k-1}(d)}^N\,] - 
\Exp_{\pi^{\otimes N}_{t_{k-1}(d)}}[\,F(X_{s(d),1}^{1:q})\,]\,\Big] 
\end{equation*}
where $\mathscr{F}_{t_{k-1}(d)}^N$ is the filtration generated by the particle system up to (and including) the $(n-1)^{th}$ resampling time.
The quantity inside the expectation can be equivalently written as:
%
\begin{equation}
\label{eq:dd}
\big\{ k_{u(d)}^{\otimes q}(\check{X}_{l_d(t_{k-1}(d)),1}^{1:q},\,\cdot\,) - 
\big(\pi_{t_{k-1}(d)}k_{u(d)}\big)^{\otimes q}\big\}(F) 
\end{equation}
where we set $u(d)=(l_d(t_{k-1}(d))+1):l_d(s(d))$. 
For $1\le l\le q$ we define the probability measures:
\begin{equation*}
\mu_l = \mu_{l}(d{y_{1:(l-1)}},dy_{(l+1):q}) = \big(\pi_{t_{k-1}(d)}k_{u(d)}\big)^{\otimes (l-1)}\otimes
k_{u(d)}^{\otimes (q-l)}(\check{X}_{l_d(t_{k-1}(d)),1}^{(l+1):q},\,\cdot\,) \ .
\end{equation*}
Notice the simple identity (since intermediate terms in the sum below will cancel out):
\begin{align}
\big\{k_{u(d)}^{\otimes q}&(\check{X}_{l_d(t_{k-1}(d)),1}^{1:q},\,\cdot\,) - 
\big(\pi_{t_{k-1}(d)}k_{u(d)}\big)^{\otimes q}\big\}(dy_{1:q})  \label{eq:identity} \\ &= \sum_{l=1}^{q}
\big( k_{u(d)}(\check{X}_{l_d(t_{k-1}(d)),1}^{l},\cdot ) - 
\pi_{t_{k-1}(d)}k_{u(d)}\big)(dy_l)\otimes \mu_l(d{y_{1:(l-1)}},dy_{(l+1):q})\ .
\nonumber
\end{align}
Since $|F|\le 1$, we have  
 $|\int \mu_l(d{y_{1:(l-1)}},dy_{(l+1):q})F(y_{1:q})|\le 1$ for any $y_l$. 
%
Given this property, 
using the identity (\ref{eq:identity})  we have that the expression in (\ref{eq:dd}) is bounded
in absolute value by:
%
%
\begin{align*}
\bigl|\,\sum_{l=1}^q \int_{\mathbb{R}}\{k_{u(d)}(\check{X}_{l_d(t_{k-1}(d)),1}^{l},\cdot\,) &-\pi_{t_{k-1}(d)}k_{u(d)}\}(dy_l)
\bigg\{ 
\frac{ \int \mu_l(d{y_{1:(l-1)}},dy_{(l+1):q})F(y_{1:q})}
{\sup_{y_l\in \mathbb{R}}|\int \mu_l(d{y_{1:(l-1)}},dy_{(l+1):q})F(y_{1:q})|
}
\bigg\}\,\bigr| \\ 
&\le
 \sum_{l=1}^q\|k_{u(d)}(\check{X}_{l_d(t_{k-1}(d)),1}^{l})-\pi_{t_{k-1}(d)}k_{u(d)}\|_{tv}\ .
\end{align*}
The above total variation bound converges to zero in $\mathbb{L}_1$ as $d\rightarrow\infty$ by Lemma 
\ref{lem:growth}(i), so also the first term on  the R.H.S.\@ of \eqref{eq:decomp} goes 
to zero as $d\rightarrow\infty$.
The second term on  the R.H.S.\@ of~\eqref{eq:decomp} can be treated in a similar manner. One has again the 
identity:
\begin{align*}
\Exp_{\pi^{\otimes N}_{t_{k-1}(d)}}\,&[\,       F(X_{s(d),1}^{1:q})\,]-\pi_{s(d)}^{\otimes q}(F) = \\
& =  \sum_{l=1}^q \int_{\mathbb{R}}\{\,\pi_{t_{k-1}(d)}k_{u(d)}-\pi_{s(d)}\,\}(dx_l)\times
\Big\{ \pi_{s(d)}^{\otimes (l-1)}\otimes
\big(\pi_{t_{k-1}(d)}k_{u(d)}\big)^{\otimes (q-l)} (F(x_l))
\Big\} \\
& \le q\,\|\pi_{t_{k-1}(d)}k_{u(d)}-\pi_{s(d)}\|_{tv}\ .
\end{align*}
This last bound 
which will go to zero by Lemma 
\ref{lem:growth}(i). Hence we conclude.
\end{proof}

\end{document}